\tikzstyle{vertex}=[auto=left,circle,draw=black,fill=white, inner sep=1.5]
\newtheorem{theorem}{Theorem}[section]
\newtheorem{lema}[theorem]{Lemma}
\title{Perfect state transfer on Cayley graphs over a non-abelian group of order $8n$}
\author{ Akash Kalita and Bikash Bhattacharjya\\
Department of Mathematics\\
Indian Institute of Technology Guwahati, India\\
akash.kalita@iitg.ac.in, b.bikash@iitg.ac.in }
\date{}
\begin{document}
\raggedbottom
\maketitle

\vspace{-0.3in}

\begin{center}{\textbf{Abstract}}\end{center}
The \textit{transition matrix} of a graph $\Gamma$ with adjacency matrix $A$ is defined by $H(\tau ) := \exp(-\mathbf{i}\tau A)$, where $\tau  \in \mathbb{R}$ and $\mathbf{i} = \sqrt{-1}$. The graph $\Gamma$ exhibits \textit{perfect state transfer} (PST) between the vertices $u$ and $v$ if there exists $\tau_0(>0)\in \mathbb{R}$ such that $\lvert H(\tau_0)_{uv} \rvert = 1$. For a positive integer $n$, the group $V_{8n}$ is defined as $V_{8n} := \langle a,b \colon a^{2n} = b^{4} = 1, ba = a^{-1}b^{-1}, b^{-1}a = a^{-1}b \rangle$. In this paper, we study the existence of perfect state transfer on Cayley graphs $\text{Cay}(V_{8n}, S)$. We present some necessary and sufficient conditions for the existence of perfect state transfer on $\text{Cay}(V_{8n}, S)$.

\vspace*{0.3cm}
\noindent 
\textbf{Keywords:} Perfect state transfer, Cayley graph, Eigenvalues of a graph\\
\textbf{Mathematics Subject Classifications:} 05C25, 81P45, 81Q35 

\section{Introduction}

The concept of quantum walks on graphs was introduced by Farhi and Gutmann~\cite{Farhi} in the year 1998. Quantum walks on finite graphs provide useful  models for quantum transport phenomena discovered by Bose~\cite{Bose} in 2003. Quantum walks are important tools in quantum computation and information theory, and it can be used to describe the fidelity of information transfer in a network of interacting qubits. Christandl et al.~\cite{Christandl} proposed a class of qubit networks that exhibit perfect state transfer. Perfect state transfer has great importance due to its applications in quantum information processing, quantum communication networks and cryptography.

Let $\Gamma$ be a finite simple connected graph with adjacency matrix $A$. The \textit{transition matrix} of $\Gamma$ is defined by
\begin{align*}
H(\tau ) :=  \exp(-\mathbf{i}\tau A) = \sum_{s=0}^{\infty} \frac{(-\mathbf{i}\tau A)^s}{s!}, \text{ where } \tau  \in \mathbb{R} \text{ and } \mathbf{i} = \sqrt{-1}. 
\end{align*}

A graph $\Gamma$ exhibits \textit{perfect state transfer} (PST) between the vertices $u$ and $v$ at time $\tau_0(>0)\in \mathbb{R}$ if  the $uv$-th entry of $H(\tau_0)$ has absolute value one. We say that $\Gamma$ exhibits \textit{periodicity} at the vertex $u$ at time $\tau_0$ if the $uu$-th entry of $H(\tau_0)$ has absolute value one. The graph $\Gamma$ is considered to exhibit \textit{periodicity}
if it exhibits the property of periodicity across all its vertices simultaneously at the same time. 

Let $G$ be a finite group. Let  $S$ be a symmetric subset of $G \setminus\{1\}$, that is, $S  = \{s^{-1}: s\in S\}$. The \textit{Cayley} graph of $G$ with respect to $S$, denoted $\text{Cay}(G, S)$, is a graph whose vertices are the elements of $G$ and there exists an edge between distinct vertices $g,h\in G$ if $gh^{-1}\in S$. The set $S$ is called the \textit{connection} set of $\text{Cay}(G, S)$. If $Sg = gS$ for all $g \in G$, then  $\text{Cay}(G, S)$ is called a \textit{normal} Cayley graph. Since $S$ is symmetric, $\text{Cay}(G, S)$ is a simple graph. We assume $G = \langle S\rangle $ to ensure that $\text{Cay}(G, S)$ is connected. The adjacency matrix of $\text{Cay}(G, S)$ is  $A  := [a_{gh}]_{g,h\in G}$, where
\[a_{gh} = \left\{ \begin{array}{cr}
     1 & \mbox{ if }
     gh^{-1} \in S \\
     0 &\textnormal{ otherwise. }
\end{array}\right.\]

Perfect state transfer has been studied for several families of graphs in the last two decades. Angeles-Canul et al.~\cite{Canul} investigated PST in integral circulant graphs and the join of graphs. Coutinho and Godsil~\cite{Coutinho 1} explored PST in products and covers of graphs. Pal and Bhattacharjya~\cite{Pal} studied PST on NEPS of the path on three vertices. Godsil~\cite{Godsil 2} offered a survey on PST and related questions up to 2011. He~\cite{Godsil 1} explained the close relationship between the existence of PST on certain graphs and association schemes. Notably, in~\cite{Godsil 3}  Godsil presented a complete characterization of PST on simple connected graphs in terms of its eigenvalues.  

Cayley graphs are good candidates for exhibiting PST due to their nice algebraic structure. Basic et al.~\cite{Basic 3, Basic 2, Basic 1, Basic 4} , Cheung and Godsil~\cite{Cheung} and Bernasconi et al.~\cite{Bernasconi} presented some criterion on circulant graphs and cubelike graphs exhibiting PST. Tan et al.~\cite{Tan} presented a characterization on connected abelian Cayley graphs exhibiting PST. They showed that many of the previous results on periodicity and existence of PST in circulant graphs and cubelike graphs can be derived in unified and more simple ways. However, relatively little information are known about Cayley graphs over non-abelian groups exhibiting PST. Cao and Feng~\cite{Cao 1} investigated PST on Cayley graphs over dihedral groups. Subsequently, Arezoomand et al.~\cite{Arezoomand} and Luo et al.~\cite{Cao 2} explored PST on Cayley graphs over dicyclic groups and semi-dihedral groups, respectively. Recently, Khalilipour and Ghorbani~\cite{Khalilipour} studied PST on Cayley graphs over the group $U_{6n}$, where $U_{6n} := \langle a, b \colon a^{2n} = b^{3} = 1, a^{-1}ba = b^{-1} \rangle$ and $n$ is a positive integer.

The group $V_{8n}$ is defined by $V_{8n} := \langle a,b \colon a^{2n} = b^{4} = 1, ba = a^{-1}b^{-1}, b^{-1}a = a^{-1}b \rangle$, where $n$ is a positive integer. It is clear that $V_{8n}$ is a non-abelian group of order $8n$. Cheng et al.~\cite{Tao Cheng} studied integral Cayley graphs over $V_{8n}$ for odd $n$. They gave some necessary and sufficient conditions for the integrality of $\text{Cay}(V_{8p}, S)$, where $p$ is an odd prime. In this paper, we consider the existence of PST on Cayley graphs over $V_{8n}$. Using the irreducible representations of $V_{8n}$, some necessary and  sufficient conditions for a normal $\text{Cay}(V_{8n}, S)$ exhibiting PST are obtained. 

The rest of the paper is organized as follows. In Section~\ref{sec2}, we give the description of the irreducible representations of  $V_{8n}$ and eigenvectors of normal $\text{Cay}(V_{8n}, S)$. The existence of PST on normal $\text{Cay}(V_{8n}, S)$ is explored in Section~\ref{sec4}.

\section{Representations of $V_{8n}$ and the eigenvectors of $\text{Cay}(V_{8n}, S)$}\label{sec2}
A \textit{representation} of a finite group $G$ is a homomorphism $\theta \colon G \rightarrow \mathrm{GL}(U)$, where $\mathrm{GL}(U)$ is the group of all  automorphisms of a finite-dimensional and non-zero complex vector space $U$. The dimension of $U$ is called the \textit{degree} of $\theta$. Two representations $\theta$ and $\psi$ of $G$ on $U$ and $W$, respectively, are \textit{equivalent} if there is an isomorphism
$T \colon U \rightarrow W$ such that $\psi(g) = T\theta(g)T^{-1}$ for all $g\in G$. 

Let $\theta \colon G \rightarrow \mathrm{GL}(U)$ be a representation of a group $G$. The \textit{character} $\chi_\theta \colon G \rightarrow \mathbb{C}$
of $\theta$ is defined by setting $\chi_\theta(g) = \mathrm{Tr}(\theta(g))$ for all $g \in G$, where $\mathrm{Tr}(\theta(g))$ is the trace of $\theta(g)$. A subspace $W$ of $U$ is said to be $G$-\textit{invariant} if $\theta(g)w\in W$ for all $g\in G$ and $w\in W$. Obviously, $\{\textbf{0}\}$ and $U$ are $G$-invariant subspaces of $U$, called the \textit{trivial} subspaces. If $U$ has no non-trivial $G$-invariant subspaces, then $\theta$ is called an \textit{irreducible representation} and $\chi_\theta$ an \textit{irreducible character} of $G$. 
 
Note that the group $V_{8n}$ can be written as $V_{8n} = \{a^{r},a^{r}b,a^{r}b^{2},a^{r}b^{3} \colon 0\leq r\leq {2n-1}\}$. For odd values of $n$, the group $V_{8n}$ has the following $2n+3$ conjugacy classes.
\begin{align*}
&\{1\},~\{b^{2}\},~\{a^{j}b^{k} \colon  j ~\mathrm{even},~ k = 1~ \mathrm{or}~k = 3 \},~\{a^{j} b^{k} \colon j ~ \mathrm{odd},~ k = 1  ~ \mathrm{or}  ~ k = 3\},\\
&\{a^{2r+1},a^{-2r-1}b^{2}\}~\text{ for }~r \in \{0, \ldots, {n - 1}\}~\text{ and}~\{a^{2s},a^{-2s}\},\{a^{2s}b^{2},a^{-2s}b^{2}\}~\text{ for } ~s \in \{1, \ldots, {(n-1)/2}\}.
\end{align*}
For even values of $n$, the group $V_{8n}$ has the following $2n+6$ conjugacy classes.
\begin{align*}
&\{1\}, ~\{b^2\}, ~\{a^n\}, ~\{a^nb^2\},~\{a^{2k}b^{(-1)^k} \colon 0 \leq k \leq {n - 1}\},~
\{a^{2k}b^{(-1)^{k + 1}} \colon 0 \leq k \leq {n - 1}\},\\
&\{a^{2k + 1}b^{(-1)^k} \colon 0 \leq k \leq {n - 1}\},~
\{a^{2k + 1}b^{(-1)^{k + 1}} \colon 0 \leq k \leq {n - 1}\},\\
&\{a^{2r+1},a^{-{(2r + 1)}}b^{2}\}~\text{ for }~r \in \{0,\ldots,{n - 1}\}\text{ and }
\{a^{2s}, a^{-2s}\}, \{a^{2s}b^2, a^{-2s}b^2\}~\text{ for }~s \in \{1,\ldots,{n/2 - 1}\}.
\end{align*}

The conjugacy classes of $V_{8n}$ are used to write its character table. Throughout the paper, we consider $\omega =\exp(\frac{\pi \textbf{i}}{n})$, a primitive $2n$-th root of unity. The following lemma presents the irreducible representations and the irreducible characters of $V_{8n}$.
\begin{lema}\label{Lemma2.1}
\cite{Darafsheh,James 2001} 
Let $n$ be a positive integer. Then
\begin{enumerate}
\item[(i)] The irreducible representations of $V_{8n}$ are listed in Table 1 for $n$ odd and in Table 2 for $n$ even.
\item[(ii)] The character table of $V_{8n}$ is listed in Table 3 for $n$ odd, in Table 4 for $n \equiv 0
~(mod~ 4)$ and in Table 5 for $n \equiv 2~(mod ~4)$. 
\end{enumerate}
\end{lema}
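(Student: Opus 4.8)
The statement is the classical determination of the irreducible representations of $V_{8n}$, so rather than merely invoking the cited sources I would recover Tables 1--5 directly from the defining relations. The first step is to record the structural facts that drive everything. From $b^{-1}ab = a^{-1}b^{2}$ one checks that $b^{2}$ is central and that $H := \langle a, b^{2}\rangle \cong \mathbb{Z}_{2n}\times\mathbb{Z}_{2}$ is an abelian normal subgroup of index $2$, with transversal $\{1,b\}$; the conjugation action of $b$ on $H$ is $a\mapsto a^{-1}b^{2}$, $b^{2}\mapsto b^{2}$. This index-$2$ abelian normal subgroup is what makes the representation theory elementary: every irreducible representation has degree $1$ or $2$, the degree-$2$ ones all arising by induction from $H$.

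Next I would produce the linear characters via the abelianization. Abelianizing the relations forces $\bar a^{2}=\bar b^{2}$ and $\bar a^{4}=1$, so $V_{8n}^{\mathrm{ab}}\cong\mathbb{Z}_{2}\times\mathbb{Z}_{2}$ when $n$ is odd and $\cong\mathbb{Z}_{4}\times\mathbb{Z}_{2}$ when $n$ is even. This yields exactly $4$ (resp.\ $8$) one-dimensional representations, written explicitly by prescribing $\chi(a)$ and $\chi(b)$ among the allowed roots of unity; these are the linear rows of Tables 1--2.

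For the two-dimensional representations I would induce the linear characters $\psi_{j,\epsilon}$ of $H$ given by $\psi_{j,\epsilon}(a)=\omega^{j}$ and $\psi_{j,\epsilon}(b^{2})=(-1)^{\epsilon}$. The conjugate character satisfies $\psi_{j,\epsilon}^{\,b}(a)=(-1)^{\epsilon}\omega^{-j}$, so by the standard irreducibility criterion $\mathrm{Ind}_{H}^{V_{8n}}\psi_{j,\epsilon}$ is irreducible precisely when $\psi_{j,\epsilon}\neq\psi_{j,\epsilon}^{\,b}$, i.e.\ when $\omega^{2j}\neq(-1)^{\epsilon}$. Writing the induced matrices in the basis indexed by $\{1,b\}$ gives
\begin{align*}
\rho_{j,\epsilon}(a)=\begin{pmatrix}\omega^{j} & 0\\ 0 & (-1)^{\epsilon}\omega^{-j}\end{pmatrix},\qquad
\rho_{j,\epsilon}(b)=\begin{pmatrix}0 & (-1)^{\epsilon}\\ 1 & 0\end{pmatrix},
\end{align*}
and I would verify directly that these satisfy $\rho(a)^{2n}=\rho(b)^{4}=I$ and $\rho(b)\rho(a)=\rho(a)^{-1}\rho(b)^{-1}$. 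Pairing each $\psi_{j,\epsilon}$ with its conjugate shows that the inequivalent irreducible inductions are parametrized by the $\{\psi,\psi^{b}\}$-orbits, of which there are $2n-1$ when $n$ is odd and $2n-2$ when $n$ is even; taking traces, e.g.\ $\mathrm{Tr}\,\rho_{j,\epsilon}(a^{r})=\omega^{jr}+(-1)^{\epsilon r}\omega^{-jr}$, gives the remaining rows of the tables.

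Finally I would confirm completeness and settle the mod-$4$ split. The dimension identities $4\cdot1^{2}+(2n-1)\cdot2^{2}=8n$ and $8\cdot1^{2}+(2n-2)\cdot2^{2}=8n$ equal $|V_{8n}|$, and the resulting counts $2n+3$ and $2n+6$ of inequivalent irreducibles match the numbers of conjugacy classes listed above, so the list is complete. The main difficulty is organizational rather than conceptual: one must pin down a representative of each conjugacy class and evaluate every character on it. The separation of the even case into $n\equiv0$ and $n\equiv2\pmod 4$ surfaces precisely here, since the order-$4$ linear characters (those with $\chi(a)=\pm\mathbf i$) and the degree-$2$ characters take values such as $\mathbf i^{\,n}$ and $\omega^{jn}$ on the singleton classes $\{a^{n}\},\{a^{n}b^{2}\}$ and on the mixed classes $\{a^{2k}b^{\pm1}\}$, and these evaluate differently according to $n\bmod 4$. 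Carrying out this bookkeeping carefully reproduces Tables 1--5.
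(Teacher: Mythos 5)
Your proposal is correct in substance, but it is worth noting that the paper itself gives \emph{no} proof of this lemma: the statement is simply quoted from the cited sources (Darafsheh--Poursalavati and James--Liebeck), which is why it carries citations rather than a proof environment. What you have written is essentially the standard derivation that those sources carry out, reconstructed from scratch: linear characters from the abelianization ($\mathbb{Z}_2\times\mathbb{Z}_2$ for $n$ odd, $\mathbb{Z}_4\times\mathbb{Z}_2$ for $n$ even, giving the $4$ resp.\ $8$ linear rows), two-dimensional representations by inducing the characters $\psi_{j,\epsilon}$ of the index-$2$ abelian normal subgroup $H=\langle a,b^2\rangle$, irreducibility via the conjugate-character criterion $\psi\neq\psi^{b}$, inequivalence via the orbit pairing, and completeness from $4\cdot 1^2+(2n-1)\cdot 2^2=8n$ and $8\cdot 1^2+(2n-2)\cdot 2^2=8n$. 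I checked your orbit counts ($n$ orbits with $\psi(b^2)=-1$ plus $n-1$ with $\psi(b^2)=1$ for $n$ odd; $n-1$ plus $n-1$ for $n$ even), your induced matrices, and their equivalence with the matrices in Tables 1--2 (e.g.\ the paper's $\phi_k$ for even $n$ is your $\rho_{j,1}$ with $j=k+n/2$, using $\mathbf{i}=\omega^{n/2}$); all of this is right, and your diagnosis that the $n\bmod 4$ split in Tables 4--5 comes only from evaluating $\mathbf{i}^{\,n}$ and $\omega^{jn}$ on the classes $\{a^n\}$, $\{a^nb^2\}$ is exactly the right explanation. Two small points to tighten: centrality of $b^2$ does not follow from the single consequence $b^{-1}ab=a^{-1}b^2$ (that computation is circular); you need both defining relations, e.g.\ $ab^2=(ab)b=(b^{-1}a^{-1})b=b^{-1}(b^{-1}a)=b^2a$. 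And the final matching of traces against every entry of Tables 3--5 is waved at as ``bookkeeping''; that is acceptable for an outline, but it is where the actual content of the cited tables lives. What your approach buys over the paper's citation is self-containedness and an explanation of \emph{why} the case distinctions in the tables occur; what the citation buys is brevity, since the result is classical.
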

The following lemma determines the eigenvalues and the corresponding eigenvectors of the adjacency matrix of a normal Cayley graph.

\begin{lema}\label{Lemma2.2}\cite{Steinberg 2012} 
Let $G = \{g_1,\ldots,g_n\}$ be a finite group and $\phi^{(1)},\ldots,{\phi}^{(m)}$ be  a complete set of unitary representatives of the equivalence classes of irreducible representations of $G$. Let $\chi_k$ and $d_k$ be the character and the degree of ${\phi}^{(k)}$, respectively for $1 \leq k \leq m$. Let $S$ be a conjugation-closed symmetric subset of $G$. Then the eigenvalues of $\mathrm{Cay}(G,S)$ are ${\lambda}_1,\ldots,{\lambda}_m$, where
\begin{align*}
   \lambda_k =  \frac{1}{d_k}\sum_{s\in S}{\chi_k(s)} ~\text{for}~  1\leq k \leq m,
\end{align*}
and $\lambda_k$ has multiplicity $d_k^{2}$ for each $k \in \{1,\ldots,m\}$. Moreover, the vectors
\begin{align*}
 \textbf{v}_{ij}^{(k)} = \sqrt{\frac{d_k}{|G|}} \left[{\phi}_{ij}^{(k)}(g_1),\ldots,{\phi}_{ij}^{(k)}(g_n)\right]^t \text{ for }
1 \leq i,j \leq d_k                
\end{align*}
form an orthonormal basis for the eigenspace corresponding to the eigenvalue $\lambda_k$ for each $k \in \{1,\ldots,m\}$.
\end{lema}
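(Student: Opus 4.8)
The plan is to realize the adjacency matrix $A$ of $\mathrm{Cay}(G,S)$ as the operator of left convolution by the group-algebra element $\sigma=\sum_{s\in S}s$ on $\mathbb{C}^{G}$, and then to diagonalize it using the finite-group Peter--Weyl decomposition together with Schur's lemma. Concretely, indexing coordinates of $\mathbb{C}^{G}$ by the elements $g_1,\dots,g_n$, the $(g,h)$-entry of $A$ equals $1$ exactly when $gh^{-1}\in S$, so for any vector $\mathbf{x}$ the $g$-th entry of $A\mathbf{x}$ is $\sum_{s\in S}x_{s^{-1}g}$ after the substitution $h=s^{-1}g$. This is the only place the combinatorial definition of the Cayley graph enters; everything afterward is representation theory.

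First I would establish orthonormality and completeness of the family $\{\mathbf{v}_{ij}^{(k)}\}$ directly from the Schur orthogonality relations. Writing the representations in unitary form, these relations read $\sum_{g\in G}\phi_{ij}^{(k)}(g)\overline{\phi_{i'j'}^{(k')}(g)}=\frac{|G|}{d_k}\,\delta_{kk'}\delta_{ii'}\delta_{jj'}$, which upon inserting the normalizing factor $\sqrt{d_k/|G|}$ says precisely that the vectors $\mathbf{v}_{ij}^{(k)}$ are pairwise orthonormal. Since there are $\sum_{k}d_k^{2}=|G|$ of them, they already form an orthonormal basis of $\mathbb{C}^{G}$, and this step requires no new work.

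Next I would verify that each $\mathbf{v}_{ij}^{(k)}$ is an eigenvector of $A$. Using the computation above and the homomorphism property $\phi_{ij}^{(k)}(s^{-1}g)=\sum_{\ell}\phi_{i\ell}^{(k)}(s^{-1})\phi_{\ell j}^{(k)}(g)$, the $g$-th entry of $A\mathbf{v}_{ij}^{(k)}$ equals $\sqrt{d_k/|G|}\sum_{\ell}M^{(k)}_{i\ell}\phi_{\ell j}^{(k)}(g)$, where $M^{(k)}:=\sum_{s\in S}\phi^{(k)}(s^{-1})=\sum_{s\in S}\phi^{(k)}(s)$, the last equality using that $S$ is symmetric. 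The heart of the argument is that $S$ is conjugation-closed: for every $g\in G$ one has $\phi^{(k)}(g)M^{(k)}\phi^{(k)}(g)^{-1}=\sum_{s\in S}\phi^{(k)}(gsg^{-1})=M^{(k)}$ because $gSg^{-1}=S$, so $M^{(k)}$ commutes with the whole image of the irreducible $\phi^{(k)}$. By Schur's lemma $M^{(k)}=\lambda_k I$ for a scalar $\lambda_k$, and taking the trace gives $\lambda_k d_k=\sum_{s\in S}\chi_k(s)$, i.e. $\lambda_k=\frac{1}{d_k}\sum_{s\in S}\chi_k(s)$. Substituting $M^{(k)}_{i\ell}=\lambda_k\delta_{i\ell}$ back collapses the inner sum to $\lambda_k(\mathbf{v}_{ij}^{(k)})_g$, proving $A\mathbf{v}_{ij}^{(k)}=\lambda_k\mathbf{v}_{ij}^{(k)}$ for all admissible $i,j$.

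Finally, the multiplicity statement follows by counting: the $d_k^{2}$ vectors $\mathbf{v}_{ij}^{(k)}$ with $1\le i,j\le d_k$ all share the eigenvalue $\lambda_k$, and the whole family is an orthonormal basis, so the eigenspace attached to $\phi^{(k)}$ has dimension $d_k^{2}$. The one subtlety I expect to flag is that the scalars $\lambda_1,\dots,\lambda_m$ need not be distinct; when several $\lambda_k$ coincide the corresponding blocks must be pooled to describe the genuine eigenspace, but since $A$ is real symmetric the orthonormal family still diagonalizes it and the total multiplicity is the sum of the relevant $d_k^{2}$. The main obstacle is therefore conceptual rather than computational: pinning down that conjugation-closedness of $S$ is exactly the hypothesis that lets Schur's lemma force $M^{(k)}$ to be a scalar matrix. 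The index bookkeeping in the third step is routine once the substitution $h=s^{-1}g$ has been made.
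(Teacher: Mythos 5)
Your proof is correct and complete: the paper itself gives no proof of this lemma (it is quoted from the reference [Steinberg 2012]), and your argument --- Schur orthogonality for the orthonormal basis, plus conjugation-invariance of $S$ forcing $\sum_{s\in S}\phi^{(k)}(s)$ to be scalar via Schur's lemma, with the trace computation identifying $\lambda_k$ --- is essentially the standard proof found in that cited source. Your caveat about pooling eigenspaces when distinct $k$ yield equal $\lambda_k$ is also the right reading of the lemma's (slightly loose) multiplicity statement.
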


\begin{table}
\caption{Complete set of irreducible unitary inequivalent representations of $V_{8n}$ for $n$ odd}
\centering
\begin{tabular}{|c| c| c|}
\hline
& $a$ & $b$\\
\hline
${\theta}_1$ & $1$ & $1$\\
\hline
${\theta}_2$ & $1$ & $-1$\\
\hline
${\theta}_3$ & $-1$ & $1$\\
\hline
${\theta}_4$ & $-1$ & $-1$\\
\hline
${\psi}_j$ $(0 \leq j \leq n-1)$

& $ \begin{pmatrix}
     {\omega}^{2j} & 0 \\
     0 & -{\omega}^{-2j}
    \end{pmatrix}$

& $ \begin{pmatrix}0&1\\-1&0\end{pmatrix}$\\\hline
${\phi}_k$ $(1 \leq k \leq n-1)$
& $ \begin{pmatrix}
     {\omega}^{k} & 0 \\
     0 & {\omega}^{-k}
\end{pmatrix}$              &     $ \begin{pmatrix}
                                   0 & 1 \\
                                   1 & 0
                                   \end{pmatrix}$  \\

\hline

\end{tabular}
\end{table}
\begin{table}
\caption{Complete set of irreducible unitary inequivalent representations of $V_{8n}$ for $n$ even}
\centering
\begin{tabular}{|c| c| c|}
\hline
& $a$ & $b$\\
\hline
${\theta}_1$ & $1$ & $1$\\
\hline
${\theta}_2$ & $\textbf{i}$ & $-\textbf{i}$\\
\hline
${\theta}_3$ & $-1$ & $-1$\\
\hline
${\theta}_4$ & $-\textbf{i}$ & $\textbf{i}$\\
\hline
${\theta}_5$ & $1$ & $-1$\\
\hline
${\theta}_6$ & $\textbf{i}$ & $\textbf{i}$\\
\hline
${\theta}_7$ & $-1$ & $1$\\
\hline
${\theta}_8$ & $-\textbf{i}$ & $-\textbf{i}$\\
\hline
${\psi}_j$ $(1 \leq j \leq n-1)$

& $ \begin{pmatrix}
     \omega^{j} & 0 \\
     0 & \omega^{-j}
    \end{pmatrix}$

& $ \begin{pmatrix}0 & \textbf{i}\\-\textbf{i}&0\end{pmatrix}$\\\hline
$\phi_k$ $(1 \leq k \leq n-1)$
& $ \begin{pmatrix}
     \textbf{i}\omega^{k} & 0 \\
     0 & \textbf{i}\omega^{-k}
\end{pmatrix}$              &     $ \begin{pmatrix}
                                 0 & 1 \\
                                  -1 & 0
                                   \end{pmatrix}$  \\

\hline

\end{tabular}
\end{table}

\begin{table}

\caption{Character table of $V_{8n}$ for $n$ odd}
\begin{tabular}{|c| c |c| c| c| c| c |c|}

\hline
  &$1$ & $b^{2}$ & \thead{$a^{2r+1}$\\ $(0 \leq r \leq n-1)$} & $a^{2s}$ & \thead{$a^{2s}b^{2}$\\ $(1 \leq s \leq (n-1)/2)$} & $b$ & $ab$  \\ 
\hline
${\chi}_1$ & $1$ & $1$ & $1$ & $1$ & $1$ & $1$ & $1$ \\
\hline
${\chi}_2$ & $1$ & $1$ & $1$ & $1$ & $1$ & $-1$ & $-1$ \\
\hline
${\chi}_3$ & $1$ & $1$ & $-1$ & $1$ & $1$ & $1$ & $-1$ \\
\hline
${\chi}_4$ & $1$ & $1$ & $-1$ & $1$ & $1$ & $-1$ & $1$   \\                                     
\hline
${\zeta}_j$ $(0\leq j\leq n-1)$ & $2$ & $-2$ & ${\omega}^{2j(2r+1)}-{\omega}^{-2j(2r+1)}$ & ${\omega}^{4js} + {\omega}^{-4js}$ & $-{\omega}^{4js}-{\omega}^{-4js}$ & $0$ & $0$ \\
\hline
${\xi}_k$ $(1\leq k \leq n-1)$ & $2$ & $2$ & ${\omega}^{k(2r+1)}+{\omega}^{-k(2r+1)}$ & ${\omega}^{2ks}+{\omega}^{-2ks}$ & ${\omega}^{2ks}+{\omega}^{-2ks}$ & $0$ & $0$  \\
\hline

\end{tabular}
\end{table}

\begin{landscape}
\begin{table}
\caption{Character table of $V_{8n}$ for $n \equiv 0~(\mathrm{mod}~4)$}
\begin{tabular}{|c|c|c|c|c|c|c|c|c|c|c|c|c|c|c|}
\hline
 & $1$ & $b^2$ & $a^n$ & $a^nb^2$ & $a^{4m+1}$ & $a^{4m+3}$ & $a^{4s}$ & $a^{4p+2}$ & $a^{4s}b^2$ & $a^{4p+2}b^2$ & $b$ & $b^{-1}$ & $ab$ & $ab^{-1}$\\ 
\hline
${\chi}_1$ & $1$ & $1$ & $1$ & $1$ & $1$ & $1$ & $1$ & $1$ & $1$ & $1$ & $1$ & $1$ & $1$ & $1$\\
\hline
${\chi}_2$ & $1$ & $-1$ & $-1$ & $1$ & $\textbf{i}$  & $-\textbf{i}$ & $1$ & $-1$ & $-1$ & $1$ & $-\textbf{i}$ & $\textbf{i}$ & $1$ & $-1$\\
\hline
${\chi}_3$ & $1$ & $1$ & $1$ & $1$ & $-1$ & $-1$ & $1$ & $1$ & $1$ & $1$ & $-1$ & $-1$ & $1$ & $1$\\
\hline
${\chi}_4$ & $1$ & $-1$ & $-1$ & $1$ & $-\textbf{i}$ & $\textbf{i}$ & $1$ & $-1$ & $-1$ & $1$ & $\textbf{i}$ & $-\textbf{i}$ & $1$ & $-1$\\                                     
\hline
${\chi}_5$ & $1$ & $1$ & $1$ & $1$ & $1$ & $1$ & $1$ & $1$ & $1$ & $1$ & $-1$ & $-1$ & $-1$ & $-1$\\
\hline
${\chi}_6$ & $1$ & $-1$ & $-1$ & $1$ & $\textbf{i}$ & $-\textbf{i}$ & $1$ & $-1$ & $-1$ & $1$ & $\textbf{i}$ & $-\textbf{i}$ & $-1$ & $1$\\
\hline
${\chi}_7$ & $1$ & $1$ & $1$ & $1$ & $-1$ & $-1$ & $1$ & $1$ & $1$ & $1$ & $1$ & $1$ & $-1$ & $-1$\\
\hline
${\chi}_8$ & $1$ & $-1$ & $-1$ & $1$ & $-\textbf{i}$ & $\textbf{i}$ & $1$ & $-1$ & $-1$ & $1$ & $-\textbf{i}$ & $\textbf{i}$ & $-1$ & $1$\\                                     
\hline
${\zeta}_j$ & $2$ & $2$ & $2{(-1)}^j$ & $2{(-1)}^j$ & $\alpha^{j(4m+1)}$ & $\alpha^{j(4m+3)}$ & $\alpha^{j(4s)}$ & $\alpha^{j(4p+2)}$ & $\alpha^{j(4s)}$ & $\alpha^{j(4p+2)}$ & $0$ & $0$ & $0$ & $0$\\
\hline
${\xi}_k$ & $2$ & $-2$ & $2{(-1)}^k$ & $-2{(-1)}^k$ & $\textbf{i}{\alpha}^{j(4m+1)}$ & $-\textbf{i}{\alpha}^{j(4m+3)}$  & $\alpha^{j(4s)}$ & $-{\alpha}^{j(4p+2)}$ & $-\alpha^{j(4s)}$ & $\alpha^{j(4p+2)}$ & $0$ & $0$ & $0$ & $0$\\
\hline

\end{tabular}
\end{table}
Here $\alpha^{jr} = \omega^{jr} + \omega^{-jr} = 2\cos(\frac{{\pi}jr}{n})$,
$\alpha^{kr} = \omega^{kr} + \omega^{-kr} = 2\cos(\frac{{\pi}kr}{n})$,

$m \in \{0,\ldots,{n/2 - 1}\}$,
$s \in \{1,\ldots,{n/4 - 1}\}$,
$p \in \{0,\ldots,{n/4 - 1}\}$, and 
$j, k \in \{1,\ldots,{n - 1}\}$.

\end{landscape}

\begin{landscape}
\begin{table}
\caption{Character table of $V_{8n}$ for $n \equiv 2~(\mathrm{mod}~4)$}
\begin{tabular}{|c|c|c|c|c|c|c|c|c|c|c|c|c|c|c|}
\hline
 & $1$ & $b^2$ & $a^n$ & $a^nb^2$ & $a^{4m+1}$ & $a^{4m+3}$ & $a^{4s}$ & $a^{4p+2}$ & $a^{4s}b^2$ & $a^{4p+2}b^2$ & $b$ & $b^{-1}$ & $ab$ & $ab^{-1}$\\ 
\hline
${\chi}_1$ & $1$ & $1$ & $1$ & $1$ & $1$ & $1$ & $1$ & $1$ & $1$ & $1$ & $1$ & $1$ & $1$ & $1$\\
\hline
${\chi}_2$ & $1$ & $-1$ & $1$ & $-1$ & $\textbf{i}$  & $-\textbf{i}$ & $1$ & $-1$ & $-1$ & $1$ & $-\textbf{i}$ & $\textbf{i}$ & $1$ & $-1$\\
\hline
${\chi}_3$ & $1$ & $1$ & $1$ & $1$ & $-1$ & $-1$ & $1$ & $1$ & $1$ & $1$ & $-1$ & $-1$ & $1$ & $1$\\
\hline
${\chi}_4$ & $1$ & $-1$ & $1$ & $-1$ & $-\textbf{i}$ & $\textbf{i}$ & $1$ & $-1$ & $-1$ & $1$ & $\textbf{i}$ & $-\textbf{i}$ & $1$ & $-1$\\                                     
\hline
${\chi}_5$ & $1$ & $1$ & $1$ & $1$ & $1$ & $1$ & $1$ & $1$ & $1$ & $1$ & $-1$ & $-1$ & $-1$ & $-1$\\
\hline
${\chi}_6$ & $1$ & $-1$ & $1$ & $-1$ & $\textbf{i}$ & $-\textbf{i}$ & $1$ & $-1$ & $-1$ & $1$ & $\textbf{i}$ & $-\textbf{i}$ & $-1$ & $1$\\
\hline
${\chi}_7$ & $1$ & $1$ & $1$ & $1$ & $-1$ & $-1$ & $1$ & $1$ & $1$ & $1$ & $1$ & $1$ & $-1$ & $-1$\\
\hline
${\chi}_8$ & $1$ & $-1$ & $1$ & $-1$ & $-\textbf{i}$ & $\textbf{i}$ & $1$ & $-1$ & $-1$ & $1$ & $-\textbf{i}$ & $\textbf{i}$ & $-1$ & $1$\\                                     
\hline
${\zeta}_j$ & $2$ & $2$ & $2{(-1)}^j$ & $2{(-1)}^j$ & $\alpha^{j(4m+1)}$ & $\alpha^{j(4m+3)}$ & $\alpha^{j(4s)}$ & $\alpha^{j(4p+2)}$ & $\alpha^{j(4s)}$ & $\alpha^{j(4p+2)}$ & $0$ & $0$ & $0$ & $0$\\
\hline
${\xi}_k$ & $2$ & $-2$ & $-2{(-1)}^k$ & $2{(-1)}^k$ & $\textbf{i}{\alpha}^{j(4m+1)}$ & $-\textbf{i}{\alpha}^{j(4m+3)}$  & $\alpha^{j(4s)}$ & $-{\alpha}^{j(4p+2)}$ & $-\alpha^{j(4s)}$ & $\alpha^{j(4p+2)}$ & $0$ & $0$ & $0$ & $0$\\
\hline

\end{tabular}
\end{table}
Here $\alpha^{jr} = \omega^{jr} + \omega^{-jr} = 2\cos(\frac{{\pi}jr}{n})$, $\alpha^{kr} = \omega^{kr} + \omega^{-kr} = 2\cos(\frac{{\pi}kr}{n})$,

$m \in \{0,\ldots,{n/2 - 1}\}$,
$s \in \{1,\ldots,{n/4 - 1}\}$,
$p \in \{0,\ldots,{n/4 - 1}\}$, and
$j, k \in \{1,\ldots,{n - 1}\}$.

\end{landscape}
In what follows, we consider only the normal Cayley graphs $\text{Cay}(V_{8n},S)$. We also consider the ordering $1,a,a^{2},\ldots,a^{2n-1},b,ab,\ldots,a^{2n-1}b,b^{2},ab^{2},\ldots,a^{2n-1}b^{2},b^{3},ab^{3},\ldots,a^{2n-1}b^{3}$ for the elements of $V_{8n}$. Recall that the adjacency matrix of $\text{Cay}(V_{8n},S)$
has an eigenvalue corresponding to each irreducible inequivalent representations of $V_{8n}$. In the following two subsections, we use Lemma~\ref{Lemma2.1} and Lemma~\ref{Lemma2.2} to compute the eigenvectors of $\text{Cay}(V_{8n},S)$.


\subsection{Eigenvectors of normal $\text{Cay}(V_{8n}, S)$ for odd $n$}\label{subsection 2.1}
Let $n$ be an odd positive integer. The adjacency matrix of  $\text{Cay}(V_{8n},S)$ has the following eigenvectors corresponding to the one-dimensional irreducible representations $\theta_1,\theta_2,\theta_3$ and $\theta_4$, respectively.
\begin{align*}
\textbf{u}_1 = & \frac{1}{\sqrt{8n}} [1,\ldots,1]^t ,\\
\textbf{u}_2 = & \frac{1}{\sqrt{8n}} [1,\ldots,1,~~-1,\ldots,-1,~~1,\ldots,1,~~-1,\ldots,-1]^t ,\\
\textbf{u}_3 =& \frac{1}{\sqrt{8n}}[1,-1,\ldots,1,-1,~~1,-1,\ldots,1,-1,~~1,-1,\ldots,1,-1,~~1,-1,\ldots,1,-1]^t ~\mathrm{and}~\\
\textbf{u}_4 =&  \frac{1}{\sqrt{8n}}[1,-1,\ldots,1,-1,~~-1,1,\ldots,-1,1,~~1,-1,\ldots,1,-1,~~-1,1,\ldots,-1,1]^t .
\end{align*}
The adjacency matrix of $\text{Cay}(V_{8n},S)$ has the following eigenvectors corresponding to the two-dimensional irreducible representations
${\psi}_j$ for $0 \leq j \leq n-1$.
\begin{align*}
{\textbf{u}_j}^{(1)} &=\frac{1}{\sqrt{4n}} [\{{\omega}^{2rj}\}_{r=0}^{2n-1},~\textbf{0},~{\{-\omega}^{2rj}\}_{r=0}^{2n-1},~\textbf{0}]^t,\\
{\textbf{u}_j}^{(2)} &=\frac{1}{\sqrt{4n}} [\textbf{0},~\{{\omega}^{2rj}\}_{r=0}^{2n-1},~\textbf{0},~\{{-\omega}^{2rj}\}_{r=0}^{2n-1}]^t,\\
{\textbf{u}_j}^{(3)} &=\frac{1}{\sqrt{4n}} [\textbf{0},~\{-(-{\omega}^{-2j})^r\}_{r=0}^{2n-1},~\textbf{0},~\{(-{\omega}^{-2j})^r\}_{r=0}^{2n-1}]^t~~\mathrm{and}~\\ 
{\textbf{u}_j}^{(4)} &=\frac{1}{\sqrt{4n}} [\{(-{\omega}^{2j})^r\}_{r=0}^{2n-1},~\textbf{0},~\{-(-{\omega}^{2j})^r\}_{r=0}^{2n-1},~\textbf{0}]^t.
\end{align*}

Further, the adjacency matrix of $\text{Cay}(V_{8n},S)$ has the following eigenvectors corresponding to the two-dimensional irreducible representations
$\phi_k$ for $1 \leq k \leq n-1$.
\begin{align*}
{\textbf{v}_k}^{(1)} &=\frac{1}{\sqrt{4n}} [\{{\omega}^{rk}\}_{r=0}^{2n-1},~\textbf{0},~\{{\omega}^{rk}\}_{r=0}^{2n-1},~\textbf{0}]^t,\\
{\textbf{v}_k}^{(2)} &=\frac{1}{\sqrt{4n}}[\textbf{0},~\{{\omega}^{rk}\}_{r=0}^{2n-1},~\textbf{0},~\{{\omega}^{rk}\}_{r=0}^{2n-1}]^t,\\
{\textbf{v}_k}^{(3)} &=\frac{1}{\sqrt{4n}} [\textbf{0},~\{{\omega}^{-rk}\}_{r=0}^{2n-1},~\textbf{0},~\{{\omega}^{-rk}\}_{r=0}^{2n-1}]^t~~\mathrm{and}~\\
{\textbf{v}_k}^{(4)} &=\frac{1}{\sqrt{4n}} [\{{\omega}^{-rk}\}_{r=0}^{2n-1},~\textbf{0},~\{{\omega}^{-rk}\}_{r=0}^{2n-1},~\textbf{0}]^t.
\end{align*}
Here $\textbf{0}$ denotes the row of length $2n$ consisting of all the zeros. Also, for complex numbers $z$ and $\varepsilon$, $\{{\varepsilon}z^r\}_{r = 0}^{2n - 1}$ denotes the vector $[{\varepsilon}, {\varepsilon}z, \ldots, {\varepsilon}z^{2n-1}]^t$. Note that the eigenvectors mentioned in this subsection form an orthonormal basis of ${\mathbb{C}}^{8n}$.
\subsection{Eigenvectors of normal $\text{Cay}(V_{8n}, S)$ for even $n$}\label{subsection 2.2}
Let $n$ be an even positive integer. The adjacency matrix of $\text{Cay}(V_{8n},S)$ has the following eigenvectors corresponding to the one-dimensional irreducible representations $\theta_1,\ldots,\theta_8$, respectively.
\begin{align*}
\textbf{u}_1 &= \frac{1}{\sqrt{8n}}[1, \ldots, 1]^t,\\
\textbf{u}_2 &= \frac{1}{\sqrt{8n}}[\{\textbf{i}^r\}_{r = 0}^{2n - 1},~\{\textbf{i}^{r+3}\}_{r = 0}^{2n - 1}, ~\{-{\textbf{i}}^r\}_{r = 0}^{2n - 1},~ \{{\textbf{i}}^{r+1}\}_{r = 0}^{2n - 1}]^t,\\
\textbf{u}_3 &=  \frac{1}{\sqrt{8n}}[1,-1,\ldots,1,-1,~~-1,1,\ldots,-1,1,~~1,-1,\ldots,1,-1,~~-1,1,\ldots,-1,1]^t,\\
\textbf{u}_4 &= \frac{1}{\sqrt{8n}}[\{({-\textbf{i}})^r\}_{r = 0}^{2n - 1},~ \{({-\textbf{i}})^{r+3}\}_{r = 0}^{2n - 1}, ~\{({-\textbf{i}})^{r+2}\}_{r = 0}^{2n - 1},~ \{({-\textbf{i}})^{r+1}\}_{r = 0}^{2n - 1}]^t,\\
\textbf{u}_5 &=  \frac{1}{\sqrt{8n}} [1,\ldots,1,~~-1,\ldots,-1,~~1,\ldots,1,~~-1,\ldots,-1]^t ,\\
\textbf{u}_6 &= \frac{1}{\sqrt{8n}}[\{\textbf{i}^r\}_{r = 0}^{2n - 1}, ~\{\textbf{i}^{r+1}\}_{r = 0}^{2n - 1},~ \{{\textbf{i}}^{r+2}\}_{r = 0}^{2n - 1}, ~\{{\textbf{i}}^{r+3}\}_{r = 0}^{2n - 1}]^t,\\
\textbf{u}_7 &= \frac{1}{\sqrt{8n}}[1,-1,\ldots,1,-1,~~1,-1,\ldots,1,-1,~~1,-1,\ldots,1,-1,~~1,-1,\ldots,1,-1]^t ~\mathrm{and}~\\
\textbf{u}_8 &= \frac{1}{\sqrt{8n}}[\{({-\textbf{i}})^r\}_{r = 0}^{2n - 1},~ \{({-\textbf{i}})^{r+1}\}_{r = 0}^{2n - 1}, ~\{({-\textbf{i}})^{r+2}\}_{r = 0}^{2n - 1}, ~\{({-\textbf{i}})^{r+3}\}_{r = 0}^{2n - 1}]^t.
\end{align*}
The adjacency matrix of $\text{Cay}(V_{8n},S)$ has the following eigenvectors corresponding to the two-dimensional irreducible representations
${\psi}_j$ for $1 \leq j \leq n-1$.
\begin{align*}
{\textbf{u}_j}^{(1)} &=\frac{1}{\sqrt{4n}} [\{{\omega}^{rj}\}_{r=0}^{2n-1},~\textbf{0},~{\{\omega}^{rj}\}_{r=0}^{2n-1},~\textbf{0}]^t,\\
{\textbf{u}_j}^{(2)} &=\frac{1}{\sqrt{4n}} [\textbf{0},~\{\textbf{i}\omega^{rj}\}_{r=0}^{2n-1},~\textbf{0},~\{\textbf{i}\omega^{rj}\}_{r=0}^{2n-1}]^t,\\
{\textbf{u}_j}^{(3)} &=\frac{1}{\sqrt{4n}} [\textbf{0},~\{-\textbf{i}\omega^{-rj}\}_{r=0}^{2n-1},~\textbf{0},~\{-\textbf{i}\omega^{-rj}\}_{r=0}^{2n-1}]^t~~\mathrm{and}~\\ 
{\textbf{u}_j}^{(4)} &=\frac{1}{\sqrt{4n}} [\{{\omega}^{-rj}\}_{r=0}^{2n-1},~\textbf{0},~{\{\omega}^{-rj}\}_{r=0}^{2n-1},~\textbf{0}]^t.
\end{align*}

Further, the adjacency matrix of $\text{Cay}(V_{8n},S)$ has the following eigenvectors corresponding to the two-dimensional irreducible representations
$\phi_k$ for $1 \leq k \leq n-1$.
\begin{align*}
{\textbf{v}_k}^{(1)} &=\frac{1}{\sqrt{4n}} [\{(\textbf{i}{\omega}^{k})^r\}_{r=0}^{2n-1},~\textbf{0},~\{-(\textbf{i}{\omega}^{k})^r\}_{r=0}^{2n-1},~\textbf{0}]^t,\\
{\textbf{v}_k}^{(2)} &=\frac{1}{\sqrt{4n}}[\textbf{0},~\{(\textbf{i}{\omega}^{k})^r\}_{r=0}^{2n-1},~\textbf{0},~\{-(\textbf{i}{\omega}^{k})^r\}_{r=0}^{2n-1}]^t,\\
{\textbf{v}_k}^{(3)} &=\frac{1}{\sqrt{4n}} [\textbf{0},~\{-(\textbf{i}{\omega}^{-k})^r\}_{r=0}^{2n-1},~\textbf{0},~\{(\textbf{i}{\omega}^{-k})^r\}_{r=0}^{2n-1}]^t~~\mathrm{and}~\\
{\textbf{v}_k}^{(4)} &=\frac{1}{\sqrt{4n}} [\{(\textbf{i}{\omega}^{-k})^r\}_{r=0}^{2n-1},~\textbf{0},~\{-(\textbf{i}{\omega}^{-k})^r\}_{r=0}^{2n-1},~\textbf{0}]^t.
\end{align*}
These eigenvectors form an orthonormal basis of ${\mathbb{C}}^{8n}$.
\section{PST on Cayley graphs}\label{sec4}
Let $\Gamma$ be a simple graph on $n$ vertices, and let $A$ be its adjacency matrix. We denote by $\mathrm{Spec}(\Gamma)$ the set consisting all the eigenvalues of $\Gamma$. Let $\lambda_i$ be an eigenvalue of $A$ with corresponding eigenvector $\textbf{v}_i$ for $1\leq i\leq n$ such that $\{\textbf{v}_1,\ldots,\textbf{v}_n\}$ is an orthonormal basis of $\mathbb{C}^n$. Then the spectral decomposition of $A$ is given by   
\begin{align*}
A = \lambda_1E_1 + \cdot \cdot \cdot + \lambda_nE_n, 
\end{align*}
where $E_i = \textbf{v}_i\textbf{v}_i^*~(1\leq i\leq n)$ satisfy $E_1 + \cdots + E_n = I$ and 
\[E_iE_j = \left\{ \begin{array}{cl}
     E_i & \mbox{ if }
      i = j \\
     0 & \textnormal{ otherwise. }
\end{array}\right.\]
Therefore, the spectral decomposition of the transition matrix $H(\tau )$ of $\Gamma$ is given by
\begin{align*}
H(\tau ) = \exp({-\mathbf{i}\lambda_1 \tau })E_1 + \cdot  \cdot  \cdot +\exp({-\mathbf{i}\lambda_n \tau })E_n. 
\end{align*}
 The \textit{2-adic exponential valuation} of rational numbers, denoted $\nu_2$, is a mapping $\nu_2 : \mathbb{Q} \rightarrow   \mathbb{Z}\cup\{\infty\}$ defined by
$\nu_2(0) = \infty~\mathrm{and}~
\nu_2(2^{\ell}\frac{a}{b}) = \ell,~\mathrm{where}~ a,b,\ell\in\mathbb{Z} ~\mathrm{and}~ 2\nmid ab$.
 For $\beta, \beta^\prime \in \mathbb{Q}$,
the mapping $\nu_2$ satisfies the following two properties.
\begin{enumerate}
\item[(i)] $\nu_2(\beta\beta^\prime) = \nu_2(\beta) + \nu_2(\beta^\prime)$;
\item[(ii)] $\nu_2(\beta + \beta^\prime) \geq \min(\nu_2(\beta),\nu_2(\beta^\prime))$ and the equality holds if $\nu_2(\beta) \neq \nu_2(\beta^\prime)$.
\end{enumerate}
Here, we have the convention  that $\infty + \infty = \infty + \ell = \infty$ and $\infty > \ell$ for any $\ell\in \mathbb{Z}$.

Recall that we considered the ordering 
\[1,a,a^{2},\ldots,a^{2n-1},b,ab,\ldots,a^{2n-1}b,b^{2},ab^{2},\ldots,a^{2n-1}b^{2},b^{3},ab^{3},\ldots,a^{2n-1}b^{3}\]
 for the elements of $V_{8n}$. We label these elements by $0, 1,\ldots,8n-1$ in order. With these labelling, we write the vertex set of $\text{Cay}(V_{8n},S)$ as $V_1 \cup V_2 \cup V_3 \cup V_4$, where
\begin{align*}
V_1 &=  \{0,1,\ldots,2n-1\},\hspace{1.9cm}
V_2 =  \{2n,2n+1,\ldots,4n-1\},\\
V_3 &= \{4n,4n+1,\ldots,6n-1\}~\text{ and }~
V_4 =  \{6n,6n+1,\ldots,8n-1\}.
\end{align*}
We use this partition of $V_{8n}$ in characterizing the existence of PST on normal connected Cayley graphs $\text{Cay}(V_{8n},S)$ in the next two subsections. Let $S$ be a generating subset of $V_{8n}$ such that $1 \notin S$ and $Sg = gS$ for all $g \in V_{8n}$. For odd values of $n$, let $\alpha_1, \alpha_2, \alpha_3$ and $\alpha_4$ be the eigenvalues of $\text{Cay}(V_{8n}, S)$ corresponding to the one-dimensional irreducible representations $\theta_1$, $\theta_2$, $\theta_3$ and $\theta_4$, respectively. Lemma~\ref{Lemma2.2} yields that $\alpha_i$'s are simple eigenvalues and $\alpha_1 = \lvert S \rvert$. Further, let $\beta_j$ and $\gamma_k$ be the eigenvalues corresponding to the two-dimensional irreducible  representations $\psi_j$ and $\phi_k$, respectively for $0 \leq j \leq n-1$  and $1 \leq k \leq n-1$. For even values of $n$, let $\alpha_1,\ldots,\alpha_8$ be the eigenvalues of $\text{Cay}(V_{8n}, S)$ corresponding to the one-dimensional irreducible representations $\theta_1,\ldots,\theta_8$, respectively. Lemma~\ref{Lemma2.2} yields that $\alpha_i$'s are simple eigenvalues and $\alpha_1 = \lvert S \rvert$. Further, let $\beta_j$ and $\gamma_k$ be the eigenvalues corresponding to the two-dimensional irreducible representations $\psi_j$ and $\phi_k$, respectively for $1 \leq j \leq n-1$  and $1 \leq k \leq n-1$.

\subsection{PST on normal $\text{Cay}(V_{8n}, S)$ for odd $n$}\label{subsection 3.1}
In this subsection, we present a characterization of connected normal Cayley graphs $\text{Cay}(V_{8n}, S)$ exhibiting PST. From Subsection~\ref{subsection 2.1}, we recall that the adjacency matrix of $\text{Cay}(V_{8n}, S)$ has the eigenvectors
$\textbf{u}_i,{\textbf{u}_j}^{(i)}$ and ${\textbf{v}_k}^{(i)}$ with corresponding eigenvalues $\alpha_i, \beta_j $ and $\gamma_k$, respectively,  for $1\leq i\leq 4, 0\leq j \leq n-1$ and $1\leq k\leq n-1$. Further, these eigenvectors form an orthonormal basis of $\mathbb{C}^{8n}$.

Let $J_m$ be the all-one matrix of order $m$. Then the projective matrices corresponding to the eigenvectors $\textbf{u}_1, \textbf{u}_2, \textbf{u}_3$ and $\textbf{u}_4$ are given by
\begin{align*}
E_1 &= \textbf{u}_1\textbf{u}_1^* =  \frac {1} {8n}J_{8n},
\hspace{1.85cm} E_2 = \textbf{u}_2\textbf{u}_2^* =  \frac{1}{8n} \begin{pmatrix}
J_{2n} & -J_{2n}&  J_{2n} & -J_{2n}\\
-J_{2n} & J_{2n} & -J_{2n} & J_{2n}\\
J_{2n} & -J_{2n} & J_{2n} & -J_{2n}\\
-J_{2n} &  J_{2n} & -J_{2n}& J_{2n}
\end{pmatrix},\\
E_3 &= \textbf{u}_3\textbf{u}_3^* = \frac{1}{8n}[(-1)^{u+v}]  ~\text{ and }~
E_4 = \textbf{u}_4\textbf{u}_4^* = \frac{1}{8n} [e_4(u,v)], 
\end{align*} 
respectively, where $u,v\in\{0,1,\ldots, {8n-1}\}$ and 
\[e_4(u,v)= \left\{ \begin{array}{ll} (-1)^{u+v} & \textrm{if } u,v\in V_1 \cup V_3 \textrm{ or } u,v\in V_2 \cup V_4\\
 (-1)^{u+v+1} & \textrm{if }u\in V_1 \cup V_3 , v\in V_2 \cup V_4  \textrm{ or } u\in V_2 \cup V_4 , v\in V_1 \cup V_3. \end{array}\right. \]
Note that $(-1)^{u+v}$ is the $uv$-th entry of the matrix $E_3$ and $e_{4}(u,v)$ is the $uv$-th entry of the matrix $E_4$.

The projective matrices corresponding to the eigenvectors $\textbf{u}_j^{(1)}, \textbf{u}_j^{(2)}, \textbf{u}_j^{(3)}$ and $\textbf{u}_j^{(4)}$, where $0 \leq j \leq n-1$, are given by

\begin{align*}
{E_j}^{(1)} = \textbf{u}_j^{(1)}{\textbf{u}_j^{(1)}}^* =      \frac{1}{4n}\begin{pmatrix}
                X_1 & \textbf{0} & -X_1 & \textbf{0} \\
                \textbf{0} & \textbf{0} & \textbf{0} & \textbf{0}  \\
                -X_1 & \textbf{0} & X_1 & \textbf{0}  \\
                \textbf{0} & \textbf{0} & \textbf{0} & \textbf{0}
                \end{pmatrix},\hspace{0.7cm}
{E_j}^{(2)} = \textbf{u}_j^{(2)}{\textbf{u}_j^{(2)}}^* =       \frac{1}{4n}\begin{pmatrix}
                \textbf{0} & \textbf{0} & \textbf{0} & \textbf{0} \\
                \textbf{0} & X_1 & \textbf{0} & -X_1 \\
                \textbf{0} & \textbf{0} & \textbf{0} & \textbf{0}\\
                \textbf{0} & -X_1 & \textbf{0} & X_1
                \end{pmatrix},
\end{align*}
\begin{align*}
{E_j}^{(3)} = \textbf{u}_j^{(3)}{\textbf{u}_j^{(3)}}^* = \frac{1}{4n}\begin{pmatrix}
                \textbf{0} & \textbf{0} & \textbf{0} & \textbf{0} \\
                \textbf{0} & X_2 & \textbf{0} & -X_2 \\
                \textbf{0} & \textbf{0} & \textbf{0} & \textbf{0}\\
                \textbf{0} & -X_2 & \textbf{0} & X_2
               \end{pmatrix}~\mathrm{and} ~~~
{E_j}^{(4)} = \textbf{u}_j^{(4)}{\textbf{u}_j^{(4)}}^* = \frac{1}{4n}\begin{pmatrix}
                X_2 & \textbf{0} & -X_2 & \textbf{0} \\
                \textbf{0} & \textbf{0} & \textbf{0} & \textbf{0}\\
                -X_2 & \textbf{0} & X_2 & \textbf{0} \\
                \textbf{0} & \textbf{0} & \textbf{0} & \textbf{0}
                \end{pmatrix},
\end{align*}
respectively. Here $\textbf{0}$ is the zero matrix of order $2n$;  $X_1$ and $X_2$ are circulant matrices with first row $[1, \omega^{-2j},\ldots, \omega^{-2(2n-1)j}]$ and $[1, -\omega^{2j},\ldots, -\omega^{2(2n-1)j}]$, respectively.

Further, the projective matrices corresponding to the eigenvectors $\textbf{v}_k^{(1)}, \textbf{v}_k^{(2)}, \textbf{v}_k^{(3)}$ and $\textbf{v}_k^{(4)}$, where $1 \leq k \leq n-1$, are given by
\begin{align*}                
{F_k}^{(1)} = \textbf{v}_k^{(1)}{\textbf{v}_k^{(1)}}^* = \frac{1}{4n}\begin{pmatrix}
               Y_1 & \textbf{0} & Y_1 & \textbf{0} \\
               \textbf{0} & \textbf{0} & \textbf{0} & \textbf{0}\\
               Y_1 & \textbf{0} & Y_1 & \textbf{0} \\
               \textbf{0} & \textbf{0} & \textbf{0} & \textbf{0}
               \end{pmatrix},\hspace{0.8cm}
{F_k}^{(2)} = \textbf{v}_k^{(2)}{\textbf{v}_k^{(2)}}^* = \frac{1}{4n}\begin{pmatrix}
               \textbf{0} & \textbf{0} & \textbf{0} & \textbf{0}\\
               \textbf{0} & Y_1 & \textbf{0} & Y_1 \\
               \textbf{0} & \textbf{0} & \textbf{0} & \textbf{0}\\
                \textbf{0} & Y_1 & \textbf{0} & Y_1
               \end{pmatrix},
\end{align*}
\begin{align*}
{F_k}^{(3)} = \textbf{v}_k^{(3)}{\textbf{v}_k^{(3)}}^* = \frac{1}{4n}\begin{pmatrix}
              \textbf{0} & \textbf{0} & \textbf{0} & \textbf{0}\\
              \textbf{0} & Y_2 & \textbf{0} & Y_2 \\
              \textbf{0} & \textbf{0} & \textbf{0} & \textbf{0}\\
              \textbf{0} & Y_2 & \textbf{0} & Y_2
              \end{pmatrix}~~~\mathrm{and}~~~
{F_k}^{(4)} = \textbf{v}_k^{(4)}{\textbf{v}_k^{(4)}}^* = \frac{1}{4n}\begin{pmatrix}
               Y_2 & \textbf{0} & Y_2 & \textbf{0} \\
               \textbf{0} & \textbf{0} & \textbf{0} & \textbf{0}\\
               Y_2 & \textbf{0} & Y_2 & \textbf{0} \\
               \textbf{0} & \textbf{0} & \textbf{0} & \textbf{0}
               \end{pmatrix},
\end{align*}
respectively. Here $\textbf{0}$ is the zero matrix of order $2n$; $Y_1$ and $Y_2$ are circulant matrices with first row $[1, \omega^{-k},\ldots, \omega^{-(2n-1)k}]$ and $[1, \omega^{k},\ldots, \omega^{(2n-1)k}]$, respectively.

Therefore, the transition matrix $H(\tau )$ of $\text{Cay}(V_{8n}, S)$ is given by
\begin{align}
H(\tau ) &= \exp(-\mathbf{i}\tau {\alpha}_1)E_1 + \exp(-\mathbf{i}\tau {\alpha}_2)E_2 + \exp(-\mathbf{i}\tau {\alpha}_3)E_3 + 
\exp(-\mathbf{i}\tau {\alpha}_4)E_4  \nonumber \\
&+\sum_{j=0}^{n-1}\exp(-\mathbf{i}\tau {\beta}_j)(E_j^{(1)} + E_j^{(2)} +                 E_j^{(3)} + E_j^{(4)}) 
+\sum_{k=1}^{n-1}\exp(-\mathbf{i}\tau {\gamma}_k)(F_k^{(1)} + F_k^{(2)} +                 F_k^{(3)} + F_k^{(4)}).\label{1}
\end{align}
 Using~(\ref{1}) and the eigenprojectors obtained in the preceding discussion, the $uv$-th entry of $H(\tau )$ can be easily determined.
\begin{lema}\label{lemma3.1}
Let $u$ and $v$ be two distinct vertices of $\mathrm{Cay}(V_{8n}, S)$. If $u, v \in V_1 \cup V_2$, then $\mathrm{Cay}(V_{8n}, S)$ cannot exhibit PST between $u$ and $v$.
\end{lema}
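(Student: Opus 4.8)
The plan is to read off the $uv$-entry of $H(\tau)$ directly from the spectral expansion~(\ref{1}) together with the explicit eigenprojectors $E_1,\dots,E_4$, $E_j^{(i)}$, $F_k^{(i)}$ recorded above, splitting into the cases where $u,v$ lie in the same block or in different blocks. Throughout write $\mu_i=\exp(-\mathbf{i}\tau\alpha_i)$, $b_j=\exp(-\mathbf{i}\tau\beta_j)$, $c_k=\exp(-\mathbf{i}\tau\gamma_k)$, and set $P=V_1\cup V_3$, $Q=V_2\cup V_4$. The structural fact that drives everything is a support observation: each projector attached to a two-dimensional representation lives on $P$ or on $Q$, namely $E_j^{(1)},E_j^{(4)},F_k^{(1)},F_k^{(4)}$ vanish outside the $P\times P$ blocks and $E_j^{(2)},E_j^{(3)},F_k^{(2)},F_k^{(3)}$ vanish outside the $Q\times Q$ blocks. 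Hence the $V_1\times V_2$ block of $H(\tau)$ only feels $E_1,\dots,E_4$, while each diagonal block $V_1\times V_1$ and $V_2\times V_2$ is circulant (a sum of the circulants $X_1,X_2,Y_1,Y_2$ together with $J_{2n}$ and $[(-1)^{u+v}]$). For the cross-block case $u\in V_1$, $v\in V_2$ only the one-dimensional projectors survive, and their entries give
\[
H(\tau)_{uv}=\frac{1}{8n}\Big[(\mu_1-\mu_2)+(-1)^{u+v}(\mu_3-\mu_4)\Big],
\]
so $\lvert H(\tau)_{uv}\rvert\le 4/(8n)=1/(2n)<1$ for every $\tau$; PST between different blocks is therefore impossible.

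The substantive case is $u,v\in V_1$. Put $t=v-u\in\mathbb{Z}_{2n}\setminus\{0\}$; the circulant $V_1\times V_1$ block has symbol $\widehat C$ with $H(\tau)_{uv}=\widehat C(t)$, where
\[
\widehat C(t)=\frac{\mu_1+\mu_2}{8n}+\frac{\mu_3+\mu_4}{8n}(-1)^t+\frac{1+(-1)^t}{4n}\sum_{j=0}^{n-1}b_j\,\omega^{-2jt}+\frac{1}{2n}\sum_{k=1}^{n-1}c_k\cos\Big(\frac{\pi kt}{n}\Big).
\]
Suppose PST occurs from $u$ at time $\tau_0$. Since $H(\tau_0)$ is unitary and symmetric, the column indexed by $u$ is zero off the target vertex; in particular it vanishes on all of $V_3$, so the symbol $\widehat C_{31}$ of the $V_3\times V_1$ block (which differs from $\widehat C$ only by the sign of the $b_j$-terms, the $E_j^{(1)},E_j^{(4)}$ blocks carrying $-X_1,-X_2$ there) is identically zero. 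Reading $\widehat C=\widehat C-\widehat C_{31}$ and $\widehat C=\widehat C+\widehat C_{31}$ then yields two expressions for the same function,
\[
\widehat C(t)=\frac{1+(-1)^t}{2n}\sum_{j=0}^{n-1}b_j\,\omega^{-2jt}=\frac{\mu_1+\mu_2}{4n}+\frac{\mu_3+\mu_4}{4n}(-1)^t+\frac{1}{n}\sum_{k=1}^{n-1}c_k\cos\Big(\frac{\pi kt}{n}\Big).
\]

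The left expression vanishes for every odd $t$, and the right expression is an even function of $t$. PST demands $\lvert\widehat C(t_0)\rvert=1$ for a single $t_0\neq0$ with $\widehat C$ vanishing elsewhere; evenness forces $\widehat C(-t_0)=\widehat C(t_0)$, hence $-t_0\equiv t_0\pmod{2n}$, i.e.\ $t_0=n$. But $n$ is odd, so the left expression gives $\widehat C(n)=\tfrac{1+(-1)^n}{2n}\sum_j b_j\omega^{-2jn}=0$, contradicting $\lvert\widehat C(t_0)\rvert=1$. The case $u,v\in V_2$ is handled identically with the $Q$-projectors: the corresponding combination $\widehat D+\widehat D_{42}$ is again even, forcing the shift to equal $n$, and evaluating the other combination at $n$ gives an entry of modulus $\le 1/(4n)<1$. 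This disposes of all three cases.

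The main obstacle is precisely the within-block case: the naive triangle-inequality bound on $\widehat C(t)$ exceeds $1$, so one cannot conclude as in the cross-block case. The device that overcomes it is to use unitarity to annihilate the $V_3$- (resp.\ $V_4$-) part of the relevant column, which collapses the circulant symbol into a manifestly even function; the parity clash between ``$t_0$ even'' (from the vanishing at odd $t$) and ``$t_0=n$ odd'' then closes the argument uniformly in the odd parameter $n$, including $n=1$, where no $\phi_k$ occur and $\widehat C(n)=0$ still holds. The only routine but sign-sensitive step is verifying the projector blocks carefully enough that the $b_j$-terms cancel in one combination while the $c_k$- and one-dimensional terms cancel in the other.
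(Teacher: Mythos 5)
Your proof is correct, and for the mixed case ($u\in V_1,\ v\in V_2$ or vice versa) it coincides with the paper's: only the one-dimensional projectors meet the $V_1\times V_2$ block, giving $\lvert H(\tau)_{uv}\rvert\le \tfrac{1}{2n}<1$. For the same-block case, however, your route is genuinely different. The paper expands $H(\tau)_{uv}$ as a weighted sum of unimodular terms whose weights sum to $1$, so $\lvert H(\tau)_{uv}\rvert=1$ forces every term into phase alignment; the $\gamma_1$-equations then give $n\mid(u-v)$, hence $u-v\in\{0,\pm n\}$, the two $\beta_j$-conditions (\ref{5})--(\ref{6}) force $\tfrac{u+v}{2}\in\mathbb{Z}$, and the parity clash with $n$ odd leaves only $u=v$. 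You never write down this phase system: you use unitarity and symmetry of $H(\tau_0)$ to make the whole row through $u$ vanish off the target, feed that into the circulant structure of the $V_1\times V_3$ block to conclude its symbol vanishes identically, and then read $\widehat C$ in two ways---as a $b_j$-only expression and as a manifestly even expression---so that evenness forces the shift $t_0=n$, while the $b_j$-expression vanishes at $n$ because $n$ is odd. Both proofs rest on the same two pillars (the shift must lie in $\{0,n\}$, and the $\psi_j$-contribution annihilates the shift $n$), but your mechanism is structural (unitarity plus circulant evenness) where the paper's is arithmetic (phase-alignment equations). The trade-off: the paper's conditions (\ref{2})--(\ref{8}) are exactly the machinery recycled in Lemmas~\ref{lemma3.5}--\ref{lemma3.7}, whereas your argument is leaner for this lemma alone---in particular it never needs the rationality/integrality digression, which the paper's proof includes here but does not actually use to reach its contradiction.

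Two minor remarks. First, your claim that the $b_j$-expression vanishes at \emph{every} odd $t$ follows the eigenvector $\mathbf{u}_j^{(4)}$ of Subsection~\ref{subsection 2.1} as printed; the paper's own Case-2 formula and the circulant $X_2$ of Subsection~\ref{subsection 3.1} correspond to the conjugate choice, under which the $\psi_j$-part is $\tfrac{1}{4n}\sum_j b_j\bigl(\omega^{-2jt}+(-1)^t\omega^{2jt}\bigr)$ and does not vanish at all odd $t$. Your conclusion survives either reading, because you only evaluate at $t=n$, where $\omega^{\pm 2jn}=1$ and $1+(-1)^n=0$ give $0$ in both cases. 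Second, in the $V_2$ case the quantity you bound by $\tfrac{1}{4n}$ is in fact exactly $0$; any bound below $1$ closes the argument, but the hedge is unnecessary.
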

\begin{proof}
The following two cases arise.\\
Case 1. $u \in V_1, v \in V_2$ or $u \in V_2, v \in V_1$. In this case, from~(\ref{1}) we have 
\begin{align*}
H(\tau )_{uv}&= \frac{1}{8n}[\exp(-\textbf{i}\tau {\alpha}_1) - \exp(-\textbf{i}\tau {\alpha}_2) + (-1)^{u+v}\exp(-\textbf{i}\tau {\alpha}_4) + (-1)^{u+v+1} \exp(-\textbf{i}\tau {\alpha}_4)].
\end{align*}
This implies that $\mid H(\tau )_{uv} \mid \leq \frac{1}{8n} \times 4 = \frac{1}{2n} < 1$. Therefore, $\mathrm{Cay}(V_{8n}, S)$ cannot exhibit PST between the vertices $u$ and $v$. 

\noindent Case 2. $u,v \in V_1$ or $u,v \in V_2$. In this case, we have
\begin{align*}
H(\tau )_{uv} =& \frac{1}{8n} [\exp(-\textbf{i}\tau {\alpha}_1) + \exp(-\textbf{i}\tau {\alpha}_2) + (-1)^{u+v}\exp(-\textbf{i}\tau {\alpha}_3) + (-1)^{u+v}\exp(-\textbf{i}\tau {\alpha}_4)] \\
&+ \frac{1}{4n}\sum_{j=0}^{n-1}\exp(-\textbf{i}\tau {\beta}_j) [{\omega}^{2(u-v)j} + (-1)^{u+v}{\omega}^{2(v-u)j}] 
+ \frac{1}{4n}\sum_{k=1}^{n-1}\exp(-\textbf{i}\tau {\gamma}_k) [{\omega}^{(u-v)k} + {\omega}^{(v-u)k}].
\end{align*}
It can be be easily seen that $\mid{H(\tau )_{uv}} \mid  \leq 1$. Thus $\mid H(\tau )_{uv} \mid = 1$ if and only if for $0\leq j \leq n-1$ and $1 \leq k \leq n-1$, it holds that 
\begin{align*}
\exp({-\textbf{i}\tau {\alpha}_1}) &= \exp({-\textbf{i}\tau {\alpha}_2}),\\
\exp({-\textbf{i}\tau {\alpha}_1})
&= (-1)^{u+v}\exp({-\textbf{i}\tau {\alpha}_3}),\\
\exp({-\textbf{i}\tau {\alpha}_1})
&= (-1)^{u+v}\exp({-\textbf{i}\tau {\alpha}_4}),\\
\exp({-\textbf{i}\tau {\alpha}_1})
&= {\omega}^{2(u-v)j}\exp({-\textbf{i}\tau {\beta}_j}),\\
\exp({-\textbf{i}\tau {\alpha}_1})
&= (-1)^{u+v}{\omega}^{2(v-u)j}
\exp({-\textbf{i}\tau {\beta}_j}),\\
\exp({-\textbf{i}\tau {\alpha}_1})
&= {\omega}^{(u-v)k}\exp({-\textbf{i}\tau {\gamma}_k})~\mathrm{and}\\
\exp({-\textbf{i}\tau {\alpha}_1})
&= {\omega}^{(v-u)k}\exp({-\textbf{i}\tau {\gamma}_k}). 
\end{align*}

Considering the last two equations for $k = 1$, we get that $n$ divides $u-v$. Thus, the facts $u,v \in V_1$ or $u,v \in V_2$  give either $u = v$ or $u - v =\pm n$. Let $\tau  = 2 \pi T$. Then we have
\begin{align}
&({\alpha}_1-{\alpha}_2)T \in            \mathbb{Z},\label{2}\\
&({\alpha}_1-{\alpha}_3)T - \frac{u+v}{2} \in \mathbb{Z},\label{3}\\
&({\alpha}_1-{\alpha}_4)T - \frac{u+v}{2} \in \mathbb{Z},\label{4}\\
&({\alpha}_1-{\beta}_j)T \in            \mathbb{Z},\label{5}\\
&({\alpha}_1-{\beta}_j)T - \frac{u+v}{2} \in \mathbb{Z},\label{6}\\
&({\alpha}_1-{\gamma}_k)T + \frac{(u-v)k}{2n} \in\mathbb{Z}~\mathrm{and}\label{7}\\
&({\alpha}_1-{\gamma}_k) T - \frac{(u-v)k}{2n} \in \mathbb{Z}\label{8}.      
\end{align}
From conditions (\ref{2}) to (\ref{8}), we get $[3\alpha_1 + 4n\alpha_1 + 4(n-1)\alpha_1 - \alpha_2 -\alpha_3 - \alpha_4 -4\sum_{j=0}^{n-1}{\beta_j} - 4\sum_{k=1}^{n-1}{\gamma_k}]T \in \mathbb{Q}$. Since $ 0 = \mathrm{Tr}(A) = {\alpha}_1 + {\alpha} _2 + {\alpha}_3 + {\alpha}_4 + 4\sum_{j=0}^{n-1}{\beta}_j + 4\sum_{k=1}^{n-1}{\gamma}_k,$ we have that $8n{\alpha}_1T \in \mathbb{Q},$ and since ${\alpha}_1$ is a positive integer, we have $T \in \mathbb{Q}$. This implies that all the eigenvalues of the graph are rational numbers. It is well known that any rational eigenvalue of a graph is an integer. Therefore in this case the graph is integral. 

From~(\ref{5}) and (\ref{6}) it is clear that $u + v$ is even. Since $u + v$ and $u - v$ have the same parity, and $n$ is an odd integer, we cannot have $u-v =\pm n$. Thus we must have $u = v$. Therefore, $\mathrm{Cay}(V_{8n}, S)$ cannot exhibit PST between the distinct vertices $u$ and $v$. This completes the proof.
\end{proof}
Using the same technique as in Lemma~\ref{lemma3.1}, we have the following three lemmas. The proofs are omitted to avoid repetitive arguments. 

\begin{lema}\label{lemma3.2}
Let $u$ and $v$ be two distinct  vertices of $\mathrm{Cay}(V_{8n}, S)$. If $u, v \in V_1 \cup V_4$, then $\mathrm{Cay}(V_{8n}, S)$ cannot exhibit PST between $u$ and $v$.
\end{lema}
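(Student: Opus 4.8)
The plan is to mirror the structure of the proof of Lemma~\ref{lemma3.1}, since the vertex sets $V_1$ and $V_4$ play symmetric roles to $V_1$ and $V_2$ under the representation data. First I would split into two cases exactly as before: the case where one of $u,v$ lies in $V_1$ and the other in $V_4$, and the case where both lie in the same block ($u,v\in V_1$ or $u,v\in V_4$). For the mixed case, I would read off $H(\tau)_{uv}$ from~(\ref{1}) by inspecting which eigenprojectors have a nonzero $(u,v)$-entry when $u\in V_1,v\in V_4$. The projectors $E_j^{(1)},E_j^{(4)},F_k^{(1)},F_k^{(4)}$ (those supported on the $V_1,V_3$ rows/columns) contribute nothing here because the off-diagonal block connecting $V_1$ to $V_4$ is zero for all the two-dimensional projectors; only the one-dimensional $E_1,E_2,E_3,E_4$ survive. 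Thus $H(\tau)_{uv}$ is again a sum of only four unit-modulus terms divided by $8n$, forcing $\lvert H(\tau)_{uv}\rvert\le \tfrac{1}{2n}<1$, which rules out PST in this case.

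For the main case $u,v\in V_1$ (or $u,v\in V_4$), I would again compute $H(\tau)_{uv}$ as a sum of the one-dimensional contributions plus the diagonal-block contributions from the $E_j^{(i)}$ and $F_k^{(i)}$ that are supported on the relevant block. Bounding by the triangle inequality gives $\lvert H(\tau)_{uv}\rvert\le 1$, with equality forcing a system of phase-alignment conditions analogous to~(\ref{2})--(\ref{8}). As in Lemma~\ref{lemma3.1}, the two conditions coming from $\gamma_k$ with $k=1$ force $n\mid(u-v)$, so that within a single block of size $2n$ we must have $u=v$ or $u-v=\pm n$. Summing the integrality conditions and using $\mathrm{Tr}(A)=0$ together with $\alpha_1=\lvert S\rvert\in\mathbb{Z}_{>0}$ yields $T\in\mathbb{Q}$, so the graph is integral. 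The final parity argument then shows $u+v$ is even, hence $u-v$ is even; since $n$ is odd this excludes $u-v=\pm n$, leaving only $u=v$, contradicting distinctness.

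The step requiring the most care is extracting the correct phase conditions for the $V_1,V_4$ pairing, because the eigenvectors attached to $\psi_j$ and $\phi_k$ distribute their nonzero blocks differently across $V_1,\ldots,V_4$ than in the $V_1,V_2$ analysis: here the relevant supports come from $\textbf{u}_j^{(1)},\textbf{u}_j^{(4)}$ and $\textbf{v}_k^{(1)},\textbf{v}_k^{(4)}$ (the ones living on $V_1\cup V_3$), whereas the $V_2$-argument used the $V_2\cup V_4$-supported vectors. One must check that the resulting per-term phases are still of the form $\omega^{\pm 2(u-v)j}$ and $\omega^{\pm(u-v)k}$ (possibly twisted by the $\pm$ signs visible in the first rows of the circulant blocks $X_2$ and $Y_2$), so that the $k=1$ divisibility conclusion and the parity obstruction carry over verbatim. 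Once this bookkeeping is confirmed, the remainder of the argument is identical to Lemma~\ref{lemma3.1}, which is precisely why the authors say the proof can be omitted to avoid repetition.
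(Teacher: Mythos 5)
Your proposal is correct and takes essentially the same route the paper intends: the proof of Lemma~\ref{lemma3.2} is omitted there precisely because it repeats the technique of Lemma~\ref{lemma3.1} --- in the mixed case $u\in V_1$, $v\in V_4$ every two-dimensional projector has zero $(u,v)$-entry, leaving only the four one-dimensional terms and the bound $\lvert H(\tau)_{uv}\rvert\le \frac{1}{2n}<1$, while the same-block case reproduces conditions (\ref{2})--(\ref{8}), the divisibility $n\mid(u-v)$ from $k=1$, integrality via $\mathrm{Tr}(A)=0$, and the parity obstruction that rules out $u-v=\pm n$ for odd $n$. One small correction to your closing remark: for $u,v\in V_4$ the contributing two-dimensional projectors are the $V_2\cup V_4$-supported ones $E_j^{(2)},E_j^{(3)},F_k^{(2)},F_k^{(3)}$ (not the $V_1\cup V_3$-supported ones), but their $(V_4,V_4)$ blocks produce exactly the phases $\omega^{2(u-v)j}$, $(-1)^{u+v}\omega^{2(v-u)j}$, $\omega^{\pm(u-v)k}$ appearing in Lemma~\ref{lemma3.1}, so the argument carries over verbatim as you claim.
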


\begin{lema}\label{lemma3.3}
Let $u$ and $v$ be two distinct  vertices of $\mathrm{Cay}(V_{8n}, S)$. If $u, v \in V_2 \cup V_3$, then $\mathrm{Cay}(V_{8n}, S)$ cannot exhibit PST between $u$ and $v$.
\end{lema}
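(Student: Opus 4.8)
The plan is to follow the argument of Lemma~\ref{lemma3.1} almost verbatim, with the pair of blocks $V_1,V_2$ replaced by $V_2,V_3$ and the relevant entries read off from the eigenprojectors computed above. As there, I would split into two cases according to whether $u$ and $v$ lie in different blocks or in the same block.

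First, suppose $u\in V_2$ and $v\in V_3$ (or vice versa). The key observation is that every two-dimensional eigenprojector vanishes on the off-diagonal blocks joining $V_2$ and $V_3$: inspecting $E_j^{(1)},\ldots,E_j^{(4)}$ and $F_k^{(1)},\ldots,F_k^{(4)}$, the $\psi_j$- and $\phi_k$-projectors are supported only on the block pairs $(V_1,V_3)$ and $(V_2,V_4)$, so their $(u,v)$-entries are zero here. Hence only $E_1,E_2,E_3,E_4$ contribute, and using $(E_2)_{uv}=-\tfrac{1}{8n}$ and $e_4(u,v)=(-1)^{u+v+1}$ for this block pair I would obtain
\[H(\tau)_{uv}=\frac{1}{8n}\bigl[\exp(-\mathbf{i}\tau\alpha_1)-\exp(-\mathbf{i}\tau\alpha_2)+(-1)^{u+v}\exp(-\mathbf{i}\tau\alpha_3)+(-1)^{u+v+1}\exp(-\mathbf{i}\tau\alpha_4)\bigr],\]
whence $|H(\tau)_{uv}|\le \tfrac{4}{8n}=\tfrac{1}{2n}<1$ and no PST can occur, exactly as in Case~1 of Lemma~\ref{lemma3.1}.

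Next, suppose $u,v\in V_2$ (the case $u,v\in V_3$ is identical, using $E_j^{(1)},E_j^{(4)},F_k^{(1)},F_k^{(4)}$ in place of $E_j^{(2)},E_j^{(3)},F_k^{(2)},F_k^{(3)}$). Writing $u=2n+u'$ and $v=2n+v'$, I would record the surviving diagonal-block entries, namely $(E_j^{(2)})_{uv}=\tfrac{1}{4n}\omega^{2(u-v)j}$, $(E_j^{(3)})_{uv}=\tfrac{1}{4n}(-1)^{u+v}\omega^{2(v-u)j}$, $(F_k^{(2)})_{uv}=\tfrac{1}{4n}\omega^{(u-v)k}$ and $(F_k^{(3)})_{uv}=\tfrac{1}{4n}\omega^{(v-u)k}$, together with $(E_1)_{uv}=(E_2)_{uv}=\tfrac{1}{8n}$ and $(E_3)_{uv}=(E_4)_{uv}=\tfrac{1}{8n}(-1)^{u+v}$. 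Because the local difference $u'-v'$ equals $u-v$ and $(-1)^{u+v}=(-1)^{u'+v'}$, the resulting formula for $H(\tau)_{uv}$ is literally the one in Case~2 of Lemma~\ref{lemma3.1}. Thus the same triangle-inequality bound gives $|H(\tau)_{uv}|\le 1$, with equality forcing precisely the conditions~(\ref{2})--(\ref{8}); the $k=1$ instance of~(\ref{7})--(\ref{8}) yields $n\mid(u-v)$, the trace identity $0=\mathrm{Tr}(A)$ forces $T\in\mathbb{Q}$ and hence integrality, and~(\ref{5})--(\ref{6}) force $u+v$ to be even. Since $u+v$ and $u-v$ share the same parity and $n$ is odd, $u-v=\pm n$ is impossible, so $u=v$, and no PST occurs between distinct $u,v$.

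The argument is essentially a transcription of Lemma~\ref{lemma3.1}, so the only genuine content lies in the bookkeeping of the two opening claims: confirming that the $\psi_j$- and $\phi_k$-projectors have no support on the $(V_2,V_3)$ blocks (which collapses the cross-block case to four rank-one terms), and checking that the surviving diagonal-block entries reproduce the exact expressions of Lemma~\ref{lemma3.1} up to harmless sign ambiguities in the $\omega$-exponents that disappear once $n\mid(u-v)$ is known. I expect this support/sign bookkeeping to be the only place requiring care; once it is in place, the integrality and parity steps transfer without change.
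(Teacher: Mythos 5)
Your proposal is correct and is essentially the proof the paper intends: the paper omits the proof of Lemma~\ref{lemma3.3} precisely because it is the same technique as Lemma~\ref{lemma3.1}, and your transcription carries it out faithfully — the cross-block case $(u\in V_2,\,v\in V_3)$ collapses to the four one-dimensional projectors giving $|H(\tau)_{uv}|\le \tfrac{1}{2n}<1$, while the same-block cases reproduce exactly the expression of Case~2 of Lemma~\ref{lemma3.1}, after which the conditions (\ref{2})--(\ref{8}), the trace/integrality argument, and the parity argument (with $n$ odd) force $u=v$. Your bookkeeping of the projector supports and block entries (including $(E_2)_{uv}=-\tfrac{1}{8n}$, $e_4(u,v)=(-1)^{u+v+1}$ across blocks, and the $E_j^{(i)}$, $F_k^{(i)}$ diagonal-block entries) checks out against the eigenprojectors in Subsection~\ref{subsection 3.1}.
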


\begin{lema}\label{lemma3.4}
Let $u$ and $v$ be two distinct  vertices of $\mathrm{Cay}(V_{8n}, S)$. If $u, v \in V_3 \cup V_4$, then $\mathrm{Cay}(V_{8n}, S)$ cannot exhibit PST between $u$ and $v$.
\end{lema}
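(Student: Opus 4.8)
The plan is to mirror the proof of Lemma~\ref{lemma3.1} exactly, since the statement is the analogous claim for the pair $V_3 \cup V_4$ in place of $V_1 \cup V_2$. First I would use the spectral decomposition~(\ref{1}) together with the explicit eigenprojectors $E_1,\ldots,E_4$, $E_j^{(i)}$ and $F_k^{(i)}$ computed above to write down $H(\tau)_{uv}$ for $u,v \in V_3 \cup V_4$. The key structural observation is that the blocks indexed by $V_3$ and $V_4$ carry essentially the same data as those indexed by $V_1$ and $V_2$: in $E_1,E_2,E_3,E_4$ the $(3,3),(3,4),(4,3),(4,4)$ blocks coincide (up to the recorded signs) with the corresponding $(1,1),(1,2),(2,1),(2,2)$ blocks, and in each $E_j^{(i)}$ and $F_k^{(i)}$ the circulant matrices $X_1,X_2,Y_1,Y_2$ reappear in the lower-right blocks. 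So the computation of $H(\tau)_{uv}$ for $u,v \in V_3 \cup V_4$ produces the same triangle-inequality bound $\lvert H(\tau)_{uv}\rvert \le 1$, with equality characterized by a system of modular conditions of the same shape as~(\ref{2})--(\ref{8}).

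Concretely, I would split into Case~1 ($u \in V_3, v \in V_4$ or the reverse) and Case~2 ($u,v \in V_3$ or $u,v \in V_4$). In Case~1 the off-diagonal cross terms again collapse so that only the four one-dimensional contributions survive, giving $\lvert H(\tau)_{uv}\rvert \le \frac{1}{8n}\times 4 = \frac{1}{2n} < 1$, ruling out PST immediately. In Case~2 the diagonal-block contributions from $E_j^{(i)}$ and $F_k^{(i)}$ survive, and setting $\tau = 2\pi T$ I would extract the analogue of conditions~(\ref{2})--(\ref{8}). Summing these over $j$ and $k$ and invoking $\mathrm{Tr}(A)=0$ forces $8n\alpha_1 T \in \mathbb{Q}$, hence $T \in \mathbb{Q}$ and the graph is integral; the conditions with $k=1$ force $n \mid u-v$, so on $V_3$ or $V_4$ either $u=v$ or $u-v = \pm n$. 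Finally, the two conditions playing the role of~(\ref{5}) and~(\ref{6}) force $u+v$ to be even, and since $u+v$ and $u-v$ share parity while $n$ is odd, $u-v = \pm n$ is excluded, leaving $u=v$.

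The only genuine work is bookkeeping: reading off from the lower-right blocks of the projectors the precise signs and the exact powers of $\omega$ that appear in $H(\tau)_{uv}$, and confirming that they assemble into a system identical in form to~(\ref{2})--(\ref{8}). I expect the main obstacle to be verifying that the parity-forcing step survives the translation — that is, checking that the $V_3,V_4$ analogues of~(\ref{5}) and~(\ref{6}) still differ by the term $\tfrac{u+v}{2}$ (rather than, say, a shifted constant coming from the $(-1)^{u+v}$ entries of $E_3,E_4$ evaluated on the third and fourth index blocks), since it is exactly this parity constraint, combined with $n$ odd, that eliminates the $u-v=\pm n$ possibility. Once that term is confirmed, the argument closes verbatim as in Lemma~\ref{lemma3.1}, and this is precisely why the authors state that the proof may be omitted.
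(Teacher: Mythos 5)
Your proposal is correct and is exactly the argument the paper intends: the authors omit this proof, stating it follows by the same technique as Lemma~\ref{lemma3.1}, and your block-by-block translation is the verification of that claim. In particular, your flagged concern resolves affirmatively: for $u,v\in V_3$ (resp.\ $V_4$) the relevant diagonal blocks of $E_j^{(1)},E_j^{(4)}$ (resp.\ $E_j^{(2)},E_j^{(3)}$) appear with $+$ signs, and since the within-block indices satisfy $u'+v' \equiv u+v \pmod 2$, the two $\beta_j$-conditions again differ by exactly $\tfrac{u+v}{2}$, so the parity argument excluding $u-v=\pm n$ (with $n$ odd) closes verbatim.
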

From the preceding four lemmas, observe that if $\mathrm{Cay}(V_{8n}, S)$ exhibits
PST between two distinct vertices $u$ and $v$, then either $u \in V_1, v \in V_3$ or $u \in V_3, v \in V_1$ or $u \in V_2, v \in V_4$ or $u \in V_4, v \in V_2$. In the following lemma, we give three necessary conditions for the existence of PST  on $\mathrm{Cay}(V_{8n}, S)$. We prove that these three conditions altogether are also sufficient in Lemma~\ref{lemma3.7}.
\begin{lema}\label{lemma3.5} 
Let $u$ and $v$ be two distinct  vertices of $\mathrm{Cay}(V_{8n}, S)$. If  $\mathrm{Cay}(V_{8n}, S)$ exhibits PST between $u$ and $v$, then the following three conditions hold.
\begin{enumerate}
\item[(i)] $ u - v = \pm 4n$.

\item[(ii)] All the eigenvalues of $\mathrm{Cay}(V_{8n}, S)$ are integers.

\item[(iii)] $\nu_2(\alpha_1-\beta_0) =\nu_2(\alpha_1 - \beta_j) $ for $1 \leq j \leq n-1$, and 
$\nu_2(\alpha_1 - \alpha_2)$,
$\nu_2(\alpha_1 - \alpha_3)$,
$\nu_2(\alpha_1 - \alpha_4)$, 
$\nu_2(\alpha_1 - \gamma_k)$
are strictly greater than $\nu_2(\alpha_1-\beta_0)$ for  $1 \leq k \leq n-1$.
\end{enumerate}
\end{lema}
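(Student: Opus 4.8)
The plan is to combine the four preceding lemmas with a direct evaluation of $H(\tau)_{uv}$. By Lemmas~\ref{lemma3.1}--\ref{lemma3.4}, PST between distinct $u,v$ can occur only when $\{u,v\}$ is split across $V_1,V_3$ or across $V_2,V_4$, so it suffices to treat the representative case $u\in V_1$, $v\in V_3$; the remaining three cases follow from the same computation. Writing $u_1=u$ and $v_3=v-4n$ for the local indices, I would read off from~(\ref{1}) and the explicit eigenprojectors that, at the $(u,v)$ position, only $E_1,E_2,E_3,E_4$, the blocks $E_j^{(1)},E_j^{(4)}$ (whose $(V_1,V_3)$ entries come from $-X_1,-X_2$) and the blocks $F_k^{(1)},F_k^{(4)}$ (from $Y_1,Y_2$) are nonzero. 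This expresses $H(\tau)_{uv}$ as a sum of the unimodular numbers $\exp(-\mathbf{i}\tau\alpha_i)$, $\exp(-\mathbf{i}\tau\beta_j)$, $\exp(-\mathbf{i}\tau\gamma_k)$ whose coefficient magnitudes total $\tfrac{4}{8n}+\tfrac{n}{2n}+\tfrac{n-1}{2n}=1$.

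Because these magnitudes sum to $1$, the triangle inequality gives $|H(\tau)_{uv}|\le1$, with equality precisely when every contributing term points in one common direction. I would first extract~(i) from this alignment. The two $F_k$-contributions carry phases $\omega^{\pm k(u_1-v_3)}$, so alignment forces $\omega^{2k(u_1-v_3)}=1$, i.e.\ $n\mid k(u_1-v_3)$; taking $k=1$ gives $n\mid(u_1-v_3)$. The two $E_j$-contributions differ by the factor $(-1)^{u_1+v_3}$, so alignment forces $u_1+v_3$ to be even. Since $n$ is odd and $|u_1-v_3|<2n$, these two facts together force $u_1=v_3$, that is $u-v=-4n$ (and $u-v=+4n$ in the mirror cases), which is~(i).

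Once $u_1=v_3$, all the $\omega$- and sign-factors become trivial and the equality condition collapses to the clean system $\exp(-\mathbf{i}\tau\alpha_1)=\exp(-\mathbf{i}\tau\alpha_i)=\exp(-\mathbf{i}\tau\gamma_k)=-\exp(-\mathbf{i}\tau\beta_j)$ for all $i\in\{2,3,4\}$ and all $j,k$, the minus sign arising from the $-X_1,-X_2$ blocks. Putting $\tau=2\pi T$, these read $T(\alpha_1-\alpha_i),\,T(\alpha_1-\gamma_k)\in\mathbb{Z}$ and $T(\alpha_1-\beta_j)\in\tfrac12+\mathbb{Z}$. For~(ii) I would mimic the trace argument of Lemma~\ref{lemma3.1}: summing all of these with multiplicities $1,1,1,1,4,4$ and using $\mathrm{Tr}(A)=\alpha_1+\alpha_2+\alpha_3+\alpha_4+4\sum_j\beta_j+4\sum_k\gamma_k=0$ yields $8n\alpha_1T\in\mathbb{Q}$, so $T\in\mathbb{Q}$ because $\alpha_1=|S|>0$; then every eigenvalue is rational, hence an integer.

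Condition~(iii) is then a $2$-adic reading of the same relations. Setting $s=-\nu_2(T)$, the requirement $T(\alpha_1-\beta_j)\in\tfrac12+\mathbb{Z}$ is equivalent to $\nu_2\big(T(\alpha_1-\beta_j)\big)=-1$, i.e.\ $\nu_2(\alpha_1-\beta_j)=s-1$ for every $j$; hence all these valuations coincide, giving $\nu_2(\alpha_1-\beta_0)=\nu_2(\alpha_1-\beta_j)$. The requirements $T(\alpha_1-\alpha_i),\,T(\alpha_1-\gamma_k)\in\mathbb{Z}$ are equivalent to $\nu_2(\alpha_1-\alpha_i),\,\nu_2(\alpha_1-\gamma_k)\ge s>s-1$, i.e.\ these strictly exceed $\nu_2(\alpha_1-\beta_0)$, which is~(iii). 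I expect the main obstacle to be the bookkeeping of the first two steps—pinning down exactly which projector blocks survive on the $V_1$--$V_3$ position and tracking their signs—and the clean translation of the half-integer condition into the statement that the $\nu_2(\alpha_1-\beta_j)$ are all equal and strictly dominated; step~(ii) is then a routine repetition of Lemma~\ref{lemma3.1}.
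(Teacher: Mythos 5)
Your proposal is correct and follows essentially the same route as the paper's proof: reduce to the $V_1$--$V_3$ (or $V_2$--$V_4$) configuration via Lemmas~\ref{lemma3.1}--\ref{lemma3.4}, expand $H(\tau)_{uv}$ in the eigenprojectors, use the triangle-inequality alignment conditions (the $\gamma_1$ equations giving $n \mid (u-v)$ and the $\beta_j$ equations giving the parity of $u+v$) to force $u-v=\pm 4n$, then obtain integrality by the trace argument of Lemma~\ref{lemma3.1} and condition~(iii) by comparing the $2$-adic valuations of the integer and half-integer constraints on $T$. The differences are purely presentational (local versus global indices, and explicit coefficient bookkeeping).
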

\begin{proof}
The $uv$-th entry of the transition matrix of $\mathrm{Cay}(V_{8n}, S)$ is given by
\begin{align*}
H(\tau )_{uv} &= \frac{1}{8n}[\exp(-\textbf{i}\tau {\alpha}_1) + \exp(-\textbf{i}\tau {\alpha}_2) + (-1)^{u+v}\exp(-\textbf{i}\tau {\alpha}_3) + (-1)^{u+v}\exp(-\textbf{i}\tau {\alpha}_4)] \\ 
&+ \frac{1}{4n}\sum_{j=0}^{n-1}\exp(-\textbf{i}\tau {\beta}_j)[-{\omega}^{2(u-v)j} + (-1)^{u+v+1}{\omega}^{2(v-u)j}]
+ \frac{1}{4n}\sum_{k=1}^{n-1}\exp(-\textbf{i}\tau {\gamma}_k)[{\omega}^{(u-v)k} + {\omega}^{(v-u)k}].
\end{align*}
This implies that $\mid{H(\tau )_{uv}} \mid \leq 1$. Thus,
$ \mid H(\tau )_{uv} \mid = 1 $ if and only if for $0$ $\leq$ $j$ $\leq$ $n-1$  and $1$ $\leq$ $k$ $\leq$ $n-1$, it holds that
\begin{align*}
\exp({-\textbf{i}\tau {\alpha}_1}) &= \exp({-\textbf{i}\tau {\alpha}_2}),\\ 
\exp({-\textbf{i}\tau {\alpha}_1})
&= (-1)^{u+v} \exp({-\textbf{i}\tau {\alpha}_3}),\\
\exp({-\textbf{i}\tau {\alpha}_1})
&= (-1)^{u+v} \exp({-\textbf{i}\tau {\alpha}_4}),\\
\exp({-\textbf{i}\tau {\alpha}_1})
&= -{\omega}^{2(u-v)j}\exp({-\textbf{i}\tau {\beta}_j}),\\
\exp({-\textbf{i}\tau {\alpha}_1})
&= (-1)^{u+v+1}{\omega}^{2(v-u)j} \exp({-\textbf{i}\tau {\beta}_j}),\\
\exp({-\textbf{i}\tau {\alpha}_1})
&= {\omega}^{(u-v)k} \exp({-\textbf{i}\tau {\gamma}_k})~\mathrm{and}\\
\exp({-\textbf{i}\tau {\alpha}_1})
&= {\omega}^{(v-u)k} \exp({-\textbf{i}\tau {\gamma}_k}).
\end{align*}
From the last two equations we get that $n$ divides $u-v$. From Lemmas~\ref{lemma3.1}, \ref{lemma3.2}, \ref{lemma3.3} and \ref{lemma3.4}, note that either $u \in V_1, v \in V_3$ or $u \in V_3, v \in V_1$ or $u \in V_2, v \in V_4$ or $u \in V_4, v \in V_2$. Therefore, we have either $u - v = \pm 3n$ or  
$u - v = \pm 4n$ or $u - v = \pm 5n$. Let $\tau  = 2 \pi T$. Then we have 
\begin{align}
&({\alpha}_1-{\alpha}_2) T \in \mathbb{Z},\label{9}\\
&({\alpha}_1-{\alpha}_3) T - \frac{u+v}{2} \in \mathbb{Z},\label{10}\\
&({\alpha}_1-{\alpha}_4) T - \frac{u+v}{2} \in \mathbb{Z},\label{11}\\
&({\alpha}_1-{\beta}_j) T - \frac{1}{2}\in\mathbb{Z},\label{12}\\
&({\alpha}_1-{\beta}_j) T - \frac{u+v+1}{2}\in \mathbb{Z},\label{13}\\
&({\alpha}_1-{\gamma}_k) T + \frac{(u-v)k}{2n} \in \mathbb{Z}~\mathrm{and}\label{14}\\
&({\alpha}_1-{\gamma}_k) T - \frac{(u-v)k}{2n} \in \mathbb{Z}\label{15}.
\end{align}
From (\ref{12}) and (\ref{13}), it is clear that $u + v$ is even. So we must have $u - v =\pm 4n$. Using similar arguments as in Lemma~\ref{lemma3.1}, we find that the graph is integral. Now the conditions (\ref{9}) to (\ref{15}) can be written as
\begin{align}
&({\alpha}_1-{\alpha}_2) T \in \mathbb{Z},\label{16}\\
&({\alpha}_1-{\alpha}_3) T \in \mathbb{Z},\label{17}\\
&({\alpha}_1-{\alpha}_4) T \in \mathbb{Z},\label{18}\\
&({\alpha}_1-{\beta}_j) T - \frac{1}{2}\in\mathbb{Z}~\mathrm{and}\label{19}\\
&({\alpha}_1-{\gamma}_k) T \in \mathbb{Z}\label{20}.
\end{align}
From (\ref{19}), we have $\nu_2((\alpha_1 - \beta_j)T) = -1$ for all $j \in \{0,1, \ldots, n-1\}$. Therefore $\nu_2(\alpha_1 - \beta_j)= -1 - \nu_2(T)$ for all $j \in \{0,1, \ldots, n-1\}$. Let $\mu = -1 - \nu_2(T)$. Thus $\nu_2(\alpha_1 - \beta_j) = \mu$ for all $j \in \{0,1, \ldots, n-1\}$. From (\ref{16}), it follows that $\nu_2((\alpha_1 - \alpha_2)T) \geq 0$. Therefore, $\nu_2(\alpha_1 - \alpha_2) \geq \mu + 1$. Similarly, we find that 
$\nu_2(\alpha_1 - \alpha_3)$, 
$\nu_2(\alpha_1 - \alpha_4)$ and 
$\nu_2(\alpha_1 - \gamma_k)$ are also strictly greater than $\mu$ for all $k \in \{1, \ldots, n-1\}$. This completes the proof. 
\end{proof}

The following lemma tells us the minimum time at which $\mathrm{Cay}(V_{8n}, S)$ exhibits PST between the vertices $u$ and $v$. We use some of the techniques from~\cite{Tan} to prove the following lemma. 
\begin{lema}\label{lemma 3.6}  
If the graph $\mathrm{Cay}(V_{8n}, S)$ exhibits PST between two distinct vertices $u$ and $v$, then the minimum time at which it has PST is $\frac{\pi}{M}$, where $M = \mathrm{gcd}(\alpha_1 - \alpha \colon \alpha \in \mathrm{Spec}(\mathrm{Cay}(V_{8n}, S))\setminus \{\alpha_1\})$.
\end{lema}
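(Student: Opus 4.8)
The plan is to characterize PST via the standard criterion that $\lvert H(\tau)_{uv}\rvert = 1$ holds if and only if all the ratios $\exp(-\mathbf{i}\tau\alpha_1)/\exp(-\mathbf{i}\tau\lambda)$ collapse to a common phase, exactly as in the system of equations derived in Lemma~\ref{lemma3.5}. By Lemma~\ref{lemma3.5} we may assume the graph is integral and $u - v = \pm 4n$, so the system reduces to the congruences (\ref{16})--(\ref{20}) with $\tau = 2\pi T$. The key algebraic object is the constant $\mu = -1 - \nu_2(T)$ established in Lemma~\ref{lemma3.5}, which forces $\nu_2(\alpha_1 - \beta_j) = \mu$ for every $j$ while the remaining eigenvalue-differences have strictly larger $2$-adic valuation. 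The goal is to extract from these simultaneous conditions the \emph{smallest} positive $\tau$, equivalently the largest admissible $T$.

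First I would translate the congruences into a single divisibility statement on $T$. Writing each condition as $(\alpha_1 - \lambda)T \in \mathbb{Z}$ (for $\lambda \in \{\alpha_2,\alpha_3,\alpha_4,\gamma_k\}$) and $(\alpha_1 - \beta_j)T - \tfrac12 \in \mathbb{Z}$, I would observe that the minimal positive $\tau$ corresponds to the maximal $T > 0$ satisfying all of them. The even conditions say $T$ is a rational whose denominator divides every $\alpha_1 - \lambda$ of the first type; the half-integer conditions say $(\alpha_1 - \beta_j)T$ is a half-integer for each $j$. The next step is to show that, because $\nu_2(\alpha_1 - \beta_j) = \mu$ is constant across $j$ and strictly below the valuations of the other differences, the binding constraint is exactly $(\alpha_1 - \beta_0)T \in \tfrac12 + \mathbb{Z}$, and that one can take $T = \tfrac{1}{2M}$ where $M = \gcd(\alpha_1 - \alpha : \alpha \in \mathrm{Spec} \setminus\{\alpha_1\})$. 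This gives $\tau = 2\pi T = \tfrac{\pi}{M}$ as a candidate minimal time.

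To verify $\tau = \tfrac{\pi}{M}$ actually works and is minimal, I would argue in two directions. For sufficiency, set $T = \tfrac{1}{2M}$: since $M \mid (\alpha_1 - \lambda)$ for every eigenvalue $\lambda$, each product $(\alpha_1 - \lambda)T = (\alpha_1 - \lambda)/(2M)$ is a half-integer, and I must check it is a \emph{full} integer for $\lambda \in \{\alpha_2,\alpha_3,\alpha_4,\gamma_k\}$ but a genuine half-integer for $\lambda = \beta_j$; this is precisely where condition (iii) of Lemma~\ref{lemma3.5} is used, since the strict inequality $\nu_2(\alpha_1 - \lambda) > \mu = \nu_2(\alpha_1 - \beta_j)$ guarantees $(\alpha_1 - \lambda)/M$ is even while $(\alpha_1 - \beta_j)/M$ is odd. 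For minimality, any smaller PST time $\tau' = 2\pi T'$ with $0 < \tau' < \tfrac{\pi}{M}$ would give $T' < T = \tfrac{1}{2M}$ still satisfying all congruences; but the condition $(\alpha_1 - \beta_0)T' \in \tfrac12 + \mathbb{Z}$ together with $M \mid (\alpha_1 - \beta_0)$ forces $T'$ to be an odd multiple of $\tfrac{1}{2}\cdot\tfrac{M}{\alpha_1-\beta_0}$ refined by the gcd, and a $2$-adic valuation count shows no strictly smaller positive $T'$ can simultaneously meet the integer conditions and the half-integer conditions.

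The main obstacle I anticipate is the minimality half: carefully showing that the half-integer constraints coming from the $\beta_j$ eigenvalues, combined with the integer constraints from the others, pin down the denominator of $T$ to be exactly $2M$ and not a proper divisor. This is fundamentally a $2$-adic argument, and the clean way to run it is to set $M = 2^{\mu+1} M'$ with $M'$ odd (the factor $2^{\mu+1}$ coming from $\nu_2(\alpha_1-\beta_0) = \mu$ together with condition (iii)), and then verify that $\nu_2(T) = -(\mu+1)$ and that the odd part of the denominator of $T$ is forced to be $M'$ by the gcd; the techniques cited from~\cite{Tan} handle exactly this interplay between a uniform half-integer valuation and a gcd, so I would adapt their valuation bookkeeping rather than reinvent it.
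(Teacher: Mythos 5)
Your sufficiency half is correct and essentially reproduces the paper's Lemma~\ref{lemma3.7}: taking $T=\frac{1}{2M}$, the fact that $\nu_2(M)=\mu$ (the minimum of the valuations of the differences, attained by the $\alpha_1-\beta_j$) makes $(\alpha_1-\lambda)/M$ even for $\lambda\in\{\alpha_2,\alpha_3,\alpha_4,\gamma_k\}$ and $(\alpha_1-\beta_j)/M$ odd, so conditions (\ref{16})--(\ref{20}) hold at $\tau=\pi/M$. The gap is in the minimality half, which is the actual content of this lemma: you never prove that no $T'$ with $0<T'<\frac{1}{2M}$ satisfies the congruences, deferring it to ``a $2$-adic valuation count'' and to the bookkeeping of~\cite{Tan}. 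Moreover, the bookkeeping you set up is wrong: you write $M=2^{\mu+1}M'$ with $M'$ odd, i.e.\ $\nu_2(M)=\mu+1$. Since $M$ divides $\alpha_1-\beta_0$ and $\nu_2(\alpha_1-\beta_0)=\mu$, in fact $\nu_2(M)=\mu$; your claim contradicts your own sufficiency paragraph (oddness of $(\alpha_1-\beta_j)/M$ forces $\nu_2(M)=\mu$), and if one ran the count with $\nu_2(M)=\mu+1$, then $T=\frac{1}{2M}$ would have $\nu_2(T)=-(\mu+2)$, violating the requirement $\nu_2(T)=-1-\mu$ that the half-integer conditions impose, and the argument would collapse. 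Similarly, ``odd multiple of $\frac{1}{2}\cdot\frac{M}{\alpha_1-\beta_0}$'' is not what $(\alpha_1-\beta_0)T'\in\frac{1}{2}+\mathbb{Z}$ gives (it gives odd multiples of $\frac{1}{2(\alpha_1-\beta_0)}$), and no single such condition pins down the denominator of $T'$; only the joint system does.

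For comparison, the paper closes exactly this gap by computing the full set of PST times: with $M_1=\gcd(\alpha_1-\alpha_2,\alpha_1-\alpha_3,\alpha_1-\alpha_4,\alpha_1-\gamma_k\colon 1\le k\le n-1)$ and $M_2=\gcd(\alpha_1-\beta_j\colon 0\le j\le n-1)$, the congruences are equivalent to $T\in\frac{1}{M_1}\mathbb{Z}\cap\frac{1}{M_2}\left(\frac{1}{2}+\mathbb{Z}\right)$, and the identity $2M_2\mathbb{Z}\cap M_1(1+2\mathbb{Z})=\frac{M_1M_2}{M}(1+2\mathbb{Z})$, proved by solving the diophantine equation $s_2x-s_1y=\frac{s_1}{2}$ with $s_1=M_1/M$ even and $s_2=M_2/M$ odd, yields $\widetilde{T}=\frac{\pi}{M}(1+2\mathbb{Z})\cap\mathbb{R}^{+}$, whence the minimum is $\frac{\pi}{M}$. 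If you want a repair that fits your outline without that intersection identity: any admissible $T'$ makes $2(\alpha_1-\lambda)T'$ an integer for \emph{every} eigenvalue $\lambda\neq\alpha_1$ (each $(\alpha_1-\lambda)T'$ is an integer or lies in $\frac{1}{2}+\mathbb{Z}$), and writing $M$ as an integer combination of the differences $\alpha_1-\lambda$ gives $2MT'\in\mathbb{Z}$, hence $T'\ge\frac{1}{2M}$; combined with your sufficiency step this proves the lemma.
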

\begin{proof} If the graph $\mathrm{Cay}(V_{8n}, S)$ exhibits PST between two distinct vertices $u$ and $v$, then the conditions $(i), (ii)$ and $(iii)$ of Lemma~\ref{lemma3.5} hold.
Let $M_1 = \mathrm{gcd}(\alpha_1 - \alpha_2,
 \alpha_1 - \alpha_3, \alpha_1 - \alpha_4, 
 \alpha_1 - \gamma_k \colon 1 \leq k \leq n - 1)$ and $M_2 = \mathrm{gcd}(\alpha_1 - \beta_j \colon 0 \leq j \leq n - 1)$. From the condition $(iii)$ of Lemma~\ref{lemma3.5}, we have $\nu_2(\alpha_1 - \beta_j) = \mu$. Let 
$\alpha_1 - \beta_j = {m_j}M_2$, where $\mathrm{gcd}(m_j \colon 0 \leq j \leq n - 1) = 1$. Then $\nu_2(m_j) = \mu - \nu_2(M_2) = r$, say. Suppose that $m_j = 2^{r}k_j$, where $k_j$ is odd. Then we must have $r = 0$. Thus we have $\nu_2(M_2) = \mu$. Conditions (\ref{16}), (\ref{17}), (\ref{18}) and (\ref{20}) imply that $T \in \frac{1}{M_1}\mathbb{Z}$ and condition (\ref{19}) implies that  $T \in \frac{1}{M_2}(\frac{1}{2} + \mathbb{Z})$. 

Let  
$\widetilde{T} = \{\tau>0 ~\colon~ \mathrm{Cay}(V_{8n}, S)~ \mathrm{exhibits ~PST~ between~} u~ \mathrm{and}~ v~ \mathrm{at}~ \mathrm{time}~ \tau\}$. Then we get
\begin{align}
\widetilde{T} &= \left(\frac{2\pi}{M_1}\mathbb{Z}\right)\cap \left(\frac{2\pi}{M_2}\left(\frac{1}{2} + \mathbb{Z}\right)\right) \cap \mathbb{R}^{+}
= \frac{\pi}{M_1M_2}(2M_2\mathbb{Z}\cap M_1\left(1 + 2\mathbb{Z})\right)\cap \mathbb{R}^{+}\label{21},
\end{align}
where $\mathbb{R}^{+}$ is the set of all positive real numbers.

Let $M = \mathrm{gcd}(M_1, M_2)$. Indeed, $M = \mathrm{gcd}(\alpha_1 - \alpha \colon \alpha \in \mathrm{Spec}(\mathrm{Cay}(V_{8n}, S))\setminus \{\alpha_1\})$. Write $M_1 = s_1M$ and $M_2 = s_2M$, where $\mathrm{gcd}(s_1, s_2) = 1$. From 
$\nu_2(M_1) \geq \mu + 1$ and $\nu_2(M_2) = \mu$, we get that 
$\nu_2(M) = \mu$, $s_1$ is even and $s_2$ is odd.

We now show that 
\begin{align*}
 2M_2\mathbb{Z}\cap M_1(1 + 2\mathbb{Z})  = \frac{M_1M_2}{M}(1 + 2\mathbb{Z}).
\end{align*}
Since $s_1$ is even and $s_2$ is odd, for $\ell \in \mathbb{Z}$ we find 
\[\frac{M_1M_2}{M}(1 + 2\ell )= 2M_2\left[\frac{s_1}{2}(1 + 2\ell)\right]\in 2M_2\mathbb{Z}\]
and 
\[\frac{M_1M_2}{M}(1 + 2\ell) = M_1\left[1 + 2\left(\frac{s_2-1}{2}+s_2\ell\right)\right] \in M_1(1 +2\mathbb{Z}).\]
Now assume that $z \in 2M_2\mathbb{Z}\cap M_1(1 + 2\mathbb{Z})$. Then $z = 2M_2x$
and $z = M_1(1 +2y)$ for some $x,y \in \mathbb{Z}$. Then $x$ and $y$ satisfy the diophantine equation $s_2x - s_1y = \frac{s_1}{2}$. Since $\mathrm{gcd}(s_1, s_2) = 1$, we find that 
\begin{align*}
x &= \frac{s_1}{2} + s_1\ell ~~\mathrm{and}~~y = \frac{s_2 - 1}{2} + m_2\ell,~\mathrm{where}~\ell \in \mathbb{Z}.
\end{align*}
Therefore
 \[z = 2M_2x = 2M_2\frac{m_1}{2}(1 + 2\ell) = \frac{M_1M_2}{M}(1 + 2\ell) \in \frac{M_1M_2}{M}(1 + 2\mathbb{Z}).\]
Thus
\[2M_2\mathbb{Z}\cap M_1(1 + 2\mathbb{Z})  = \frac{M_1M_2}{M}(1 + 2\mathbb{Z}).\]
Now from (\ref{21}), we get
\begin{align*}
\widetilde{T} &= \left(\frac{\pi}{M} + \frac{2\pi}{M}\mathbb{Z}\right) \cap \mathbb{R}^{+}\\
&= \left \{\frac{\pi}{M} + \frac{2\pi}{M}\ell \colon \ell \in \mathbb{N} \cup \{0\} \right \}.
\end{align*}
Therefore, the minimum time at which $\mathrm{Cay}(V_{8n}, S)$ exhibits PST is $\frac{\pi}{M}$. This completes the proof.
\end{proof}
Now we are going to prove that the conditions $(i)$, $(ii)$ and $(iii)$ of Lemma~\ref{lemma3.5} are also sufficient for the existence of PST in $\mathrm{Cay}(V_{8n}, S)$. We obtain the following lemma.
\begin{lema}\label{lemma3.7} 
Let $u, v \in V_{8n}$ and $\mathrm{Cay}(V_{8n}, S)$ satisfy the following three conditions.
\begin{enumerate}
\item[(i)] $u - v =\pm 4n$.
\item[(ii)] All the eigenvalues of $\mathrm{Cay}(V_{8n}, S)$ are integers.
\item[(iii)] $\nu_2(\alpha_1 - \beta_0) = \nu_2(\alpha_1-\beta_j)$ for $1 \leq j \leq n-1$, and $\nu_2(\alpha_1 - \alpha_2)$,
$\nu_2(\alpha_1 - \alpha_3)$,
$\nu_2(\alpha_1 - \alpha_4)$, 
$\nu_2(\alpha_1 - \gamma_k)$ are strictly greater than $\nu_2(\alpha_1 - \beta_0)$ for $1 \leq k \leq n-1$. 
\end{enumerate}
Then $\mathrm{Cay}(V_{8n}, S)$ exhibits PST between $u$ and $v$.
\end{lema}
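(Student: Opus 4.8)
The plan is to prove sufficiency by exhibiting one explicit time at which PST occurs, reusing the ``if and only if'' analysis already carried out in the proof of Lemma~\ref{lemma3.5}. Since $u - v = \pm 4n$, the vertices $u$ and $v$ lie in complementary blocks (either $u \in V_1, v \in V_3$ or $u \in V_2, v \in V_4$, up to swapping $u$ and $v$), so the formula for $H(\tau)_{uv}$ computed in Lemma~\ref{lemma3.5} applies verbatim. In that proof it was shown that $\lvert H(\tau)_{uv} \rvert \leq 1$, with equality precisely when the seven exponential identities hold, and that for $u - v = \pm 4n$ these collapse to the conditions~(\ref{16})--(\ref{20}) (writing $\tau = 2\pi T$). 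Thus it suffices to produce a single $T$ with $\tau = 2\pi T > 0$ satisfying~(\ref{16})--(\ref{20}) simultaneously.

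The natural candidate, motivated by Lemma~\ref{lemma 3.6}, is $T = \frac{1}{2M}$, that is $\tau_0 = \frac{\pi}{M}$, where $M = \mathrm{gcd}(\alpha_1 - \alpha \colon \alpha \in \mathrm{Spec}(\mathrm{Cay}(V_{8n}, S)) \setminus \{\alpha_1\})$. Condition~(ii) guarantees that every difference $\alpha_1 - \alpha$ is an integer, so $M$ is a well-defined positive integer (positive because $\alpha_1 = \lvert S \rvert$ is a simple eigenvalue, so the indexing set is nonempty) and $M$ divides $\alpha_1 - \alpha$ for every eigenvalue $\alpha$. The first step is to pin down $\nu_2(M)$. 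Writing $\mu := \nu_2(\alpha_1 - \beta_0)$, condition~(iii) asserts that $\nu_2(\alpha_1 - \beta_j) = \mu$ for all $j$, while $\nu_2(\alpha_1 - \alpha_2)$, $\nu_2(\alpha_1 - \alpha_3)$, $\nu_2(\alpha_1 - \alpha_4)$ and $\nu_2(\alpha_1 - \gamma_k)$ all exceed $\mu$. Since the $2$-adic valuation of a gcd is the minimum of the valuations of its entries, this minimum is attained by the $\alpha_1 - \beta_j$ terms, giving $\nu_2(M) = \mu$.

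It then remains to verify~(\ref{16})--(\ref{20}) at $T = \frac{1}{2M}$. For each eigenvalue $\alpha \neq \alpha_1$, write $\alpha_1 - \alpha = c\,M$ with $c \in \mathbb{Z}$. For $\alpha \in \{\alpha_2, \alpha_3, \alpha_4\} \cup \{\gamma_k \colon 1 \leq k \leq n-1\}$, condition~(iii) gives $\nu_2(\alpha_1 - \alpha) > \mu = \nu_2(M)$, hence $\nu_2(c) \geq 1$, so $c$ is even and $(\alpha_1 - \alpha)T = \frac{c}{2} \in \mathbb{Z}$; this establishes~(\ref{16}), (\ref{17}), (\ref{18}) and~(\ref{20}). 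For $\alpha = \beta_j$ we have $\nu_2(\alpha_1 - \beta_j) = \mu = \nu_2(M)$, so the cofactor $c$ is \emph{odd}, whence $(\alpha_1 - \beta_j)T - \frac{1}{2} = \frac{c-1}{2} \in \mathbb{Z}$, which is exactly~(\ref{19}). All of~(\ref{16})--(\ref{20}) then hold at $\tau_0 = \frac{\pi}{M} > 0$, so $\lvert H(\tau_0)_{uv} \rvert = 1$ and PST occurs. The one point that genuinely demands care is the half-integer condition~(\ref{19}): it forces the $\beta_j$-cofactors to be odd, and this is precisely why condition~(iii) must require the \emph{equality} $\nu_2(\alpha_1 - \beta_j) = \mu$ (not merely an inequality) in tandem with the \emph{strict} inequalities for the remaining eigenvalues; everything else reduces to routine arithmetic with $\nu_2$.
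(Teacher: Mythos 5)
Your proposal is correct and takes essentially the same approach as the paper: both exhibit the explicit time $\tau_0 = \pi/M$ (the paper allows $\tau = \frac{\pi}{M}(1+2\ell)$) and verify the collapsed conditions (\ref{16})--(\ref{20}) via the 2-adic valuation hypotheses. The only difference is bookkeeping---the paper factors through $M_1$, $M_2$ and the parities of the cofactors $s_1$, $s_2$, $m_j$, whereas you argue directly that $\nu_2(M)=\nu_2(\alpha_1-\beta_0)$ and read off the parity of each cofactor $c$ in $\alpha_1-\alpha = cM$; this is a mild streamlining, not a different argument.
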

\begin{proof}
Let $M = \mathrm{gcd}(M_1, M_2)$, where $M_1 = \mathrm{gcd}(\alpha_1 - \alpha_2,
 \alpha_1 - \alpha_3, \alpha_1 - \alpha_4, 
 \alpha_1 - \gamma_k \colon 1 \leq k \leq n - 1)$ and $M_2 = \mathrm{gcd}(\alpha_1 - \beta_j \colon 0 \leq j \leq n - 1)$.  Let  $T = \frac{1}{2M}(1 + 2 \ell)$, where $\ell \in \mathbb{N} \cup \{0\}$.  We prove that $\mathrm{Cay}(V_{8n}, S)$ exhibits PST between $u$ and $v$ at time $\tau $, where $\tau  = 2 \pi T$. 
 
Let 
$\alpha_1 - \beta_j = {m_j}M_2$, $M_1 = s_1M$ and $M_2 = s_2M$, where $\mathrm{gcd}(m_j \colon 0 \leq j \leq n - 1) = 1$ and $\mathrm{gcd}(s_1, s_2) = 1$. As in the proof of Lemma~\ref{lemma 3.6}, we find that each $m_j$ is odd, $s_1$ is even and $s_2$ is odd.

Let $\alpha_1 - \alpha_2 = k_1 M_1$ for some $k_1 \in \mathbb{Z}$. As $s_1$ is even, we have 
$$(\alpha_1 - \alpha_2)T = k_1s_1M.\frac{1}{2M}(1+2\ell) = \frac{k_1 s_1}{2}(1 + 2 \ell) \in \mathbb{Z}.$$
 Similarly, we find $({\alpha}_1-{\alpha}_3) T \in \mathbb{Z}$, $({\alpha}_1-{\alpha}_4) T \in \mathbb{Z}$ and $({\alpha}_1-{\gamma}_k) T \in \mathbb{Z}$ for $1 \leq k \leq n-1$.
In a similar way, as both $m_j$ and $s_2$ are odd, we find $({\alpha}_1-{\beta}_j) T - \frac{1}{2} \in \mathbb{Z}$ for $0 \leq j \leq n-1$. Thus for $0 \leq j \leq n-1$  and $1 \leq k \leq n-1$, we have 
\begin{align}
&({\alpha}_1-{\alpha}_2) T \in \mathbb{Z},\label{22}\\
&({\alpha}_1-{\alpha}_3) T \in \mathbb{Z},\label{23}\\
&({\alpha}_1-{\alpha}_4) T \in \mathbb{Z},\label{24}\\
&({\alpha}_1-{\beta}_j) T - \frac{1}{2}\in\mathbb{Z}~\mathrm{and}\label{25}\\
&({\alpha}_1-{\gamma}_k) T \in \mathbb{Z}\label{26}.
\end{align}
As $u -v =\pm 4n$, we see that the conditions (\ref{22}) to (\ref{26}) are equivalent to the conditions (\ref{9}) to (\ref{15}) of Lemma~\ref{lemma3.5}. Now it follows from the conditions (\ref{9}) to (\ref{15}) that $ \mid H(\tau )_{uv} \mid = 1 $. Hence 
$\mathrm{Cay}(V_{8n}, S)$ exhibits PST between $u$ and $v$ at time $\tau $.
\end{proof}


We now combine the preceding lemmas to present the main result of this subsection. 
\begin{theorem}
Let $u$ and $v$ be two distinct vertices of a connected normal Cayley graph  $\mathrm{Cay}(V_{8n}, S)$. Then $\mathrm{Cay}(V_{8n}, S)$ exhibits PST between $u$ and $v$ if and only if the following three conditions hold.
\begin{enumerate}
\item[(i)] $ u - v = \pm 4n$.

\item[(ii)] All the eigenvalues of $\mathrm{Cay}(V_{8n}, S)$ are integers.

\item[(iii)] $\nu_2(\alpha_1 - \beta_0) = \nu_2(\alpha_1-\beta_j)$ for $1 \leq j \leq n-1$, and $\nu_2(\alpha_1 - \alpha_2)$,
$\nu_2(\alpha_1 - \alpha_3)$,
$\nu_2(\alpha_1 - \alpha_4)$, 
$\nu_2(\alpha_1 - \gamma_k)$ are strictly greater than $\nu_2(\alpha_1 - \beta_0)$ for $1 \leq k \leq n-1$. 
\end{enumerate}
Furthermore, when the conditions $(i), (ii)$ and $(iii)$ hold, the minimum time at which $\mathrm{Cay}(V_{8n}, S)$ exhibits PST is $\frac{\pi}{M}$, where $M = \mathrm{gcd}(\alpha_1 - \alpha \colon \alpha \in \mathrm{Spec}(\mathrm{Cay}(V_{8n}, S))\setminus \{\alpha_1\})$.
\end{theorem}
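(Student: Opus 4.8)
The statement is an equivalence together with an assertion about the optimal PST time, and all three parts have essentially been isolated in the preceding lemmas; the plan is therefore to assemble them rather than to carry out fresh computation. For the necessity direction I would assume that $\mathrm{Cay}(V_{8n}, S)$ exhibits PST between the distinct vertices $u$ and $v$ and invoke Lemma~\ref{lemma3.5} verbatim, which delivers conditions $(i)$, $(ii)$ and $(iii)$. It is worth recalling here that Lemma~\ref{lemma3.5} already absorbs the case analysis of Lemmas~\ref{lemma3.1}--\ref{lemma3.4}: those four lemmas rule out every pair $(u,v)$ except those with one vertex in $V_1$ and the other in $V_3$, or one in $V_2$ and the other in $V_4$, and then a parity argument on $u+v$ (forced to be even by the $\beta_j$-equations) pins $u-v$ to $\pm 4n$ rather than $\pm 3n$ or $\pm 5n$.

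For the sufficiency direction I would assume $(i)$, $(ii)$ and $(iii)$ and apply Lemma~\ref{lemma3.7}. One small point to record first is that $(i)$, namely $u - v = \pm 4n$, already forces $u \neq v$, so the hypothesis that $u$ and $v$ are distinct is automatic and perfectly consistent with Lemma~\ref{lemma3.7}, which is stated for arbitrary $u, v \in V_{8n}$. Applying that lemma then produces PST between $u$ and $v$ at the time $\tau = 2\pi T$ constructed there, completing the equivalence.

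Finally, for the minimum-time claim, once PST is known to occur I would apply Lemma~\ref{lemma 3.6}, which gives the minimum PST time as $\frac{\pi}{M}$ with $M = \mathrm{gcd}(\alpha_1 - \alpha \colon \alpha \in \mathrm{Spec}(\mathrm{Cay}(V_{8n}, S)) \setminus \{\alpha_1\})$. Since no genuinely new argument is needed, there is no substantive obstacle; the only matters demanding care are bookkeeping ones. I would verify that the valuation hypotheses in $(iii)$ are phrased identically in Lemmas~\ref{lemma3.5} and~\ref{lemma3.7}, so that the two implications are exact converses of one another, and I would note that the integrality assertion $(ii)$ is precisely what makes the eigenvalue differences $\alpha_1 - \alpha$ integers, thereby legitimizing the application of $\nu_2$ to them in $(iii)$ and the $\mathrm{gcd}$ in the minimum-time formula.
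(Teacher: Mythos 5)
Your proposal is correct and is exactly what the paper does: the paper gives no separate argument for this theorem, stating only that it ``combines the preceding lemmas,'' i.e.\ necessity from Lemma~\ref{lemma3.5} (which already rests on Lemmas~\ref{lemma3.1}--\ref{lemma3.4}), sufficiency from Lemma~\ref{lemma3.7}, and the minimum time $\frac{\pi}{M}$ from Lemma~\ref{lemma 3.6}. Your added bookkeeping checks (identical phrasing of condition $(iii)$ in the two lemmas, and $(i)$ forcing $u \neq v$) are sound and only make the assembly more explicit.
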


\subsection{PST on normal $\mathrm{Cay}(V_{8n}, S)$ for even $n$}\label{subsection 3.2}
In this subsection, we present a characterization of connected normal Cayley graphs $\mathrm{Cay}(V_{8n}, S)$ exhibiting PST. From Subsection~\ref{subsection 2.2}, we recall that the adjacency matrix of $\mathrm{Cay}(V_{8n}, S)$ has the eigenvectors
$\textbf{u}_s,{\textbf{u}_j}^{(i)}$ and ${\textbf{v}_k}^{(i)}$ with corresponding eigenvalues $\alpha_s, \beta_j $ and $\gamma_k$, respectively, for  
$1\leq s \leq 8$, $1\leq i\leq 4, 1\leq j \leq n-1$ and $1\leq k\leq n-1$. Further, these eigenvectors form an orthonormal basis of $\mathbb{C}^{8n}$.

Then the projective matrices corresponding to the eigenvectors $\textbf{u}_1, \textbf{u}_3, \textbf{u}_5$ and $\textbf{u}_7$ are given by
\begin{align*}
E_1 &= \textbf{u}_1\textbf{u}_1^* =  \frac {1} {8n}J_{8n},
\hspace{5.3cm} E_3 = \textbf{u}_3\textbf{u}_3^* = \frac{1}{8n} [e_3(u,v)],\\
E_5 &= \textbf{u}_5\textbf{u}_5^* =  \frac{1}{8n} \begin{pmatrix}
J_{2n} & -J_{2n}&  J_{2n} & -J_{2n}\\
-J_{2n} & J_{2n} & -J_{2n} & J_{2n}\\
J_{2n} & -J_{2n} & J_{2n} & -J_{2n}\\
-J_{2n} &  J_{2n} & -J_{2n}& J_{2n}
\end{pmatrix}~\text{ and }~
E_7 = \textbf{u}_7\textbf{u}_7^* = \frac{1}{8n}[(-1)^{u+v}],\\
\end{align*}
respectively, where $u,v\in\{0,1\ldots, {8n-1}\}$ and 
\[e_3(u,v)= \left\{ \begin{array}{ll} (-1)^{u+v} & \textrm{if } u,v\in V_1 \cup V_3 \textrm{ or } u,v\in V_2 \cup V_4\\
 (-1)^{u+v+1} & \textrm{if }u\in V_1 \cup V_3 , v\in V_2 \cup V_4  \textrm{ or } u\in V_2 \cup V_4 , v\in V_1 \cup V_3. \end{array}\right. \]

Further, the projective matrices corresponding to the eigenvectors $\textbf{u}_2, \textbf{u}_4, \textbf{u}_6$ and $\textbf{u}_8$ are given by 
\begin{align*}
&E_2 = {\textbf{u}_2}{\textbf{u}_2}^* = \frac{1}{8n}\begin{pmatrix}
X & \textbf{i}X & -X & -\textbf{i}X \\
-\textbf{i}X & X & \textbf{i}X & -X \\
-X & -\textbf{i}X & X & \textbf{i}X \\
\textbf{i}X & -X & -\textbf{i}X & X
 \end{pmatrix},\hspace{0.6cm}
E_4 = {\textbf{u}_4}{\textbf{u}_4}^* = \frac{1}{8n}\begin{pmatrix}
Y & -\textbf{i}Y & -Y & \textbf{i}Y \\
\textbf{i}Y & Y & -\textbf{i}Y & -Y \\
-Y & \textbf{i}Y & Y & -\textbf{i}Y \\
-\textbf{i}Y & -Y & \textbf{i}Y & Y
 \end{pmatrix},\\
&E_6 = {\textbf{u}_6}{\textbf{u}_6}^* = \frac{1}{8n}\begin{pmatrix}
X & -\textbf{i}X & -X & \textbf{i}X \\
\textbf{i}X & X & -\textbf{i}X & -X \\
-X & \textbf{i}X & X & -\textbf{i}X \\
-\textbf{i}X & -X & \textbf{i}X & X
 \end{pmatrix}~\mathrm{and}~
E_8 = {\textbf{u}_8}{\textbf{u}_8}^* = \frac{1}{8n}\begin{pmatrix}
Y & \textbf{i}Y & -Y & -\textbf{i}Y \\
-\textbf{i}Y & Y & \textbf{i}Y & -Y \\
-Y & -\textbf{i}Y & Y & \textbf{i}Y \\
\textbf{i}Y & -Y & -\textbf{i}Y & Y
 \end{pmatrix},
\end{align*}
respectively, where $X = [{\textbf{i}}^{u - v}], Y = [{(-\textbf{i})}^{u - v}]$ and  $u,v\in\{0,1,\ldots, {2n-1}\}$.

The projective matrices corresponding to the eigenvectors $\textbf{u}_j^{(1)}, \textbf{u}_j^{(2)}, \textbf{u}_j^{(3)}$ and $\textbf{u}_j^{(4)}$, where $1 \leq j \leq n-1$, are given by
\begin{align*}
{E_j}^{(1)} = \textbf{u}_j^{(1)}{\textbf{u}_j^{(1)}}^* =      \frac{1}{4n}\begin{pmatrix}
                X_1 & \textbf{0} & X_1 & \textbf{0} \\
                \textbf{0} & \textbf{0} & \textbf{0} & \textbf{0}  \\
                X_1 & \textbf{0} & X_1 & \textbf{0}  \\
                \textbf{0} & \textbf{0} & \textbf{0} & \textbf{0}
                \end{pmatrix},\hspace{0.7cm}
{E_j}^{(2)} = \textbf{u}_j^{(2)}{\textbf{u}_j^{(2)}}^* =       \frac{1}{4n}\begin{pmatrix}
                \textbf{0} & \textbf{0} & \textbf{0} & \textbf{0} \\
                \textbf{0} & X_1 & \textbf{0} & X_1 \\
                \textbf{0} & \textbf{0} & \textbf{0} & \textbf{0} \\
                \textbf{0} & X_1 & \textbf{0} & X_1
                \end{pmatrix},
\end{align*}
\begin{align*}
{E_j}^{(3)} = \textbf{u}_j^{(3)}{\textbf{u}_j^{(3)}}^* = \frac{1}{4n}\begin{pmatrix}
                \textbf{0} & \textbf{0} & \textbf{0} & \textbf{0} \\
                \textbf{0} & X_2 & \textbf{0} & X_2 \\
                \textbf{0} & \textbf{0} & \textbf{0} & \textbf{0} \\
                \textbf{0} & X_2 & \textbf{0} & X_2
               \end{pmatrix}~\mathrm{and} ~~~
{E_j}^{(4)} = \textbf{u}_j^{(4)}{\textbf{u}_j^{(4)}}^* = \frac{1}{4n}\begin{pmatrix}
                X_2 & \textbf{0} & X_2 & \textbf{0} \\
                \textbf{0} & \textbf{0} & \textbf{0} & \textbf{0} \\
                X_2 & \textbf{0} & X_2 & \textbf{0} \\
                \textbf{0} & \textbf{0} & \textbf{0} & \textbf{0}
                \end{pmatrix},
\end{align*}
respectively. Here $\textbf{0}$ is the zero matrix of order $2n$; $X_1$ and $X_2$ are circulant matrices with first row $[1, \omega^{-j},\ldots, \omega^{-(2n-1)j}]$ and $[1, \omega^{j},\ldots, \omega^{(2n-1)j}]$, respectively. 

The projective matrices corresponding to the eigenvectors $\textbf{v}_k^{(1)}, \textbf{v}_k^{(2)}, \textbf{v}_k^{(3)}$ and $\textbf{v}_k^{(4)}$, where $1 \leq k \leq n-1$, are given by
\begin{align*}                
{F_k}^{(1)} = \textbf{v}_k^{(1)}{\textbf{v}_k^{(1)}}^* = \frac{1}{4n}\begin{pmatrix}
               Y_1 & \textbf{0} & -Y_1 & \textbf{0} \\
               \textbf{0} & \textbf{0} & \textbf{0} & \textbf{0} \\
               -Y_1 & \textbf{0} & Y_1 & \textbf{0} \\
               \textbf{0} & \textbf{0} & \textbf{0} & \textbf{0}
               \end{pmatrix},\hspace{1cm}
{F_k}^{(2)} = \textbf{v}_k^{(2)}{\textbf{v}_k^{(2)}}^* = \frac{1}{4n}\begin{pmatrix}
               \textbf{0} & \textbf{0} & \textbf{0} & \textbf{0} \\
               \textbf{0} & Y_1 & \textbf{0} & -Y_1 \\
               \textbf{0} & \textbf{0} & \textbf{0} & \textbf{0} \\
               \textbf{0} & -Y_1 & \textbf{0} & Y_1
               \end{pmatrix},
\end{align*}
\begin{align*}
{F_k}^{(3)} = \textbf{v}_k^{(3)}{\textbf{v}_k^{(3)}}^* = \frac{1}{4n}\begin{pmatrix}
              \textbf{0} & \textbf{0} & \textbf{0} & \textbf{0} \\
              \textbf{0} & Y_2 & \textbf{0} & -Y_2 \\
              \textbf{0} & \textbf{0} & \textbf{0} & \textbf{0} \\
              \textbf{0} & -Y_2 & \textbf{0} & Y_2
              \end{pmatrix}~~~\mathrm{and}~~~
{F_k}^{(4)} = \textbf{v}_k^{(4)}{\textbf{v}_k^{(4)}}^* = \frac{1}{4n}\begin{pmatrix}
               Y_2 & \textbf{0} & -Y_2 & \textbf{0} \\
               \textbf{0} & \textbf{0} & \textbf{0} & \textbf{0} \\
               -Y_2 & \textbf{0} & Y_2 & \textbf{0} \\
               \textbf{0} & \textbf{0} & \textbf{0} & \textbf{0}
               \end{pmatrix},
\end{align*}
respectively. Here $\textbf{0}$ is the zero matrix of order $2n$; $Y_1$ and $Y_2$ are circulant matrices with first row $[1, {\textbf{i}}^{-1}\omega^{-k},\ldots, {\textbf{i}}^{-(2n-1)}\omega^{-(2n-1)k}]$ and $[1, {\textbf{i}}^{-1}\omega^{k},\ldots, {\textbf{i}}^{-(2n-1)}\omega^{(2n-1)k}]$, respectively.            

Thus, the transition matrix $H(\tau )$ of $\mathrm{Cay}(V_{8n}, S)$ is given by
\begin{align}
H(\tau ) &= \exp(-\mathbf{i}\tau {\alpha}_1)E_1 + \exp(-\mathbf{i}\tau {\alpha}_2)E_2 + \exp(-\mathbf{i}\tau {\alpha}_3)E_3 + 
\exp(-\mathbf{i}\tau {\alpha}_4)E_4 \nonumber\\
& +\exp(-\mathbf{i}\tau {\alpha}_5)E_5 + \exp(-\mathbf{i}\tau {\alpha}_6)E_6 + \exp(-\mathbf{i}\tau {\alpha}_7)E_7 + 
\exp(-\mathbf{i}\tau {\alpha}_8)E_8 \nonumber\\
&+\sum_{j=1}^{n-1}\exp(-\mathbf{i}\tau {\beta}_j)(E_j^{(1)} + E_j^{(2)} +                 E_j^{(3)} + E_j^{(4)}) 
+\sum_{k=1}^{n-1}\exp(-\mathbf{i}\tau {\gamma}_k)(F_k^{(1)} + F_k^{(2)} +                 F_k^{(3)} + F_k^{(4)})\label{27}.
\end{align}
 Using~(\ref{27}) and the eigenprojectors obtained in the preceding discussion, the $uv$-th entry of $H(\tau )$ can be easily determined.
\begin{lema}\label{lemma3.8}
If $u \in V_1, v \in V_2$ $\mathrm{or}$ $u \in V_2, v \in V_1$, then $\mathrm{Cay}(V_{8n}, S)$ cannot exhibit PST between the vertices $u$ and $v$.
\end{lema}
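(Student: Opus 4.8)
The plan is to mimic Case~1 of the proof of Lemma~\ref{lemma3.1}: read off the $uv$-th entry of $H(\tau)$ directly from the spectral expansion~(\ref{27}) together with the explicit eigenprojectors listed above, and then bound it by the triangle inequality. Since $u\in V_1$ and $v\in V_2$, only the $(V_1,V_2)$-block (first block-row, second block-column) of each projector is relevant.

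First I would observe that all of the two-dimensional projectors $E_j^{(1)},\dots,E_j^{(4)}$ and $F_k^{(1)},\dots,F_k^{(4)}$ (for $1\le j,k\le n-1$) have a vanishing $(V_1,V_2)$-block: by inspection each of these matrices is supported only on the blocks indexed by $V_1\cup V_3$ or only on those indexed by $V_2\cup V_4$, so none of them meets the $(V_1,V_2)$ position. Hence the only surviving contributions come from the eight one-dimensional projectors $E_1,\dots,E_8$. Reading off the corresponding block entries (using $X=[\textbf{i}^{u-v}]$, $Y=[(-\textbf{i})^{u-v}]$ and $\textbf{i}^{2n}=1$ for even $n$), each $E_s$ contributes a single term $\tfrac{1}{8n}c_s\exp(-\textbf{i}\tau\alpha_s)$, where $c_s$ is a scalar of modulus $1$ — namely $\pm1$, $(-1)^{u+v}$, $(-1)^{u+v+1}$, or a power of $\textbf{i}$ coming from $E_2,E_4,E_6,E_8$.

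The triangle inequality then gives
\[
\lvert H(\tau)_{uv}\rvert \;\le\; \sum_{s=1}^{8}\frac{1}{8n}\,\lvert c_s\rvert \;=\; \frac{8}{8n}\;=\;\frac{1}{n}.
\]
Since $n$ is an even positive integer, $n\ge 2$, so $\lvert H(\tau)_{uv}\rvert\le \tfrac1n\le\tfrac12<1$ for every $\tau$, and therefore $\mathrm{Cay}(V_{8n},S)$ cannot exhibit PST between $u$ and $v$. The case $u\in V_2,\,v\in V_1$ is identical: $A$ is real symmetric, so $H(\tau)$ is symmetric and $H(\tau)_{uv}=H(\tau)_{vu}$ yields the same bound.

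The only delicate point — and it is pure bookkeeping rather than a genuine obstacle — is verifying carefully that every two-dimensional projector indeed has a zero $(V_1,V_2)$-block and that the eight one-dimensional coefficients $c_s$ are all unimodular. Once these are checked, the bound $\tfrac1n<1$ is immediate and robust to the exact values of the phases, so, unlike the $V_1\cup V_3$ case, no analysis of integrality or $2$-adic valuations is needed in this regime.
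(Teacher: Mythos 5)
Your proof is correct and takes essentially the same route as the paper: the paper likewise reads off $H(\tau)_{uv}$ from equation~(\ref{27}), where only the eight one-dimensional projectors contribute (each a unimodular coefficient times $\tfrac{1}{8n}\exp(-\textbf{i}\tau\alpha_s)$), and concludes $\lvert H(\tau)_{uv}\rvert \leq \tfrac{8}{8n} = \tfrac{1}{n} < 1$. The only cosmetic difference is that the paper writes the eight coefficients out explicitly, while you argue their unimodularity in bulk and add the (correct) symmetry remark to dispatch the case $u \in V_2,\, v \in V_1$.
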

\begin{proof}
It is enough to consider the case that $u \in V_1, v \in V_2$. In this case, from ~(\ref{27}) we have
\begin{align*}
H(\tau )_{uv} = &\frac{1}{8n} (\exp(-\textbf{i}\tau {\alpha}_1) + {\textbf{i}}^{u - v + 1}\exp(-\textbf{i}\tau {\alpha}_2) + (-1)^{u+v+1}\exp(-\textbf{i}\tau {\alpha}_3) +(-\textbf{i})^{u-v+1}\exp(-\textbf{i}\tau {\alpha}_4)\\ 
&-\exp(-\textbf{i}\tau {\alpha}_5) + {\textbf{i}}^{u-v+3}\exp(-\textbf{i}\tau {\alpha}_6) + (-1)^{u+v}\exp(-\textbf{i}\tau {\alpha}_7) +(-\textbf{i})^{u-v+3}\exp(-\textbf{i}\tau {\alpha}_8)).
\end{align*}
Therefore $\mid H(\tau )_{uv} \mid \leq \frac{1}{8n} \times 8 = \frac{1}{n} < 1$. Hence $\mathrm{Cay}(V_{8n}, S)$  cannot exhibit PST between $u$ and $v$.
\end{proof}
Using the same technique as in Lemma~\ref{lemma3.8}, we have the following three lemmas. The proofs are omitted to avoid repetitive arguments. 

\begin{lema}\label{lemma3.9}
If $u \in V_1, v \in V_4$ $\mathrm{or}$ $u \in V_4, v \in V_1$, then $\mathrm{Cay}(V_{8n}, S)$ cannot exhibit PST between the vertices $u$ and $v$.
\end{lema}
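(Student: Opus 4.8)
The plan is to mimic the proof of Lemma~\ref{lemma3.8} exactly, so the only real work is computing the $uv$-th entry of $H(\tau)$ when $u \in V_1$ and $v \in V_4$ and showing it has modulus strictly less than $1$. First I would read off the relevant matrix entries from the eigenprojectors assembled just before (\ref{27}). For the one-dimensional representations, $E_1$ contributes $1$, while $E_3, E_5, E_7$ contribute signs $e_3(u,v)$, the $(1,4)$-block sign of $E_5$ (namely $-1$), and $(-1)^{u+v}$; the four complex projectors $E_2, E_4, E_6, E_8$ contribute their respective $(1,4)$-block entries $-\textbf{i}X = -\textbf{i}\,\textbf{i}^{u-v}$, $\textbf{i}Y = \textbf{i}(-\textbf{i})^{u-v}$, $\textbf{i}X$, and $-\textbf{i}Y$ (up to the exact powers recorded in the definitions of $X$ and $Y$). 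Crucially, when $u \in V_1$ and $v \in V_4$, every one of the two-dimensional projectors $E_j^{(i)}$ and $F_k^{(i)}$ has a zero block in the $(1,4)$ position: the $\psi_j$-projectors have nonzero entries only in the $(V_1,V_1),(V_1,V_3),(V_3,V_1),(V_3,V_3)$ and $(V_2,V_2),(V_2,V_4),(V_4,V_2),(V_4,V_4)$ positions, and similarly for the $\phi_k$-projectors. Hence the two summations over $j$ and $k$ in (\ref{27}) vanish entirely for this pair of vertices.

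This is exactly the phenomenon that made Lemma~\ref{lemma3.8} work: only the eight simple eigenvalues survive. So I would write
\begin{align*}
H(\tau)_{uv} = \frac{1}{8n}\bigl(c_1 \exp(-\textbf{i}\tau\alpha_1) + \cdots + c_8 \exp(-\textbf{i}\tau\alpha_8)\bigr),
\end{align*}
where each coefficient $c_s$ is a root of unity (drawn from $\{\pm 1, \pm\textbf{i}\}$) determined by the $(1,4)$-block entry of $E_s$. Since $\lvert c_s \rvert = 1$ for each $s$ and there are exactly eight terms, the triangle inequality gives $\lvert H(\tau)_{uv}\rvert \le \frac{1}{8n}\times 8 = \frac{1}{n}$. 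As $n \ge 2$ whenever $n$ is even, this forces $\lvert H(\tau)_{uv}\rvert \le \frac{1}{n} < 1$, so PST between $u$ and $v$ is impossible. By symmetry of the transition matrix (or by repeating the computation with the roles of $V_1$ and $V_4$ swapped, using the $(4,1)$-blocks), the case $u \in V_4, v \in V_1$ is identical, which is why the statement bundles both.

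The only genuinely delicate point is bookkeeping the eight phase factors $c_s$ correctly from the definitions of $X=[\textbf{i}^{u-v}]$ and $Y=[(-\textbf{i})^{u-v}]$ together with the block positions of $E_2,\ldots,E_8$; a sign or power-of-$\textbf{i}$ slip there would not affect the final bound (since all $\lvert c_s\rvert=1$), but I would still verify it to present the entry honestly. The main conceptual obstacle—confirming that all the $2$-dimensional contributions vanish—is resolved immediately by the block structure of the projectors, so no detailed analysis of the $\beta_j$ or $\gamma_k$ terms is needed. I would therefore present the proof in the same compressed style as Lemma~\ref{lemma3.8}: exhibit $H(\tau)_{uv}$ as a sum of eight unimodular terms, invoke the triangle inequality, and conclude $\lvert H(\tau)_{uv}\rvert \le \frac{1}{n} < 1$.
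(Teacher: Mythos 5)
Your proposal is correct and takes essentially the same approach the paper intends: the paper omits this proof, stating it follows by the same technique as Lemma~\ref{lemma3.8}, and that is exactly what you reproduce --- the block structure of the two-dimensional projectors $E_j^{(i)}$ and $F_k^{(i)}$ kills all $\beta_j$ and $\gamma_k$ contributions at the $(V_1,V_4)$ positions, leaving only the eight unimodular one-dimensional terms, so $\lvert H(\tau)_{uv}\rvert \le \frac{1}{n} < 1$. Your explicit remark that $n \ge 2$ (since $n$ is even in this subsection) is a point the paper leaves implicit, and your bookkeeping of the $(1,4)$-block phases is consistent with the displayed projectors.
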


\begin{lema}\label{lemma3.10}
If $u \in V_2, v \in V_3$ $\mathrm{or}$ $u \in V_3, v \in V_2$, then $\mathrm{Cay}(V_{8n}, S)$ cannot exhibit PST between the vertices $u$ and $v$.
\end{lema}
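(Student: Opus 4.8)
The plan is to mirror the computation of Lemma~\ref{lemma3.8} verbatim, now extracting the $(V_2,V_3)$ block of the transition matrix instead of the $(V_1,V_2)$ block. Since $A$ is a real symmetric matrix, $H(\tau)=\exp(-\textbf{i}\tau A)$ is complex symmetric, so $|H(\tau)_{uv}|=|H(\tau)_{vu}|$ and it suffices to treat the case $u\in V_2$, $v\in V_3$. I would write $u=2n+u'$ and $v=4n+v'$ with $u',v'\in\{0,\ldots,2n-1\}$, so that the local indices line up with the $2n\times 2n$ sub-blocks displayed above.

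First I would substitute into the spectral expansion~(\ref{27}) and read off the $(V_2,V_3)$ entry of each eigenprojector. The key structural observation — and the only step needing genuine verification — is that every projector attached to a two-dimensional representation vanishes on this block. Inspecting the displayed forms, each of $E_j^{(1)},E_j^{(4)},F_k^{(1)},F_k^{(4)}$ has its nonzero blocks confined to the rows and columns indexed by $V_1\cup V_3$, while each of $E_j^{(2)},E_j^{(3)},F_k^{(2)},F_k^{(3)}$ is supported on $V_2\cup V_4$; in either case the $(V_2,V_3)$ block is the zero matrix of order $2n$. Consequently the entire double sum over $j$ and $k$ in~(\ref{27}) contributes nothing to $H(\tau)_{uv}$, and only the eight one-dimensional projectors $E_1,\ldots,E_8$ survive.

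It then remains to collect those eight terms. Reading the $(V_2,V_3)$ entries off $E_1,\ldots,E_8$ gives, up to the common prefactor $\frac{1}{8n}$, the scalars $1$, $\textbf{i}^{\,u'-v'+1}$, $(-1)^{u+v+1}$, $(-\textbf{i})^{u'-v'+1}$, $-1$, $(-\textbf{i})\textbf{i}^{\,u'-v'}$, $(-1)^{u+v}$, and $\textbf{i}(-\textbf{i})^{u'-v'}$, each of modulus one and each multiplying a factor $\exp(-\textbf{i}\tau\alpha_s)$ of modulus one. The triangle inequality then yields $|H(\tau)_{uv}|\le 8\cdot\frac{1}{8n}=\frac{1}{n}$, and since $n$ is even we have $n\ge 2$, so $\frac{1}{n}<1$. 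Hence $|H(\tau)_{uv}|$ can never equal $1$, and $\mathrm{Cay}(V_{8n},S)$ admits no PST between $u$ and $v$.

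I expect the only real obstacle to be bookkeeping rather than ideas: confirming that the block supports of all the two-dimensional projectors are exactly $V_1\cup V_3$ or $V_2\cup V_4$ and never straddle the $(V_2,V_3)$ block, and checking the eight one-dimensional scalars against the block forms of $E_1,\ldots,E_8$. Once this sparsity pattern is verified, the magnitude bound is immediate and identical in spirit to Lemma~\ref{lemma3.8}, which is why the authors omit it.
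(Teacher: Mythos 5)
Your proposal is correct and is exactly the argument the paper intends: the paper omits this proof precisely because it is the same computation as Lemma~\ref{lemma3.8}, namely checking that all the two-dimensional eigenprojectors vanish on the $(V_2,V_3)$ block of~(\ref{27}) and then bounding the eight surviving one-dimensional contributions by $\frac{1}{8n}\times 8=\frac{1}{n}<1$. Your block-support verification and the eight modulus-one scalars you read off $E_1,\ldots,E_8$ are all accurate, so the proof goes through as written.
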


\begin{lema}\label{lemma3.11}
If $u \in V_3, v \in V_4$ $\mathrm{or}$ $u \in V_4, v \in V_3$, then $\mathrm{Cay}(V_{8n}, S)$ cannot exhibit PST between the vertices $u$ and $v$.
\end{lema}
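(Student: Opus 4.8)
The plan is to imitate the proof of Lemma~\ref{lemma3.8} line by line, reading off the entry of $H(\tau)$ that sits in the block between $V_3$ and $V_4$ instead of the block between $V_1$ and $V_2$. Since $H(\tau) = \exp(-\mathbf{i}\tau A)$ is a symmetric matrix (because $A$ is real and symmetric), $|H(\tau)_{uv}| = |H(\tau)_{vu}|$, so it suffices to treat the case $u \in V_3$, $v \in V_4$; the case $u \in V_4$, $v \in V_3$ then follows at once. First I would write out $H(\tau)_{uv}$ directly from~(\ref{27}). The decisive structural point is that every two-dimensional eigenprojector, namely $E_j^{(1)},E_j^{(2)},E_j^{(3)},E_j^{(4)}$ and $F_k^{(1)},F_k^{(2)},F_k^{(3)},F_k^{(4)}$, has all of its nonzero $2n\times 2n$ blocks confined to positions indexed within $V_1\cup V_3$ or within $V_2\cup V_4$; in particular each of them has the zero matrix in the $(V_3,V_4)$ block. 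Consequently both sums $\sum_{j}\exp(-\mathbf{i}\tau\beta_j)(\cdots)$ and $\sum_{k}\exp(-\mathbf{i}\tau\gamma_k)(\cdots)$ contribute nothing to $H(\tau)_{uv}$, and only the eight one-dimensional projectors $E_1,\ldots,E_8$ survive.

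Next I would extract the $(u,v)$-entry of each surviving projector on this block. Reading off the displayed forms: the $(V_3,V_4)$ blocks of $E_1,E_5$ are $\tfrac{1}{8n}J_{2n}$ and $-\tfrac{1}{8n}J_{2n}$, those of $E_3,E_7$ give $\tfrac{1}{8n}(-1)^{u+v+1}$ and $\tfrac{1}{8n}(-1)^{u+v}$ (here $e_3(u,v)=(-1)^{u+v+1}$ because $u\in V_1\cup V_3$ and $v\in V_2\cup V_4$), and those of $E_2,E_4,E_6,E_8$ are respectively $\tfrac{1}{8n}\mathbf{i}X,\ -\tfrac{1}{8n}\mathbf{i}Y,\ -\tfrac{1}{8n}\mathbf{i}X,\ \tfrac{1}{8n}\mathbf{i}Y$ with $X=[\mathbf{i}^{u-v}]$ and $Y=[(-\mathbf{i})^{u-v}]$. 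Thus each of the eight entries is a complex number of modulus exactly $\tfrac{1}{8n}$, and $H(\tau)_{uv}$ is the sum of these eight terms, each further weighted by a unimodular factor $\exp(-\mathbf{i}\tau\alpha_s)$. The triangle inequality then gives
\[
\lvert H(\tau)_{uv}\rvert \;\leq\; \frac{1}{8n}\times 8 \;=\; \frac{1}{n} \;<\; 1,
\]
where the strict inequality uses that $n$ is even, hence $n\geq 2$. This shows that $\mathrm{Cay}(V_{8n},S)$ cannot exhibit PST between $u$ and $v$.

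I do not expect any genuine obstacle here: the proof is purely computational and the only step needing care is verifying that all eight $(V_3,V_4)$-block entries have modulus $\tfrac{1}{8n}$. This reduces to checking that the relevant off-diagonal blocks of $E_2,E_4,E_6,E_8$ are $\pm\mathbf{i}X$ or $\pm\mathbf{i}Y$ (hence unimodular after the $\tfrac{1}{8n}$ scaling), that $e_3(u,v)=(-1)^{u+v+1}$, and that the $E_1,E_5,E_7$ entries are $\pm 1$. Since only the modulus bound is required, the precise powers of $\mathbf{i}$ and the signs are immaterial, and the estimate $\tfrac{1}{n}<1$ closes the argument exactly as in Lemma~\ref{lemma3.8}.
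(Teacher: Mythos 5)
Your proposal is correct and is essentially the paper's own argument: the authors state that Lemma~\ref{lemma3.11} follows by the same technique as Lemma~\ref{lemma3.8}, namely observing that all two-dimensional eigenprojectors vanish on the $(V_3,V_4)$ block so only the eight one-dimensional projectors contribute, giving $\lvert H(\tau)_{uv}\rvert \leq \frac{8}{8n} = \frac{1}{n} < 1$. Your block-by-block verification of the surviving entries and the remark that $n\geq 2$ (since $n$ is even) justifies the strict inequality are exactly the details the paper omits.
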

From the preceding three lemmas, we observe that if $\mathrm{Cay}(V_{8n}, S)$ exhibits PST between the distinct vertices $u$ and $v$, then it is necessary that $u,v \in V_1$ or $u,v \in V_2$ or $u,v \in V_3$ or $u,v \in V_4$ or $u \in V_1, v \in V_3$ or $u \in V_3, v \in V_1$ or $u \in V_2, v \in V_4$ or $u \in V_4, v \in V_2$. If $u$ and $v$ satisfy one of these conditions, then we are going to determine a few more necessary and sufficient conditions for the existence of PST   
in $\mathrm{Cay}(V_{8n}, S)$ between $u$ and $v$.

Our next discussion is distinguished into two cases according as $n \equiv 0 ~(\mathrm{mod}~ 4)$ or $n \equiv 2 ~(\mathrm{mod}~ 4)$. Let us first consider the case that $n \equiv 0 ~(\mathrm{mod}~ 4)$. We have the following lemma.
\begin{lema}\label{lemma3.12}
  Let $u,v \in V_1$ or $u,v \in V_2$ or $u,v \in V_3$ or $u,v \in V_4$ and $n \equiv 0 ~(\mathrm{mod}~ 4)$. Then $\mathrm{Cay}(V_{8n}, S)$ exhibits PST between the distinct vertices $u$ and $v$  if and only if  the following three conditions hold.
\begin{enumerate}
\item[(i)] $u - v =\pm n$.
\item[(ii)] All the eigenvalues of $\mathrm{Cay}(V_{8n}, S)$ are integers.
\item[(iii)] $\nu_2(\alpha_1 - \beta_{2j^\prime - 1}) = \nu_2(\alpha_1 - \gamma_{2k^\prime - 1})$, and $\nu_2(\alpha_1 - \alpha_2)$,
$\nu_2(\alpha_1 - \alpha_3)$,
$\nu_2(\alpha_1 - \alpha_4)$,
$\nu_2(\alpha_1 - \alpha_5)$,
$\nu_2(\alpha_1 - \alpha_6)$,
$\nu_2(\alpha_1 - \alpha_7)$,
$\nu_2(\alpha_1 - \alpha_8)$,
$\nu_2(\alpha_1 - \beta_{2j^\prime})$,
$\nu_2(\alpha_1 - \gamma_{2k^\prime})$ 
are strictly greater than $\nu_2(\alpha_1 - \beta_1)$ for 
$1 \leq j^\prime \leq \frac{n}{2}$  and $1 \leq k^\prime \leq \frac{n}{2}$. 
\end{enumerate}
Furthermore, when the conditions $(i), (ii)$ and $(iii)$ hold, the minimum time at which $\mathrm{Cay}(V_{8n}, S)$ exhibits PST is $\frac{\pi}{M}$, where $M = \mathrm{gcd}(\alpha_1 - \alpha \colon \alpha \in \mathrm{Spec}(\mathrm{Cay}(V_{8n}, S))\setminus \{\alpha_1\})$.
\end{lema}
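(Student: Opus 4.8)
The plan is to mirror the structure already established for the odd case in Lemmas~\ref{lemma3.5}, \ref{lemma 3.6}, and \ref{lemma3.7}, but now using the even-$n$ eigenprojectors $E_1,\ldots,E_8$, $E_j^{(i)}$, and $F_k^{(i)}$ together with formula~(\ref{27}). First I would fix $u,v$ lying in a common block (say $u,v\in V_1$; the other three blocks are symmetric) and extract $H(\tau)_{uv}$ from~(\ref{27}). The crucial feature of the even case is that the one-dimensional contributions split into real characters ($E_1,E_3,E_5,E_7$) and the genuinely complex characters ($E_2,E_4,E_6,E_8$) whose entries involve $\textbf{i}^{u-v}$ and $(-\textbf{i})^{u-v}$; likewise the two-dimensional eigenprojectors contribute circulant blocks whose $(u,v)$-entries carry factors $\omega^{\pm(u-v)j}$ and $\textbf{i}^{\mp(u-v)}\omega^{\pm(u-v)k}$. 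Summing these with care, one obtains $|H(\tau)_{uv}|\le 1$, and equality forces a system of phase equations, one per eigenvalue, exactly as in the displayed systems of Lemma~\ref{lemma3.5}.

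Next I would write $\tau=2\pi T$ and translate the phase-equality conditions into membership statements in $\mathbb{Z}$, as in~(\ref{9})--(\ref{15}). The parity/divisibility analysis of the phase factors $\textbf{i}^{u-v}$ and $\omega^{(u-v)k}$ should pin down the admissible separations: I expect the complex one-dimensional characters $\theta_2,\theta_4,\theta_6,\theta_8$ to force $4\mid(u-v)$ while the two-dimensional $\gamma_k$ equations force $n\mid(u-v)$, and combining these with $u,v$ in the same block of size $2n$ yields $u-v=\pm n$ (using $4\mid n$, so $n$ is a multiple of $4$ and $u-v=\pm n$ is consistent with $4\mid(u-v)$). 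This is condition~(i). The integrality of all eigenvalues, condition~(ii), follows from the trace argument: summing the $T$-multiples over all eigenvalues and using $\mathrm{Tr}(A)=0$ forces $8n\alpha_1 T\in\mathbb{Q}$, hence $T\in\mathbb{Q}$, hence every eigenvalue is a rational, therefore integral, algebraic integer.

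The $2$-adic condition~(iii) comes from separating the eigenvalues into two classes according to whether their phase equation demands $(\alpha_1-\lambda)T\in\mathbb{Z}$ or $(\alpha_1-\lambda)T-\tfrac12\in\mathbb{Z}$. I anticipate that the eigenvalues $\beta_{2j'-1}$ and $\gamma_{2k'-1}$ (odd index) fall into the half-integer class while $\alpha_2,\ldots,\alpha_8$ together with the even-indexed $\beta_{2j'},\gamma_{2k'}$ fall into the integer class; applying $\nu_2$ and using the properties (i)--(ii) of the valuation then yields $\nu_2(\alpha_1-\beta_{2j'-1})=\nu_2(\alpha_1-\gamma_{2k'-1})=\mu$ with all the listed integer-class differences having $\nu_2>\mu$, which is precisely condition~(iii). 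For the converse direction I would reverse this, exactly as in Lemma~\ref{lemma3.7}: set $T=\tfrac{1}{2M}(1+2\ell)$, write each relevant difference as a multiple of $M_1$ or $M_2$ with the established parities of $s_1,s_2,m_j$, and verify~(\ref{22})--(\ref{26}) hold, so $|H(\tau)_{uv}|=1$. Finally the minimum-time claim $\tfrac{\pi}{M}$ is obtained verbatim from the lattice-intersection computation of Lemma~\ref{lemma 3.6}, with $M_1$ and $M_2$ redefined to group the integer-class and half-integer-class eigenvalue differences respectively.

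The main obstacle I expect is the bookkeeping in the first step: because the even-$n$ character table carries genuine fourth-roots of unity, the $(u,v)$-entries of $E_2,E_4,E_6,E_8$ and of the $F_k^{(i)}$ blocks involve powers of $\textbf{i}$ intertwined with $\omega$, so establishing $|H(\tau)_{uv}|\le1$ and correctly identifying which phase factor attaches to each eigenvalue requires a careful case analysis of $u-v\pmod 4$ and of the block membership of $u,v$. Getting the index split (odd versus even $j,k$) to emerge cleanly from these phase factors, rather than by ad hoc inspection, is where the argument must be handled most delicately; once the correct grouping is in hand, the valuation and lattice arguments are routine transcriptions of the odd-$n$ proofs.
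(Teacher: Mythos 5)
Your proposal is correct and follows essentially the same route as the paper's proof: extract $H(\tau)_{uv}$ from~(\ref{27}), force the phase equations, obtain $u-v=\pm n$ from the paired two-dimensional equations, prove integrality via the trace argument, split the eigenvalue differences into the integer class ($\alpha_2,\ldots,\alpha_8$, even-indexed $\beta_j$, $\gamma_k$) and the half-integer class (odd-indexed $\beta_j$, $\gamma_k$) to get condition~(iii), and transcribe Lemma~\ref{lemma3.7} and Lemma~\ref{lemma 3.6} for sufficiency and the minimum time. One harmless caveat: your side-claim that the complex one-dimensional characters force $4\mid(u-v)$ is neither justified by the phase equations nor needed, since $n\mid(u-v)$ from the $k=1$ (or $j=1$) equations together with same-block membership already yields $u-v=\pm n$, after which $4\mid(u-v)$ is automatic from $4\mid n$.
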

\begin{proof}
If $u,v \in V_1$ or $u,v \in V_2$ or $u,v \in V_3$ or $u,v \in V_4$, then 
\begin{align*}
H(\tau )_{uv} = &\frac{1}{8n} [\exp(-\textbf{i}\tau {\alpha}_1) + {\textbf{i}}^{u-v}\exp(-\textbf{i}\tau {\alpha}_2) + (-1)^{u+v}\exp(-\textbf{i}\tau {\alpha}_3) + (-\textbf{i})^{u-v}\exp(-\textbf{i}\tau {\alpha}_4)\\
&+\exp(-\textbf{i}\tau {\alpha}_5) + {\textbf{i}}^{u-v}\exp(-\textbf{i}\tau {\alpha}_6) + (-1)^{u+v}\exp(-\textbf{i}\tau {\alpha}_7) + (-\textbf{i})^{u-v}\exp(-\textbf{i}\tau {\alpha}_8)]\\
&+ \frac{1}{4n}\sum_{j=1}^{n-1}\exp(-\textbf{i}\tau {\beta}_j) [{\omega}^{(u-v)j} + {\omega}^{-(u-v)j}] 
+ \frac{1}{4n}\sum_{k=1}^{n-1}\exp(-\textbf{i}\tau {\gamma}_k) [{\textbf{i}}^{u-v}{\omega}^{(u-v)k} + {\textbf{i}}^{u-v}{\omega}^{-(u-v)k}].
\end{align*}
Thus we have $\mid{H(\tau )_{uv}} \mid  \leq 1$. Therefore $\mid H(\tau )_{uv} \mid = 1$ if and only if for $1\leq j \leq n-1$ and $1 \leq k \leq n-1$, it holds that
\begin{align*}
\exp({-\textbf{i}\tau {\alpha}_1}) &= {\textbf{i}}^{u-v}\exp({-\textbf{i}\tau {\alpha}_2}),\\
\exp({-\textbf{i}\tau {\alpha}_1})
&= (-1)^{u+v}\exp({-\textbf{i}\tau {\alpha}_3}),\\
\exp({-\textbf{i}\tau {\alpha}_1})
&= (-\textbf{i})^{u-v}\exp({-\textbf{i}\tau {\alpha}_4}),\\
\exp({-\textbf{i}\tau {\alpha}_1}) &= \exp({-\textbf{i}\tau {\alpha}_5}),\\
\exp({-\textbf{i}\tau {\alpha}_1})
&= \textbf{i}^{u-v}\exp({-\textbf{i}\tau {\alpha}_6}),\\
\exp({-\textbf{i}\tau {\alpha}_1})
&= (-1)^{u+v}\exp({-\textbf{i}\tau {\alpha}_7}),\\
\exp({-\textbf{i}\tau {\alpha}_1})
&= (-\textbf{i})^{u-v}\exp({-\textbf{i}\tau {\alpha}_8}),\\
\exp({-\textbf{i}\tau {\alpha}_1})
&= {\omega}^{(u-v)j}\exp({-\textbf{i}\tau {\beta}_j}),\\
\exp({-\textbf{i}\tau {\alpha}_1})
&= {\omega}^{-(u-v)j}
\exp({-\textbf{i}\tau {\beta}_j}),\\
\exp({-\textbf{i}\tau {\alpha}_1})
&= {\textbf{i}}^{u-v}{\omega}^{(u-v)k}\exp({-\textbf{i}\tau {\gamma}_k})~\mathrm{and}~\\
\exp({-\textbf{i}\tau {\alpha}_1})
&= {\textbf{i}}^{u-v}{\omega}^{-(u-v)k}\exp({-\textbf{i}\tau {\gamma}_k}). 
\end{align*}
From the last two equations, we get $u - v = \pm n$. Let $\tau  = 2 \pi T$. Then the preceding equations imply
\begin{align}
&({\alpha}_1-{\alpha}_2)T \in \mathbb{Z},\label{28}\\
&({\alpha}_1-{\alpha}_3)T \in \mathbb{Z},\label{29}\\
&({\alpha}_1-{\alpha}_4)T \in \mathbb{Z},\label{30}\\
&({\alpha}_1-{\alpha}_5)T \in \mathbb{Z},\label{31}\\
&({\alpha}_1-{\alpha}_6)T \in \mathbb{Z},\label{32}\\
&({\alpha}_1-{\alpha}_7)T \in \mathbb{Z},\label{33}\\
&({\alpha}_1-{\alpha}_8)T \in \mathbb{Z},\label{34}\\
&({\alpha}_1-{\beta}_j)T - \frac{j}{2} \in \mathbb{Z}~\mathrm{and}~\label{35}\\
&({\alpha}_1-{\gamma}_k)T - \frac{k}{2} \in\mathbb{Z}\label{36}.     
\end{align}
Using similar arguments as in Lemma~\ref{lemma3.1}, it can be proved that the graph is integral.  From the conditions (\ref{28}) to (\ref{36}), proceeding as in Lemma~\ref{lemma3.5}, we have
$\nu_2(\alpha_1 - \beta_{2j^\prime - 1})=\nu_2(\alpha_1 - \gamma_{2k^\prime - 1})$, and $\nu_2(\alpha_1 - \alpha_2)$,
$\nu_2(\alpha_1 - \alpha_3)$,
$\nu_2(\alpha_1 - \alpha_4)$,
$\nu_2(\alpha_1 - \alpha_5)$,
$\nu_2(\alpha_1 - \alpha_6)$,
$\nu_2(\alpha_1 - \alpha_7)$,
$\nu_2(\alpha_1 - \alpha_8)$,
$\nu_2(\alpha_1 - \beta_{2j^\prime})$,
$\nu_2(\alpha_1 - \gamma_{2k^\prime})$ are strictly greater than $\nu_2(\alpha_1 - \beta_1)$ for 
$1 \leq j^\prime \leq \frac{n}{2}$  and $1 \leq k^\prime \leq \frac{n}{2}$. 

The converse part of the proof is similar to that of Lemma~\ref{lemma3.7}. Further, the process of determination of the minimum time at which $\mathrm{Cay}(V_{8n}, S)$ exhibits PST is also similar to the proof of Lemma~\ref{lemma 3.6}. Therefore the details are omitted. 
\end{proof}

The next lemma considers the case  $n \equiv 2 ~(\mathrm{mod}~ 4)$ and $u,v \in V_1$ or $u,v \in V_2$ or $u,v \in V_3$ or $u,v \in V_4$.  The proof is similar to that of Lemma~\ref{lemma3.12}.

\begin{lema}\label{lemma3.13}
Let $u,v \in V_1$ or $u,v \in V_2$ or $u,v \in V_3$ or $u,v \in V_4$ and $n \equiv 2 ~(\mathrm{mod}~ 4)$. Then $\mathrm{Cay}(V_{8n}, S)$ exhibits PST between the distinct vertices $u$ and $v$ if and only if the following three conditions hold.
\begin{enumerate}
\item[(i)] $u - v =\pm  n$.
\item[(ii)] All the eigenvalues of $\mathrm{Cay}(V_{8n}, S)$ are integers.
\item[(iii)] $\nu_2(\alpha_1 - \alpha_2) = \nu_2(\alpha_1 - \alpha_4) = \nu_2(\alpha_1 - \alpha_6) = \nu_2(\alpha_1 - \alpha_8) = \nu_2(\alpha_1 - \beta_{2j^\prime - 1}) = \nu_2(\alpha_1 - \gamma_{2k^\prime})$, and $\nu_2(\alpha_1 - \alpha_3)$,
$\nu_2(\alpha_1 - \alpha_5)$,
$\nu_2(\alpha_1 - \alpha_7)$,
$\nu_2(\alpha_1 - \beta_{2j^\prime})$,  
$\nu_2(\alpha_1 - \gamma_{2k^\prime - 1})$ are 
strictly greater  than $\nu_2(\alpha_1 - \alpha_2)$ for 
$1 \leq j^\prime \leq \frac{n}{2}$  and $1 \leq k^\prime \leq \frac{n}{2}$. 
\end{enumerate}
Furthermore, when the conditions $(i), (ii)$ and $(iii)$ hold, the minimum time at which $\mathrm{Cay}(V_{8n}, S)$ exhibits PST is $\frac{\pi}{M}$, where $M = \mathrm{gcd}(\alpha_1 - \alpha \colon \alpha \in \mathrm{Spec}(\mathrm{Cay}(V_{8n}, S))\setminus \{\alpha_1\})$.
\end{lema}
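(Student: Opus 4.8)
The plan is to follow the proof of Lemma~\ref{lemma3.12} almost verbatim, the only genuine change being the evaluation of the fourth-root-of-unity phases under the hypothesis $n \equiv 2 \pmod 4$. First I would use~(\ref{27}) together with the eigenprojectors $E_s$, $E_j^{(i)}$ and $F_k^{(i)}$ to compute $H(\tau)_{uv}$ for $u,v$ in a common block. Since the eigenvectors of Subsection~\ref{subsection 2.2} are the same for every even $n$, the resulting expression is identical to the one in Lemma~\ref{lemma3.12}: the one-dimensional parts contribute the phases $\textbf{i}^{u-v}$, $(-\textbf{i})^{u-v}$, $(-1)^{u+v}$ attached to $\alpha_2,\ldots,\alpha_8$, the $\beta_j$-terms contribute $\omega^{\pm(u-v)j}$, and the $\gamma_k$-terms contribute $\textbf{i}^{u-v}\omega^{\pm(u-v)k}$. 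A triangle-inequality bound gives $|H(\tau)_{uv}| \le 1$, with equality precisely when the full list of exponential identities holds simultaneously. Reading off the two $\gamma_k$-equations forces $\omega^{2(u-v)k}=1$ for every $k$, hence $n \mid (u-v)$; as $u$ and $v$ lie in the same block, this is condition~(i), namely $u-v=\pm n$.

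The heart of the matter, and the one place where this case diverges from Lemma~\ref{lemma3.12}, is substituting $u-v=\pm n$ into the phases. Because $n \equiv 2 \pmod 4$ we obtain $\textbf{i}^{\pm n}=(-\textbf{i})^{\pm n}=-1$, while $(-1)^{u+v}=1$ (as $u-v=\pm n$ with $n$ even forces $u+v$ even) and $\omega^{\pm n j}=(-1)^{j}$. Writing $\tau=2\pi T$ and rewriting each exponential identity as a congruence for $T$, the factor $\textbf{i}^{u-v}=-1$ converts the equations for $\alpha_2,\alpha_4,\alpha_6,\alpha_8$ into half-integer conditions $(\alpha_1-\alpha_s)T-\tfrac{1}{2}\in\mathbb{Z}$, whereas those for $\alpha_3,\alpha_5,\alpha_7$ stay integer conditions; the $\beta_j$-equation becomes $(\alpha_1-\beta_j)T-\tfrac{j}{2}\in\mathbb{Z}$, while the combined phase $\textbf{i}^{u-v}\omega^{(u-v)k}=(-1)^{k+1}$ yields $(\alpha_1-\gamma_k)T-\tfrac{k+1}{2}\in\mathbb{Z}$. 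Splitting by the parity of $j$ and $k$ then places $\beta_{2j'-1}$ and $\gamma_{2k'}$ (together with $\alpha_2,\alpha_4,\alpha_6,\alpha_8$) into the half-integer group and $\beta_{2j'}$, $\gamma_{2k'-1}$ (with $\alpha_3,\alpha_5,\alpha_7$) into the integer group. This reassignment---opposite to the one in Lemma~\ref{lemma3.12}, where $\textbf{i}^{\pm n}=+1$ sends all eight $\alpha_s$ into the integer group---is exactly the bookkeeping that requires care, and it is where I expect the only real subtlety to lie.

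The remainder repeats earlier arguments. Summing the congruences weighted by the eigenvalue multiplicities and using $\mathrm{Tr}(A)=\alpha_1+\cdots+\alpha_8+4\sum_j\beta_j+4\sum_k\gamma_k=0$ together with $\alpha_1=|S|\in\mathbb{Z}_{>0}$ forces $T\in\mathbb{Q}$, so every eigenvalue is a rational algebraic integer and hence an integer, giving condition~(ii). With integrality established, a half-integer congruence is equivalent to $\nu_2(\alpha_1-\lambda)=-1-\nu_2(T)=:\mu$ and an integer congruence to $\nu_2(\alpha_1-\lambda)>\mu$, which is precisely condition~(iii). For the converse and the minimum-time claim I would set $M_1=\gcd(\alpha_1-\alpha_3,\alpha_1-\alpha_5,\alpha_1-\alpha_7,\alpha_1-\beta_{2j'},\alpha_1-\gamma_{2k'-1})$ and $M_2=\gcd(\alpha_1-\alpha_2,\alpha_1-\alpha_4,\alpha_1-\alpha_6,\alpha_1-\alpha_8,\alpha_1-\beta_{2j'-1},\alpha_1-\gamma_{2k'})$, take $T=\tfrac{1}{2M}(1+2\ell)$ with $M=\gcd(M_1,M_2)$, and verify the integer and half-integer congruences exactly as in Lemma~\ref{lemma3.7}; the gcd-intersection computation of Lemma~\ref{lemma 3.6} then identifies the minimum PST time as $\tfrac{\pi}{M}$.
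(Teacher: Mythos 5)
Your proposal is correct and follows exactly the route the paper intends: the paper omits the proof of Lemma~\ref{lemma3.13}, stating only that it is similar to Lemma~\ref{lemma3.12}, and your argument is precisely that proof with the one genuine modification carried out correctly --- namely that $\textbf{i}^{\pm n}=(-\textbf{i})^{\pm n}=-1$ when $n \equiv 2 \ (\mathrm{mod}\ 4)$, which moves $\alpha_2,\alpha_4,\alpha_6,\alpha_8$ into the half-integer group and flips the parity roles of the $\gamma_k$'s, yielding condition (iii) as stated. The integrality argument, the $M_1,M_2$ construction for sufficiency, and the minimum-time computation all match the paper's Lemmas~\ref{lemma3.1}, \ref{lemma3.5}, \ref{lemma 3.6} and \ref{lemma3.7} as adapted in Lemma~\ref{lemma3.12}.
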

 In the following lemma, we consider the cases   when $u \in V_1, v \in V_3$ or $ u \in V_3, v \in V_1$ or $u \in V_2, v \in V_4$ or $u \in V_4, v \in V_2$ for the existence of PST between $u$ and $v$.

\begin{lema}\label{lemma3.14}
Let $u \in V_1, v \in V_3$ or $u \in V_3, v \in V_1$ or $u \in V_2, v \in V_4$ or $u \in V_4, v \in V_2$. Then $\mathrm{Cay}(V_{8n}, S)$ exhibits PST between the  vertices $u$ and $v$ if and only if all the eigenvalues of $\mathrm{Cay}(V_{8n}, S)$ are integers and one of the following three conditions hold.
\begin{enumerate}
\item[1.(i)] $u - v =\pm 4n$.
\item[(ii)] $\nu_2(\alpha_1 - \alpha_2)=\nu_2(\alpha_1 - \alpha_4)=\nu_2(\alpha_1 - \alpha_6)=\nu_2(\alpha_1 - \alpha_8)=\nu_2(\alpha_1 - \gamma_k)$, and 
$\nu_2(\alpha_1 - \alpha_3)$,
$\nu_2(\alpha_1 - \alpha_5)$,
$\nu_2(\alpha_1 - \alpha_7)$, $\nu_2(\alpha_1 - \beta_j)$ are strictly greater than $\nu_2(\alpha_1 - \alpha_2)$ for $1 \leq  j \leq n-1$  
and $1 \leq k \leq n-1$.  
\item[2. (i)] $n \equiv 0 ~(\mathrm{mod}~ 4)$ and $u - v\in\{ \pm 3n, \pm 5n$\}.

\item[(ii)] $\nu_2(\alpha_1 - \alpha_2)=\nu_2(\alpha_1 - \alpha_4)=\nu_2(\alpha_1 - \alpha_6)=\nu_2(\alpha_1 - \alpha_8)=\nu_2(\alpha_1 - \beta_{2j^\prime - 1})=\nu_2(\alpha_1 - \gamma_{2k^\prime})$, and $\nu_2(\alpha_1 - \alpha_3)$,
$\nu_2(\alpha_1 - \alpha_5)$,
$\nu_2(\alpha_1 - \alpha_7)$,
$\nu_2(\alpha_1 - \beta_{2j^\prime})$, 
$\nu_2(\alpha_1 - \gamma_{2k^\prime - 1})$ are 
strictly greater  than $\nu_2(\alpha_1 - \alpha_2)$ for 
$1 \leq j^\prime \leq \frac{n}{2}$  and $1 \leq k^\prime \leq \frac{n}{2}$. 
\item[3. (i)] $n \equiv 2 ~(\mathrm{mod}~ 4)$ and $u - v\in\{ \pm 3n, \pm 5n$\}.

\item[(ii)] $\nu_2(\alpha_1 - \beta_{2j^\prime - 1})=\nu_2(\alpha_1 - \gamma_{2k^\prime - 1})$, and 
$\nu_2(\alpha_1 - \alpha_2)$,
$\nu_2(\alpha_1 - \alpha_3)$,
$\nu_2(\alpha_1 - \alpha_4)$,
$\nu_2(\alpha_1 - \alpha_5)$,
$\nu_2(\alpha_1 - \alpha_6)$,
$\nu_2(\alpha_1 - \alpha_7)$,
$\nu_2(\alpha_1 - \alpha_8)$,
$\nu_2(\alpha_1 - \beta_{2j^\prime})$, 
$\nu_2(\alpha_1 - \gamma_{2k^\prime})$ are 
strictly 
greater than $\nu_2(\alpha_1 - \beta_1)$ for 
$1 \leq j^\prime \leq \frac{n}{2}$  and $1 \leq k^\prime \leq \frac{n}{2}$. 
\end{enumerate}
Furthermore, if PST exists in $\mathrm{Cay}(V_{8n}, S)$, then  the minimum time at which it exhibits  PST is $\frac{\pi}{M}$, where $M = \mathrm{gcd}(\alpha_1 - \alpha \colon \alpha \in \mathrm{Spec}(\mathrm{Cay}(V_{8n}, S))\setminus \{\alpha_1\})$.
\end{lema}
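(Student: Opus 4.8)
The plan is to adapt the method of Lemmas~\ref{lemma3.5}--\ref{lemma3.7} and Lemma~\ref{lemma3.12}, now reading off the $V_1$--$V_3$ and $V_2$--$V_4$ blocks of the eigenprojectors listed in Subsection~\ref{subsection 3.2}. Because right multiplication by $b$ is a graph automorphism that cyclically permutes $V_1,V_2,V_3,V_4$ and preserves the difference $u-v$, the $V_1$--$V_3$ and $V_2$--$V_4$ sub-cases coincide, so it suffices to treat $u\in V_1,\ v\in V_3$ (the reversed pairs only change the sign of $u-v$, and PST is symmetric in $u,v$). First I would extract $H(\tau)_{uv}$ from~(\ref{27}): the eight one-dimensional projectors contribute $\frac{1}{8n}$ times the phases $1,\,-\mathbf{i}^{u-v},\,(-1)^{u+v},\,-(-\mathbf{i})^{u-v},\,1,\,-\mathbf{i}^{u-v},\,(-1)^{u+v},\,-(-\mathbf{i})^{u-v}$ coming from $E_1,\dots,E_8$; among the two-dimensional families only $E_j^{(1)},E_j^{(4)}$ and $F_k^{(1)},F_k^{(4)}$ have a nonzero $V_1$--$V_3$ block, giving $\frac{1}{4n}\omega^{\pm(u-v)j}$ and $-\frac{1}{4n}\mathbf{i}^{u-v}\omega^{\pm(u-v)k}$ respectively, where the reductions use $\mathbf{i}^{4n}=\omega^{2n}=1$ together with the fact that $n$ is even. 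Counting $8+2(n-1)+2(n-1)$ unit-modulus terms with these coefficients gives total weight $\frac{8}{8n}+\frac{4(n-1)}{4n}=1$, so $|H(\tau)_{uv}|\le1$ with equality exactly when all terms share a single phase.

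Equating each phase to $\exp(-\mathbf{i}\tau\alpha_1)$ yields the phase system. The two $\gamma_k$-equations force $n\mid(u-v)$, and since $u\in V_1,\ v\in V_3$ confine $u-v$ to $\{-5n,-4n,-3n\}$, combining with the reversed pairs gives $u-v\in\{\pm3n,\pm4n,\pm5n\}$. Writing $\tau=2\pi T$ turns each equation into membership in $\mathbb{Z}$ or $\tfrac12+\mathbb{Z}$, and the trace identity $0=\mathrm{Tr}(A)=\sum_{s=1}^{8}\alpha_s+4\sum_{j=1}^{n-1}\beta_j+4\sum_{k=1}^{n-1}\gamma_k$ together with $\alpha_1\in\mathbb{Z}_{>0}$ forces $T\in\mathbb{Q}$ and hence the integrality of the spectrum, exactly as in Lemma~\ref{lemma3.1}. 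Since $u-v$ is even, $(-1)^{u+v}=1$, so $\alpha_3,\alpha_5,\alpha_7$ always land in the ``integer'' group; the splitting of the rest is governed by $\mathbf{i}^{u-v}$ and $\omega^{(u-v)j}$. For $u-v=\pm4n$ one has $\mathbf{i}^{u-v}=1$ and $\omega^{(u-v)j}=1$, placing $\alpha_2,\alpha_4,\alpha_6,\alpha_8,\gamma_k$ in the $\tfrac12$-offset group and all $\beta_j$ in the integer group (case~1). For $u-v\in\{\pm3n,\pm5n\}$ one has $\omega^{(u-v)j}=(-1)^j$, so $\beta_{2j'-1}$ acquires the offset while $\beta_{2j'}$ does not; here $n\equiv0\pmod{4}$ gives $\mathbf{i}^{u-v}=1$ (case~2, keeping $\alpha_2,\alpha_4,\alpha_6,\alpha_8,\gamma_{2k'}$ in the offset group), whereas $n\equiv2\pmod{4}$ gives $\mathbf{i}^{u-v}=-1$ (case~3, moving $\alpha_2,\dots,\alpha_8$ into the integer group and leaving only $\beta_{2j'-1},\gamma_{2k'-1}$ with the offset). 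Translating ``offset'' as $\nu_2=\mu$ and ``integer'' as $\nu_2>\mu$ via the two properties of $\nu_2$ reproduces the three valuation patterns in the statement.

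For the converse I would argue as in Lemma~\ref{lemma3.7}: set $M=\gcd(M_1,M_2)$ with $M_1$ the gcd of the differences in the integer group and $M_2$ that of the offset group, take $T=\frac{1}{2M}(1+2\ell)$ with $\ell\in\mathbb{N}\cup\{0\}$, and use that the valuation hypotheses make $M_1/M$ even and $M_2/M$ odd, so every required membership holds and $|H(\tau)_{uv}|=1$. The minimum-time claim then follows from the Diophantine computation of Lemma~\ref{lemma 3.6}, intersecting $\frac{2\pi}{M_1}\mathbb{Z}$ with $\frac{2\pi}{M_2}(\tfrac12+\mathbb{Z})$ to obtain $\widetilde T=\frac{\pi}{M}+\frac{2\pi}{M}(\mathbb{N}\cup\{0\})$, whence the first PST time is $\frac{\pi}{M}$.

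The main obstacle is the bookkeeping rather than any new idea: one must compute all ten phase factors correctly for each admissible value of $u-v$ and each residue of $n$ modulo $4$, then sort the eigenvalues reliably into the ``equal-$\nu_2$'' class (phase $-1$) and the ``strictly-larger-$\nu_2$'' class (phase $+1$). The delicate points are the factor $\mathbf{i}^{u-v}$ inside $F_k^{(1)},F_k^{(4)}$, whose reduction relies on $2n\equiv0\pmod 4$ and whose value is precisely what distinguishes cases~2 and~3, and the verification that the $V_2$--$V_4$ sub-case yields the identical pattern so that no additional conditions arise.
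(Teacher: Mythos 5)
Your proposal is correct and follows essentially the same route as the paper's proof: the same spectral-decomposition entry formula for $H(\tau)_{uv}$, the same phase equations forcing $u-v\in\{\pm 3n,\pm 4n,\pm 5n\}$ and integrality via the trace argument of Lemma~\ref{lemma3.1}, the same sorting of eigenvalue differences into a half-offset class ($\nu_2=\mu$) and an integer class ($\nu_2>\mu$) in the three cases (matching the paper's conditions (\ref{48})--(\ref{56}), (\ref{57})--(\ref{65}) and their $n\equiv 2\pmod 4$ analogue), and the same Lemma~\ref{lemma3.7}/Lemma~\ref{lemma 3.6}-style arguments for the converse and the minimum time $\frac{\pi}{M}$. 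The only cosmetic difference is your use of the right-multiplication-by-$b$ automorphism to collapse the $V_2$--$V_4$ sub-case onto $V_1$--$V_3$, whereas the paper simply writes one entry formula valid in all four sub-cases; this changes nothing of substance.
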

\begin{proof}
If $u \in V_1, v \in V_3$ or $u \in V_3, v \in V_1$ or $u \in V_2, v \in V_4$ or $u \in V_4, v \in V_2$, then
\begin{align*}
H(\tau )_{uv} &= \frac{1}{8n} [\exp(-\textbf{i}\tau {\alpha}_1) - {\textbf{i}}^{u-v}\exp(-\textbf{i}\tau {\alpha}_2) + (-1)^{u+v}\exp(-\textbf{i}\tau {\alpha}_3) - (-\textbf{i})^{u-v}\exp(-\textbf{i}\tau {\alpha}_4)\\
&+\exp(-\textbf{i}\tau {\alpha}_5) - {\textbf{i}}^{u-v}\exp(-\textbf{i}\tau {\alpha}_6) + (-1)^{u+v}\exp(-\textbf{i}\tau {\alpha}_7) - (-\textbf{i})^{u-v}\exp(-\textbf{i}\tau {\alpha}_8)]\\
&+ \frac{1}{4n}\sum_{j=1}^{n-1}\exp(-\textbf{i}\tau {\beta}_j) [{\omega}^{(u-v)j} + {\omega}^{-(u-v)j}] 
+ \frac{1}{4n}\sum_{k=1}^{n-1}\exp(-\textbf{i}\tau {\gamma}_k) [-{\textbf{i}}^{u-v}{\omega}^{(u-v)k} - {\textbf{i}}^{u-v}{\omega}^{-(u-v)k}].
\end{align*}
Therefore $\mid{H(\tau )_{uv}} \mid  \leq 1$. Thus $\mid H(\tau )_{uv} \mid = 1$ if and only if for $1\leq j \leq n-1$ and $1 \leq k \leq n-1$, it holds that
\begin{align}
\exp({-\textbf{i}\tau {\alpha}_1}) &= -{\textbf{i}}^{u-v}\exp({-\textbf{i}\tau {\alpha}_2}),\label{37}\\
\exp({-\textbf{i}\tau {\alpha}_1})
&= (-1)^{u+v}\exp({-\textbf{i}\tau {\alpha}_3}),\label{38}\\
\exp({-\textbf{i}\tau {\alpha}_1})
&= -(-\textbf{i})^{u-v}\exp({-\textbf{i}\tau {\alpha}_4}),\label{39}\\
\exp({-\textbf{i}t{\alpha}_1}) &= \exp({-\textbf{i}\tau {\alpha}_5}),\label{40}\\
\exp({-\textbf{i}\tau {\alpha}_1})
&= -\textbf{i}^{u-v}\exp({-\textbf{i}\tau {\alpha}_6}),\label{41}\\
\exp({-\textbf{i}\tau {\alpha}_1})
&= (-1)^{u+v}\exp({-\textbf{i}\tau {\alpha}_7}),\label{42}\\
\exp({-\textbf{i}\tau {\alpha}_1})
&= -(-\textbf{i})^{u-v}\exp({-\textbf{i}\tau {\alpha}_8}),\label{43}\\
\exp({-\textbf{i}\tau {\alpha}_1})
&= {\omega}^{(u-v)j}\exp({-\textbf{i}\tau {\beta}_j}),\label{44}\\
\exp({-\textbf{i}\tau {\alpha}_1})
&= {\omega}^{-(u-v)j}
\exp({-\textbf{i}\tau {\beta}_j}),\label{45}\\
\exp({-\textbf{i}\tau {\alpha}_1})
&= -{\textbf{i}}^{u-v}{\omega}^{(u-v)k}\exp({-\textbf{i}\tau {\gamma}_k})~\mathrm{and}~\label{46}\\
\exp({-\textbf{i}\tau {\alpha}_1})
&= -{\textbf{i}}^{u-v}{\omega}^{-(u-v)k}\exp({-\textbf{i}\tau {\gamma}_k})\label{47}. 
\end{align}
For $k=1$, the last two equations give $u - v \in \{ \pm 3n,  \pm 4n ,  \pm 5n$\}. Let us first consider the case that $u - v = \pm 4n$. Suppose that $\tau  = 2 \pi T$. Then the equations (\ref{37}) to (\ref{47}) reduce to
\begin{align}
&({\alpha}_1-{\alpha}_2)T -\frac{1}{2} \in \mathbb{Z},\label{48}\\
&({\alpha}_1-{\alpha}_3)T \in \mathbb{Z},\label{49}\\
&({\alpha}_1-{\alpha}_4)T -\frac{1}{2}\in \mathbb{Z},\label{50}\\
&({\alpha}_1-{\alpha}_5)T \in \mathbb{Z},\label{51}\\
&({\alpha}_1-{\alpha}_6)T -\frac{1}{2}\in \mathbb{Z},\label{52}\\
&({\alpha}_1-{\alpha}_7)T \in \mathbb{Z},\label{53}\\
&({\alpha}_1-{\alpha}_8)T -\frac{1}{2}\in \mathbb{Z},\label{54}\\
&({\alpha}_1-{\beta}_j)T \in \mathbb{Z}~\mathrm{and}~\label{55}\\
&({\alpha}_1-{\gamma}_k)T - \frac{1}{2} \in\mathbb{Z}\label{56}.     
\end{align}
Using similar argument as in Lemma~\ref{lemma3.1}, we find that the graph is integral. Further, proceeding as in Lemma~\ref{lemma3.5}, we get
$\nu_2(\alpha_1 - \alpha_2)=\nu_2(\alpha_1 - \alpha_4)=\nu_2(\alpha_1 - \alpha_6)=\nu_2(\alpha_1 - \alpha_8)=\nu_2(\alpha_1 - \gamma_k)$, and 
$\nu_2(\alpha_1 - \alpha_3)$,
$\nu_2(\alpha_1 - \alpha_5)$,
$\nu_2(\alpha_1 - \alpha_7)$
and $\nu_2(\alpha_1 - \beta_j)$ are strictly greater than $\nu_2(\alpha_1 - \alpha_2)$ for $1 \leq  j \leq n-1$  
and $1 \leq k \leq n-1$.  

Now we consider the case that $n \equiv 0 ~(\mathrm{mod}~ 4)$ and  $u - v\in\{ \pm 3n, \pm 5n\}$. Let $\tau  = 2\pi T$. Then the equations (\ref{37}) to (\ref{47}) reduce to 
\begin{align}
&({\alpha}_1-{\alpha}_2)T -\frac{1}{2} \in \mathbb{Z},\label{57}\\
&({\alpha}_1-{\alpha}_3)T \in \mathbb{Z},\label{58}\\
&({\alpha}_1-{\alpha}_4)T -\frac{1}{2}\in \mathbb{Z},\label{59}\\
&({\alpha}_1-{\alpha}_5)T \in \mathbb{Z},\label{60}\\
&({\alpha}_1-{\alpha}_6)T -\frac{1}{2}\in \mathbb{Z},\label{61}\\
&({\alpha}_1-{\alpha}_7)T \in \mathbb{Z},\label{62}\\
&({\alpha}_1-{\alpha}_8)T -\frac{1}{2}\in \mathbb{Z},\label{63}\\
&({\alpha}_1-{\beta}_j)T -\frac{j}{2}\in \mathbb{Z}~\mathrm{and}~\label{64}\\
&({\alpha}_1-{\gamma}_k)T - \frac{1}{2} -\frac{k}{2} \in\mathbb{Z}\label{65}.     
\end{align}
 Now as in the previous case, we find from these conditions  that  the graph is integral. Further, \linebreak[4]
$\nu_2(\alpha_1 - \alpha_2)=\nu_2(\alpha_1 - \alpha_4)=\nu_2(\alpha_1 - \alpha_6)=\nu_2(\alpha_1 - \alpha_8)=\nu_2(\alpha_1 - \beta_{2j^\prime - 1})=\nu_2(\alpha_1 - \gamma_{2k^\prime})$, and $\nu_2(\alpha_1 - \alpha_3)$,
$\nu_2(\alpha_1 - \alpha_5)$,
$\nu_2(\alpha_1 - \alpha_7)$,
$\nu_2(\alpha_1 - \beta_{2j^\prime})$, 
$\nu_2(\alpha_1 - \gamma_{2k^\prime - 1})$ are 
strictly greater  than $\nu_2(\alpha_1 - \alpha_2)$ for 
$1 \leq j^\prime \leq \frac{n}{2}$  and $1 \leq k^\prime \leq \frac{n}{2}$.

Similarly, for the case that $n \equiv 2 ~(\mathrm{mod}~ 4)$ and $u - v\in\{ \pm 3n, \pm 5n\}$ also, we find that  the graph is integral. Further,  
$\nu_2(\alpha_1 - \beta_{2j^\prime - 1})=\nu_2(\alpha_1 - \gamma_{2k^\prime - 1})$, and 
$\nu_2(\alpha_1 - \alpha_2)$,
$\nu_2(\alpha_1 - \alpha_3)$,
$\nu_2(\alpha_1 - \alpha_4)$,
$\nu_2(\alpha_1 - \alpha_5)$,
$\nu_2(\alpha_1 - \alpha_6)$,
$\nu_2(\alpha_1 - \alpha_7)$,
$\nu_2(\alpha_1 - \alpha_8)$,
$\nu_2(\alpha_1 - \beta_{2j^\prime})$,
$\nu_2(\alpha_1 - \gamma_{2k^\prime})$ are 
strictly
greater than $\nu_2(\alpha_1 - \beta_1)$ for 
$1 \leq j^\prime \leq \frac{n}{2}$  and $1 \leq k^\prime \leq \frac{n}{2}$. 

The rest of the proof is similar to that of Lemma~\ref{lemma 3.6} and Lemma~\ref{lemma3.7}. Hence the details are omitted.
\end{proof}

For convenience, we say that $\mathrm{Cay}(V_{8n}, S)$ is of \textbf{Type 1} if its eigenvalues are integers and satisfy the conditions that $\nu_2(\alpha_1 - \beta_{2j^\prime - 1})=\nu_2(\alpha_1 - \gamma_{2k^\prime - 1})$, and $\nu_2(\alpha_1 - \alpha_2)$,
$\nu_2(\alpha_1 - \alpha_3)$,
$\nu_2(\alpha_1 - \alpha_4)$,
$\nu_2(\alpha_1 - \alpha_5)$,
$\nu_2(\alpha_1 - \alpha_6)$,
$\nu_2(\alpha_1 - \alpha_7)$,
$\nu_2(\alpha_1 - \alpha_8)$,
$\nu_2(\alpha_1 - \beta_{2j^\prime})$,
$\nu_2(\alpha_1 - \gamma_{2k^\prime})$ 
are strictly greater than $\nu_2(\alpha_1 - \beta_1)$ for 
$1 \leq j^\prime \leq \frac{n}{2}$  and $1 \leq k^\prime \leq \frac{n}{2}$. 

Similarly, we say that $\mathrm{Cay}(V_{8n}, S)$ is of \textbf{Type 2} if its eigenvalues are integers and satisfy the conditions that $\nu_2(\alpha_1 - \alpha_2)=\nu_2(\alpha_1 - \alpha_4)=\nu_2(\alpha_1 - \alpha_6)=\nu_2(\alpha_1 - \alpha_8)=\nu_2(\alpha_1 - \beta_{2j^\prime - 1})=\nu_2(\alpha_1 - \gamma_{2k^\prime})$, and $\nu_2(\alpha_1 - \alpha_3)$,
$\nu_2(\alpha_1 - \alpha_5)$,
$\nu_2(\alpha_1 - \alpha_7)$,
$\nu_2(\alpha_1 - \beta_{2j^\prime})$,
$\nu_2(\alpha_1 - \gamma_{2k^\prime - 1})$ are 
strictly greater  than $\nu_2(\alpha_1 - \alpha_2)$ for 
$1 \leq j^\prime \leq \frac{n}{2}$  and $1 \leq k^\prime \leq \frac{n}{2}$.

Finally, we say that $\mathrm{Cay}(V_{8n}, S)$ is of \textbf{Type 3} if its eigenvalues are integers and satisfy the conditions that $\nu_2(\alpha_1 - \alpha_2)=\nu_2(\alpha_1 - \alpha_4)=\nu_2(\alpha_1 - \alpha_6)=\nu_2(\alpha_1 - \alpha_8)=\nu_2(\alpha_1 - \gamma_k)$, and 
$\nu_2(\alpha_1 - \alpha_3)$,
$\nu_2(\alpha_1 - \alpha_5)$,
$\nu_2(\alpha_1 - \alpha_7)$,  $\nu_2(\alpha_1 - \beta_j)$ are strictly greater than $\nu_2(\alpha_1 - \alpha_2)$ for $1 \leq  j \leq n-1$  
and $1 \leq k \leq n-1$.

Now, combining Lemmas~\ref{lemma3.12},~\ref{lemma3.13} and ~\ref{lemma3.14}, we present the  main result of this subsection.

\begin{theorem}
Let $u$ and $v$ be two distinct vertices of a connected normal Cayley graph  $\mathrm{Cay}(V_{8n}, S)$. Then PST exists in $\mathrm{Cay}(V_{8n}, S)$ between $u$ and $v$ if and only if  one of  the following three conditions hold.
\begin{enumerate}
\item $\mathrm{Cay}(V_{8n}, S)$ is of Type 1, and either 
\begin{enumerate}
\item   $u,v \in V_1$ or  $u,v \in V_2$ or $u,v \in V_3$ or $u,v \in V_4$, and $n \equiv 0 ~(\mathrm{mod}~ 4), u-v =\pm n$; or
\item   $u \in V_1, v \in V_3$ or $u \in V_3, v \in V_1$ or $u \in V_2, v \in V_4$ or $u \in V_4, v \in V_2$, and $n \equiv 2 ~(\mathrm{mod}~ 4)$, $u-v \in\{\pm 3n, \pm 5n\}$.
\end{enumerate}

\item $\mathrm{Cay}(V_{8n}, S)$ is of Type 2, and either 
\begin{enumerate}
\item  $u,v \in V_1$ or  $u,v \in V_2$ or $u,v \in V_3$ or $u,v \in V_4$, and $n \equiv 2 ~(\mathrm{mod}~ 4), u-v =\pm n$; or
\item $u \in V_1, v \in V_3$ or $u \in V_3, v \in V_1$ or $u \in V_2, v \in V_4$ or $u \in V_4, v \in V_2$, and $n \equiv 0 ~(\mathrm{mod}~ 4)$, $u-v \in\{\pm 3n, \pm 5n\}$. 
\end{enumerate}
\item $\mathrm{Cay}(V_{8n}, S)$ is of Type 3 and $u-v=\pm 4n$.
\end{enumerate}
Further, if  $\mathrm{Cay}(V_{8n}, S)$ exhibits PST between $u$ and $v$, then the minimum time at which it exhibits PST is $\frac{\pi}{M}$, where $M = \mathrm{gcd}(\alpha - \alpha_1 \colon \alpha \in \mathrm{Spec}(\mathrm{Cay}(V_{8n}, S))\setminus \{\alpha_1\})$.

\end{theorem}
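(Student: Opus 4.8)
The plan is to assemble the theorem from Lemmas~\ref{lemma3.12}, \ref{lemma3.13} and \ref{lemma3.14}, which together exhaust every admissible placement of the pair $(u,v)$. First I would recall that Lemmas~\ref{lemma3.8}, \ref{lemma3.9}, \ref{lemma3.10} and \ref{lemma3.11} already rule out the mixed cross placements, so that whenever PST occurs the vertices $u$ and $v$ must either lie in a common block $V_i$, or satisfy $u \in V_1, v \in V_3$ (or the reverse), or $u \in V_2, v \in V_4$ (or the reverse). These are precisely the configurations treated by the three lemmas above, so the forward implication reduces to tracing which lemma applies in each configuration, and the backward implication follows because each of those lemmas is itself an equivalence.

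For the same-block configurations I would split on the residue of $n$ modulo $4$. When $n \equiv 0~(\mathrm{mod}~4)$, Lemma~\ref{lemma3.12} forces $u - v = \pm n$ together with the valuation conditions defining Type~1, which is exactly case~1(a). When $n \equiv 2~(\mathrm{mod}~4)$, Lemma~\ref{lemma3.13} forces $u - v = \pm n$ together with the Type~2 conditions, giving case~2(a). Since both lemmas are stated as biconditionals, cases~1(a) and 2(a) are each equivalent to the existence of PST in the corresponding situation.

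For the cross configurations I would invoke Lemma~\ref{lemma3.14}, whose three internal cases match the remaining possibilities. Its first case, $u - v = \pm 4n$ with the Type~3 valuation conditions, yields statement~3; here I would also record that a difference $u - v = \pm 4n$ can only occur in a cross placement, since each block has width $2n$, so that no separate hypothesis on the location of $u,v$ is needed. Its second case, $n \equiv 0~(\mathrm{mod}~4)$ with $u - v \in \{\pm 3n, \pm 5n\}$ and the Type~2 conditions, is case~2(b); its third case, $n \equiv 2~(\mathrm{mod}~4)$ with $u - v \in \{\pm 3n, \pm 5n\}$ and the Type~1 conditions, is case~1(b). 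Again both directions follow because Lemma~\ref{lemma3.14} is an equivalence.

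The substance of the argument is thus organizational rather than computational: the only real obstacle is verifying that the three Types are attached to the correct $(n~\mathrm{mod}~4,\ u-v)$ data and that the enumerated cases are mutually exhaustive and non-overlapping once Lemmas~\ref{lemma3.8}, \ref{lemma3.9}, \ref{lemma3.10} and \ref{lemma3.11} have eliminated the forbidden placements. I would check this pairing explicitly by reading off the valuation conditions from each lemma and confirming they coincide with the definitions of Type~1, Type~2 and Type~3. Finally, the minimum-time assertion $\frac{\pi}{M}$ is common to all three lemmas, being established there by the argument of Lemma~\ref{lemma 3.6}, so it transfers to the combined statement without further work.
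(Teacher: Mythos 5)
Your proposal is correct and takes essentially the same route as the paper: the paper offers no separate proof of this theorem, presenting it as exactly the combination of Lemmas~\ref{lemma3.12}, \ref{lemma3.13} and \ref{lemma3.14} (with Lemmas~\ref{lemma3.8}--\ref{lemma3.11} eliminating the forbidden placements) that you describe. Your pairing of Types 1, 2 and 3 with the residues of $n$ modulo $4$ and the differences $u-v$ matches the lemmas, and the minimum-time assertion transfers exactly as you say.
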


\subsection*{Acknowledgement} The first author acknowledges the
support provided by the Prime Ministers Research Fellowship (PMRF), PMRF-ID:
1903283, Government of India.


\end{document}